\newcites{app}{Appendix References}
\numberwithin{equation}{section}
\newtheorem{thm}{Theorem}[section]
\newtheorem{cor}{Corollary}[thm]
\newtheorem{lemma}[thm]{Lemma}
\newtheorem{prop}{Proposition}[section]
\theoremstyle{definition} 
\newtheorem{rmk}{Remark}[section]
\newtheorem{defn}{Definition}[section]
\newtheorem{example}{Example}[section]
\newtheorem{assumption}{Assumption}[section]
\newcommand{\eps}{\varepsilon}
\newcommand{\abc}[1]{%
  \ifmmode
  \mathsf{#1}%
  \else
  $\mathsf{#1}$%
  \fi
}
\newcommand{\p}{\mathbb{P}}
\newcommand{\potnet}{\textsf{POTNet}\xspace}
\newcommand{\X}{\mathcal{X}}
\newcommand{\R}{\mathbb{R}}
\newcommand{\N}{\mathcal{N}}
\newcommand{\W}{W}
\DeclareMathOperator*{\argmin}{arg\,min}
\newcommand{\xobs}{x^*}
\newcommand{\msw}{\textsf{MSW}\xspace}
\newcommand{\abi}{\textsf{ABI}\xspace}
\newcommand{\cp}{\mathcal{P}}
\newcommand{\B}{\mathcal{B}}
\newcommand{\ind}{\mathds{1}}
\newcommand{\D}{\mathcal{D}}
\newcommand{\SW}{SW}
\newcommand{\qf}{Q_\tau}
\newcommand{\Lip}{\mathrm{Lip}}
\newcommand{\F}{\mathcal{F}}
\newcommand{\smb}{\mathbb{S}}
\newcommand{\E}{\mathbb{E}}
\newcommand{\hqf}{\widehat{Q}_\tau}
\newcommand{\Q}{\mathcal{Q}}
\newcommand{\qnetest}{\widehat{Q}_{K', H}}
\newcommand{\tpos}{\pi^*}
\DeclareSymbolFont{sfletters}{OML}{cmbrm}{m}{it}
\DeclareMathSymbol{\slambda}{\mathord}{sfletters}{"15}
\DeclareMathOperator{\Var}{Var}
\newcommand{\ntrain}{\mathsf{N}_{\mathsf{train}}}
\newcommand{\supp}{\mathrm{supp}}
\newcommand{\bX}{X}
\newcommand{\bx}{x}
\newcommand{\bxobs}{\xobs}
\let\emptyset\varnothing
\title{Likelihood-Free Adaptive Bayesian Inference via \\Nonparametric Distribution Matching}
\author[1]{Wenhui Sophia Lu\thanks{Corresponding author; Electronic address: \texttt{sophialu@stanford.edu}}}
\author[1,2]{Wing Hung Wong}
\affil[1]{Department of Statistics, Stanford University}
\affil[2]{Department of Biomedical Data Science, Stanford University}
\date{May 8, 2025}
\begin{document}
\maketitle

\begin{abstract}
When the likelihood is analytically unavailable and computationally intractable, approximate Bayesian computation (ABC) has emerged as a widely used methodology for approximate posterior inference; however, it suffers from severe computational inefficiency in high-dimensional settings or under diffuse priors.
To overcome these limitations, we propose Adaptive Bayesian Inference (\abi), a framework that bypasses traditional data-space discrepancies and instead compares distributions directly in posterior space through nonparametric distribution matching. 
By leveraging a novel Marginally-augmented Sliced Wasserstein (\msw) distance on posterior measures and exploiting its quantile representation, \abi transforms the challenging problem of measuring divergence between posterior distributions into a tractable sequence of one-dimensional conditional quantile regression tasks. 
Moreover, we introduce a new adaptive rejection sampling scheme that iteratively refines the posterior approximation by updating the proposal distribution via generative density estimation.
Theoretically, we establish parametric convergence rates for the trimmed \msw distance and prove that the \abi posterior converges to the true posterior as the tolerance threshold vanishes.
Through extensive empirical evaluation, we demonstrate that \abi significantly outperforms data-based Wasserstein ABC, summary-based ABC, as well as state-of-the-art likelihood-free simulators, especially in high-dimensional or dependent observation regimes.
\end{abstract}

\noindent
{\bf Keywords:} approximate Bayesian computation; likelihood-free inference; simulator-based inference; conditional quantile regression; nonparametric distribution matching;  adaptive rejection sampling; generative modeling; Wasserstein distance


\section{Introduction}
Bayesian modeling is widely used across natural science and engineering disciplines. It enables researchers to easily construct arbitrarily complex probabilistic models through forward sampling techniques (implicit models) while stabilizing ill-posed problems by incorporating prior knowledge.
Yet, the likelihood function $x \mapsto f_\theta(x)$ may be intractable to evaluate or entirely inaccessible in many scenarios \citep{zeng2019novel, chiachio2021solving}, thus rendering Markov chain-based algorithms---such as Metropolis-Hastings and broader Markov Chain Monte Carlo methods---unsuitable for posterior inference. Approximate Bayesian Computation (ABC) emerges as a compelling approach for scenarios where exact posterior inference for model parameters is infeasible \citep{tavare2018history}. Owing to its minimal modeling assumptions and ease of implementation, ABC has garnered popularity across various Bayesian domains, including likelihood-free inference \citep{markram2015reconstruction, alsing2018massive}, Bayesian inverse problems \citep{chatterjee2021approximate}, and posterior estimation for simulator-based stochastic systems \citep{wood2010statistical}.
ABC generates a set of parameters $\theta \in \Omega \subseteq \R^d$ with high posterior density through a rejection-based process: it simulates fake datasets for different parameter draws and retains only those parameters that yield data sufficiently similar to the observed values.

However, when the data dimensionality is high or the prior distribution is uninformative about the observed data, ABC becomes extremely inefficient and often requires excessive rejections to retain a single sample.
Indeed, Lemmas \ref{lemma:curse-of-dim-gaussian} and \ref{lemma:curse-of-dim-bounded} show that the expected number of simulations needed to retain a single draw grows exponentially in the data dimension.
To enhance computational efficiency, researchers frequently employ low-dimensional summary statistics and conduct rejection sampling instead in the summary statistic space \citep{fearnhead2012constructing}. 
Nevertheless, the Pitman-Koopman-Darmois theorem stipulates that low-dimensional sufficient statistics exist only for the exponential family. 
Consequently, practical problems often require considerable judgment in choosing appropriate summary statistics, typically in a problem-specific manner \citep{wood2010statistical, marin2012approximate}. 
Moreover, the use of potentially non-sufficient summary statistics to evaluate discrepancies can result in ABC approximations that, while useful, may lead to a systematic loss of information relative to the original posterior distribution. For instance, \cite{fearnhead2011constructing} and \cite{jiang2017learning} propose a semi-automatic approach that employs an approximation of the posterior mean as a summary statistic; however, this method ensures only first-order accuracy.

Another critical consideration is selecting an appropriate measure of discrepancy between datasets. A large proportion of the ABC literature is devoted to investigating ABC strategies adopting variants of the $\ell_p$-distance between summaries \citep{prangle2017adapting}, which are susceptible to significant variability in discrepancies across repeated samples from $f_\theta$ \citep{bernton2019approximate}.
Such drawbacks have spurred a shift towards summary-free ABC methods that directly compare the empirical distributions of observed and simulated data via an integral probability metric (IPM), thereby obviating the need to predefine summary statistics \citep{legramanti2022concentration}.
Popular examples include ABC versions that utilize the Kullback-Leibler divergence \citep{jiang2018approximate}, 2-Wasserstein distance \citep{bernton2019approximate}, and Hellinger and Cramer–von Mises distances \citep{frazier2020robust}. 
The accuracy of the resulting approximate posteriors relies crucially on the fixed sample size $n$ of the observed data, 
as the quality of IPM estimation between data-generating processes from a finite, often small, number of samples is affected by the convergence rate of empirically estimated IPMs to their population counterparts.
In particular, a significant drawback of Wasserstein-based ABC methods stems from the slow convergence rate of the Wasserstein distance, which scales as $O(n^{-1/s})$ when the data dimension $s\geq 3$
\citep{talagrand1994transportation}.
As a result, achieving accurate posterior estimates is challenging with limited samples, particularly for high-dimensional datasets. 
A further limitation of sample-based IPM evaluation is the need for additional considerations in the case of dependent data, since ignoring such dependencies might render certain parameters unidentifiable \citep{bernton2019approximate}.

Thus, two fundamental questions ensue from this discourse: What constitutes an informative set of summary statistics, and what serves as an appropriate measure of divergence between datasets? 
To address the aforementioned endeavors, we introduce the \textit{Adaptive Bayesian Inference} (\abi) framework, which directly compares posterior distributions through distribution matching and adaptively refines the estimated posterior via rejection sampling.
At its core, \abi bypasses observation-based comparisons by selecting parameters whose synthetic-data-induced posteriors align closely with the target posterior, a process we term \textit{nonparametric distribution matching}.
To achieve this, \abi learns a discrepancy measure in the posterior space, rather than the observation space, by leveraging the connection between the Wasserstein distance and conditional quantile regression, thereby transforming the task into a tractable supervised learning problem.
Then, \abi simultaneously refines both the posterior estimate and the approximated posterior discrepancy over successive iterations.

Viewed within the summary statistics framework, our proposed method provides a principled approach for computing a model-agnostic, one-dimensional \textit{kernel statistic}. 
Viewed within the discrepancy framework, our method approximates an integral probability metric on the space of \textit{posteriors}, thus circumventing the limitations of data-based IPM evaluations such as small sample sizes and dependencies among observations.

\paragraph{Contributions} 
Our work makes three main contributions.
First, we introduce a novel integral probability metric---the Marginally-augmented Sliced Wasserstein (\msw) distance---defined on the space of \textit{posterior probability measures}. 
We then characterize the \abi approximate posterior as the distribution of parameters obtained by conditioning on those datasets whose induced posteriors fall within the prescribed \msw tolerance of the target posterior.
Whereas conventional approaches rely on integral probability metrics on empirical data distributions, our posterior–based discrepancy remains robust even under small observed sample sizes $n$, intricate sample dependency structures, and parameter non-identifiability.
We further argue that considering the axis-aligned marginals can help improve the projection efficiency of uniform slice-based Wasserstein distances.
Second, we show that the posterior \msw distance can be accurately estimated through conditional quantile regression by exploiting the equivalence between the univariate Wasserstein distance and differences in quantiles.
This novel insight reduces the traditionally challenging task of operating in the posterior space into a supervised distributional regression task, which we solve efficiently using deep neural networks.
The same formulation naturally accommodates multi-dimensional parameters and convenient sequential refinement via rejection sampling.
Third, we propose a sequential version of the rejection--ABC that, to the best of our knowledge, is the first non-Monte-Carlo-based sequential ABC.
Existing sequential refinement methods in the literature frequently rely on adaptive importance sampling techniques, such as sequential Monte Carlo \citep{del2012adaptive, bonassi2015sequential} and population Monte Carlo \citep{beaumont2009adaptive}. 
These approaches, particularly in their basic implementations, are often constrained to the support of the empirical distribution derived from prior samples. 
While advanced variants can theoretically explore beyond this initial support through rejuvenation steps and MCMC moves, they nevertheless require careful selection of transition kernels and auxiliary backward transition kernels \citep{del2012adaptive}.
In contrast, \abi iteratively refines the posterior distribution via rejection sampling by updating the proposal distribution using the \textit{generative} posterior approximation from the previous step---learned through a generative model (not to be confused with the original simulator in the likelihood-free setup).
Generative-model-based approaches for posterior inference harness the expressive power of neural networks to capture intricate probabilistic structures without requiring an explicit distributional specification.
This generative learning stage enables \abi to transcend the constrained support of the empirical parameter distribution and eliminates the need for explicit prior-density evaluation (unlike \cite{papamakarios2016fast}), thereby accommodating cases where the prior distribution itself may be intractable.

We characterize the topological and statistical behavior of the \msw distance, establishing both its parametric convergence rate and its continuity on the space of posterior measures. 
Our proof employs a novel martingale-based argument appealing to Doob's theorem, which offers an alternative technique to existing proofs based on the Lebesgue differentiation theorem \citep{barber2015rate}.
This new technique may be of independent theoretical interest for studying the convergence of other sequential algorithms.
We then prove that, as the tolerance threshold vanishes (with observations held fixed), the \abi posterior converges in distribution to the true posterior. Finally, we derive a finite-sample bound on the bias induced by the approximate rejection-sampling procedure.
Through comprehensive empirical experiments, we demonstrate that \abi achieves highly competitive performance compared to data-based Wasserstein ABC,  and several recent, state-of-the-art likelihood-free posterior simulators.

\paragraph{Notation}
Let the parameter and data $(\theta, X)$ be jointly defined on some probability space.  
The prior probability measure $\pi$ on the parameter space $\Omega \subseteq \R^d$ is assumed absolutely continuous with respect to Lebesgue measure, with density $\pi(\theta)$ for $\theta\in\Omega$.  For simplicity, we use $\pi(\cdot)$ to denote both the density and its corresponding distribution.  
Let the observation space be $\X\subseteq\R^{d_X}$ for some $d_X\in\mathbb{N}_+$, where $\mathbb{N}_+ := \{1,2,\dots\}$.  We observe a data vector 
$\xobs=(\xobs_1,\dots,\xobs_n)^\top\in\X^n\subset\R^{n d_X}$,
whose joint distribution on $\X^n$ is given by the likelihood $P_\theta^{(n)}$.
If the samples are not exchangeable, we simply set $n=1$ with a slight abuse of notation and write $\xobs$ for that single observation. 
We assume $\xobs$ is generated from $P_{\theta^*}^{(n)}$ for some true but unknown $\theta^*\in\Omega$.  
Both the prior density $\pi(\theta)$ and the likelihood $P_\theta^{(n)}(x)$ may be analytically intractable; however, we assume access to  
\begin{itemize}
  \item a \emph{prior simulator} that draws $\theta\sim\pi$, and  
  \item a \emph{data generator} that simulates $X\sim P_\theta^{(n)}$ given any $\theta$.  
\end{itemize}
We do not assume parameter identifiability; that is, we allow for the possibility that distinct parameter values $\theta \neq \theta'$ to yield identical probability distributions, $P_{\theta}^{(n)} = P_{\theta'}^{(n)}$.
Our inferential goal is to generate samples from the posterior $\pi(\theta \mid \xobs) \propto \pi(\theta) P_\theta^{(n)}(\xobs)$, where $\theta \in \Omega$.
For notational convenience, we use $\D(\cdot,\cdot)$ for a generic distance metric, which may act on the data space or probability measures, depending on the context.

For any function class $\mathcal{G}$ and probability measures $\mu$ and $\nu$, we define the Integral Probability Metric (IPM) between $\mu$ and $\nu$ with respect to $\mathcal{G}$ as: $\D_\mathcal{G}(\mu, \nu) = \sup_{g \in \mathcal{G}} \left|\int g d\mu - \int g d\nu\right|$.

Let $\|{\cdot}\|$ denote the $\ell_2$ (Euclidean) distance and let $(\Omega, \|{\cdot}\|)$ be a Polish space.
For $p \in [1, \infty)$, we denote by $\cp_p(\Omega)$ the set of Borel probability measures defined on $\Omega$ with finite $p$-th moment. 
For $\mu, \nu \in \cp_p(\Omega)$, the $p$-Wasserstein distance between $\mu$ and $\nu$ is defined as the solution of the optimal mass transportation problem 
\begin{align}\label{Wpminimization}
    \W_p(\mu, \nu) = \left(\inf_{\gamma \in \Gamma(\mu, \nu)}\int_{\Omega \times \Omega} \|x - y\|^p d\gamma(x, y) \right)^{1/p},
\end{align}
where $\Gamma(\mu, \nu)$ is the set of all couplings $\gamma \in \Gamma(\mu, \nu)$ such that 
\begin{align*}
    \gamma(B_1 \times \Omega) = \mu(B_1), ~\gamma(\Omega \times B_2) = \nu(B_2),  ~\text{for all Borel sets $B_1, B_2\subseteq \Omega$}.
\end{align*}
The $p$-Wasserstein space is defined as $(\cp_p(\Omega), \W_p)$.
For a comprehensive treatment of the Wasserstein distance and its connections to optimal transport, we refer the reader to \citet{villani2009optimal}.

\subsection{Approximate Bayesian Computation}
We begin with a brief review of classic Approximate Bayesian Computation (ABC). Given a threshold $\epsilon > 0$, a distance $\D(\cdot,\cdot)$ on summary statistics $s(\cdot)$, classic ABC produces samples from the approximate posterior,
\begin{align*}
    \pi^\epsilon_{\mathrm{ABC}}(\theta \mid \xobs) \propto \pi(\theta) \int_{\X^n} \ind\big[\D\big(s(x), s(\xobs)\big) \leq \epsilon \big] \, dP^{(n)}_\theta(x), 
\end{align*}
via the following procedure:

\begin{algorithm}[!htp]
\caption{Rejection-ABC Algorithm}
\For{$i = 1, 2, \dots, \mathsf{N}$}{
    Simulate $\theta^{(i)} \sim \pi(\theta)$,  $X^{(i)} = (X_1^{(i)}, X_2^{(i)}, \dots, X_n^{(i)}) \sim   P_{\theta^{(i)}}^{(n)}$\,
    \Accept $\theta^{(i)}$ if $\D\big(s(X^{(i)}), s(\xobs)\big) \leq \epsilon$\,
}
\end{algorithm}
For results on convergence rates and the bias–cost trade-off when using sufficient statistics in ABC, see \citet{barber2015rate}, who establish consistency of ABC posterior expectations via the Lebesgue differentiation theorem.

\subsection{Sliced Wasserstein Distance}
The Sliced Wasserstein (SW) distance, introduced by \cite{rabin2012wasserstein}, provides a computationally efficient approximation to the Wasserstein distance in high dimensions. 
For measures $\mu$ and $\nu$ on $\R^d$, the $p$-Sliced Wasserstein distance integrates the $p$-th power of the one-dimensional Wasserstein distance over all directions on the unit sphere:
\begin{align}\label{def:sw}
    \SW^p_p(\mu,\nu) = \int_{\mathbb{S}^{d-1}} \W_p^p(\varphi_\#\mu, \varphi_\#\nu) \,d\sigma(\varphi),
\end{align}
where $\mathbb{S}^{d-1}$ is the unit sphere in $\R^d$, $\sigma$ is the uniform measure on $\mathbb{S}^{d-1}$, $\varphi_\#$ denotes the projection onto the one-dimensional subspace spanned by $\varphi$. 
By reducing the problem to univariate cases, each of which admits an analytic solution, this approach circumvents the high computational cost of directly evaluating the $d$-dimensional Wasserstein distance while preserving key topological properties of the classical Wasserstein metric, including its ability to metrize weak convergence \citep{bonnotte2013unidimensional}. 

To approximate the integral in \eqref{def:sw}, in practice, one draws $K$ directions i.i.d.\ from the sphere and forms the unbiased Monte Carlo estimator  
\begin{align}
    \widehat{\SW}_p(\mu,\nu) = \left(\frac{1}{K} \sum_{k=1}^K \W_p^p(\varphi^{(k)}_\#\mu, \varphi^{(k)}_\#\nu)\right)^{1/p},
\end{align}
where each $\varphi^{(k)}\sim \mathrm{Unif}(\smb^{d-1})$.

\subsection{Conditional Quantile Regression}
Drawing on flexible, distribution-free estimation methods, we briefly review conditional quantile regression. 
Introduced by \citet{koenker1978regression}, quantile regression offers a robust alternative to mean response modeling by estimating conditional quantiles of the response variable. 
For response $Y \in \R$ and covariates $X \in \R^d$, the $\tau$-th conditional quantile of $Y$ given $X$ is
\begin{align}
    \qf(x) = \inf\{y \in \R : F_{Y\mid X}(y \mid x) \geq \tau\}, \quad x \in \R^d,
\end{align}
where $F_{Y\mid X}(\cdot \mid x)$ is the conditional CDF of $Y$ given $X = x$. 
We estimate $\qf$ by minimizing the empirical quantile loss over a model class $\Q$:
\begin{align}
    \hqf = \argmin_{Q \in \Q} \sum_{i=1}^{\ntrain} \rho_\tau(y_i - Q(x_i)),
\end{align}
where $\rho_\tau(u) = \max\{\tau u, (\tau - 1)u\}$ is the quantile loss function, and $\ntrain$ denotes the number of training samples. 
We use $\ntrain$ deliberately to avoid confusion with $n$, which represents the (fixed) number of observations in Bayesian inference.

\subsection{Generative Density Estimation}

Consider a dataset $Y = \{Y_1,\dots,Y_n\}$ of i.i.d.\ draws from an unknown distribution $P_Y$.  
Generative models seek to construct a distribution $\hat{P}_Y$ that closely approximates $P_Y$ and is amenable to efficient generation of new samples.
Typically, one assumes an underlying latent variable structure whereby samples are generated as $\widetilde{Y} = G_\beta(Z)$,
with $Z \sim P_Z$ being a low-dimensional random variable with a simple, known distribution. 
Here $G_\beta\colon\mathcal Z\to\mathcal Y$ is the generative (push-forward) map that transforms latent vectors into observations.
The parameters $\beta$ are optimized so that the generated samples $\tilde{y} \sim \hat{P}_Y$ are statistically similar to the real data. 
Learning objectives may be formulated using a variety of generative frameworks, including optimal transport networks \citep{lu2025generative}, generative adversarial networks, and auto-encoder models.

\subsection{Article Structure and Related Literature}

\paragraph{Related Works}
Several prior works have proposed the usage of generative adversarial networks (GANs) as conditional density estimators \citep{zhou2023deep, wang2022adversarial}. These approaches aim to learn a generator mapping $(X, \eta) \mapsto \widetilde{\theta}$ with $\eta$ independent of $X$ and $\theta$. 
Adversarial training then aims to align the model’s induced joint distribution $P_{\mathrm{data}}(X)\, P_G(\widetilde{\theta} \mid X)$ with the true joint distribution $P_{\mathrm{data}}(X)\, P_{\mathrm{data}}(\theta\mid X)$.
The distinction between \abi and these methods is crucial: \abi operates in the space of posterior distributions on $\Omega$ and measures distances between full posterior distributions, while the latter works in the joint space $\X^n \times \Omega$, using a discriminator loss to match generated pairs $(X, \widetilde{\theta})$ to real data $(X, \theta)$.
Due to this fundamental difference, these generative approaches focus on learning conditional mappings uniformly over the entire domain $\X^n$ in a single round.
While GAN-based methods can be adapted for sequential refinement using importance re-weighting, as demonstrated by \cite{wang2022adversarial}, such adaptations typically require significant additional computational resources, including training auxiliary networks (such as the classifier network needed to approximate ratio weights in their two-step process). In contrast, \abi is inherently designed for sequential refinement without necessitating such auxiliary models.
In particular, results on simulated data in Figure \ref{fig:toy-model-wgan-abc} show that our sequential method significantly outperforms Wasserstein GANs when the prior is uninformative.

Recent work by \cite{polson2023generative} and its multivariate extension by \cite{kim2024deep} propose simulating posterior samples via inverse transform sampling.
Although both methods and \abi employ quantile regression, they differ fundamentally in scope and mechanism.
The former approaches apply a one-step procedure that pushes noise through an inverse-CDF or multivariate quantile map to produce posterior draws, inherently precluding direct sequential refinement.
In contrast, \abi employs conditional quantile regression to estimate a posterior metric---the posterior \msw distance---which extends naturally to any dimension and any $p \in[1,\infty)$. 
In particular, the case $p=1$ is noteworthy, since it renders both $\msw_1$ and $W_1$ distances integral probability metrics (see Theorem \ref{thm:msw-ipm}) and allows a dual formulation \citep{villani2009optimal}.
On the other hand, \cite{kim2024deep} focus exclusively on the 2-Wasserstein case. Moreover, they rely on a combination of Long Short-Term Memory and Deep Sets architectures to construct a multivariate summary and approximate the 2-Wasserstein transport map, a step that the authors acknowledge is sensitive to random initialization for obtaining a meaningful quantile mapping.
By comparison, \abi uses a simple feed-forward network to learn a one-dimensional kernel statistic corresponding to the estimated posterior \msw distance. Our experiments (Section \ref{sec:sim}) demonstrate that \abi is robust across different scenarios, insensitive to initialization, and requires minimal tuning.

\paragraph{Organization of the Paper}

The remainder of this manuscript is organized as follows. 
Section \ref{sec:method} introduces the \abi framework and its algorithmic components.
Section \ref{sec:theory} establishes the empirical convergence rates of the proposed \msw distance, characterizes its topological properties, and proves that the \abi posterior converges to the target posterior as the tolerance threshold vanishes.
Section \ref{sec:sim} demonstrates the effectiveness of \abi through extensive empirical evaluations.
Finally, Section \ref{sec:discussion} summarizes the paper and outlines future research directions.
Proofs of technical results and additional simulation details are deferred to the Appendix.


\section{Adaptive Bayesian Inference}\label{sec:method}
In this section, we introduce the proposed Adaptive Bayesian Inference (\abi) methodology. 
The fundamental idea of \abi is to transcend observation-based comparisons by operating directly in posterior space. Specifically, we approximate the target posterior as
\begin{align*}
\pi_{\abi}(\theta) = \pi \Bigl(\theta \Bigm
 | \msw(\pi(\theta\mid x),\,\pi(\theta\mid\xobs))\leq \epsilon\Bigr),
\end{align*}
that is, the distribution of $\theta$ conditional on the event that the posterior induced by dataset $x$ lies within an $\epsilon$-neighborhood of the observed posterior under the \msw metric.
This formulation enables direct comparison of candidate posteriors via the posterior \msw distance and supports efficient inference by exploiting its quantile representation.
As such, \abi approximates the target posterior through nonparametric posterior matching. 
In practice, for each proposed $\theta$, we simulate an associated dataset $X$ and evaluate the \msw distance between the conditional posterior $\pi(\theta \mid X)$ and the observed-data posterior $\pi(\theta \mid \xobs)$. 
Simulated samples for which this estimated deviation is small are retained, thereby steering our approximation progressively closer to the true posterior.
For brevity, we denote the target posterior by $\tpos = \pi(\theta \mid \xobs)$. Our approach proceeds in four steps.

\begin{enumerate}[label=\textbf{Step \arabic*.}, itemsep=-4pt, leftmargin=*, labelindent=0pt]
\item Estimate the trimmed \msw distance between posteriors $\pi^*$ and $\pi(\theta \mid x)$, $x \in \mathcal{X}$ using conditional quantile regression with multilayer feedforward neural networks; see Section \ref{method:msw-estimation}.
\item Sample from the current proposal distribution by decomposing it into a marginal component over $\theta$ and an acceptance constraint on $X$, then employ rejection sampling; see Section \ref{method:sample-proposal}.
\item Refine the posterior approximation via acceptance--rejection sampling: retain only those synthetic parameter draws whose simulated data yield an estimated \msw distance to $\tpos$ below the specified threshold, and discard the rest; see Section \ref{method:pruning-stage}.
\item Update the proposal for the next iteration by fitting a generative model to the accepted parameter draws; see Section \ref{method:generative-density-estimation}.
\end{enumerate}

These objectives are integrated into a unified algorithm with a nested sequential structure.
At each iteration $t = 1, 2, \dots, \mathsf{T}$, the proposal distribution is updated using the previous step's posterior approximation, thus constructing a sequence of \textit{partial posteriors} $\pi^{(1)}_*, \pi^{(2)}_*, \dots, \pi^{(\mathsf{T})}_*$ that gradually shift toward the target posterior $\tpos$. 
This iterative approach improves the accuracy of posterior approximation through adaptive concentration on regions of high posterior alignment, which in turn avoids the unstable variance that can arise from single-round inference. 
By contrast, direct Monte Carlo estimation would require an infeasible number of simulations to observe even a single instance where the generated sample exhibits sufficient similarity to the observed data. 
For a concrete illustration of sequential refinement, see the simple Gaussian–Gaussian conjugate example in the Appendix \ref{example:gaussian-gaussian}.
The complete procedure is presented in \abc{Algorithm} \ref{alg:abi}.

\begin{algorithm}[!htp]
\caption{Adaptive Bayesian Inference (\abi)}
\label{alg:abi}
\KwIn{
    Tolerance thresholds: $\infty > \epsilon_1 > \epsilon_2 > \cdots > \epsilon_{\mathsf{T}}$\;
    \phantom{\textbf{Inputt:} }Generative model: $G_\beta$\;
    \msw \textbf{distance parameters:}\\
    \phantom{\textbf{Inputt:} }Trimming parameter: $\delta \in (0, 1/2)$\;
    \phantom{\textbf{Inputt:} }Mixing parameter: $\lambda \in (0, 1)$\;
    \phantom{\textbf{Inputt:} }Number of slices: $K > 0$\;
    \phantom{\textbf{Inputt:} }Number of discretization points: $H > 0$\;
}
\vspace{8pt}
Initialize $\pi^{(1)}(\theta) \gets \pi(\theta)$\;
\vspace{5pt}
\For{Iteration $t = 1, 2, \dots, \mathsf{T}$}{
    \vspace{.5em}
    \tcp{Sample the current proposal; see Section \ref{method:sample-proposal}}
    Generate $\mathsf{N}$ parameter-data pairs $\{(\theta^{(i)}, X^{(i)})\}_{i=1}^{\mathsf{N}}$ by rejection sampling from the unnormalized joint proposal using Algorithm \ref{alg:approx-rej-samp}:
    \begin{align*}
        \pi^{(t)}(\theta, X) = \pi\bigl(\theta, X \mid \,\ind\{\widehat{\msw}_{p,\delta,K,H}(\pi_X,\pi_{\xobs})\le\epsilon_{t-1}\}\bigr).
    \end{align*}\\
    Sample $K$ random projections $\varphi^{(k)} \sim \sigma(\smb^{d-1})$ and form the projection set $\Phi = \{\varphi^{(k)}\}_{k=1}^K \cup \{e_j\}_{j=1}^d$; set $K' \gets K + d$\;
    \vspace{.5em}
    \tcp{Train quantile estimator network; see Section \ref{method:msw-estimation}}
    Generate $\mathsf{N}_{\mathsf{train}}$ parameter-data pairs $\{(X^{(m)}, \theta^{(m)})\}_{m=1}^{\mathsf{N}_{\mathsf{train}}}$ using Algorithm \ref{alg:approx-rej-samp} to form the training set for the quantile ReLU network\;
    Fine-tune the conditional quantile ReLU network $\qnetest$ using Algorithm \ref{alg:msw-estimation}\;
    For each $i = 1, \ldots, \mathsf{N}$, evaluate $\widehat{\msw}_{p, \delta, K, H}\bigl(\pi(\theta \mid X^{(i)}), \pi(\theta \mid\xobs)\bigr)$ with $\qnetest$ using Eq.\ \eqref{msw:emp-eval}\;
    \vspace{.5em}
    \tcp{Pruning stage, refine the partial posterior distribution; see Section \ref{method:pruning-stage}}
    Retain parameter draws satisfying the tolerance threshold condition:
    \vspace{-.5em}
    \begin{align*}
        \mathsf{S}^{(t)}_{\theta, *} \gets \{\theta^{(i)}: \widehat{\msw}_{p, \delta, K, H}\bigl(\pi(\theta \mid X^{(i)}), \pi(\theta \mid \xobs)\bigr) \leq \epsilon_t\}
    \end{align*}\\
    \vspace{-.5em}
    \vspace{.5em}
    \tcp{Sequential generative density estimation stage; see Section \ref{method:generative-density-estimation}}
    Train generative density estimator $G_\beta$ on $\mathsf{S}^{(t)}_{\theta, *}$ to obtain the approximate partial posterior $\widehat{\pi}_*^{(t)}$\;
    Update the proposal distribution for iteration $(t+1)$:
    \vspace{-.5em}
    \begin{align*}
        \pi^{(t+1)}(\theta) \gets \widehat{\pi}_*^{(t)}(\theta)
    \end{align*}\\
}
Set $\widehat{\pi}^{(\epsilon_{\mathsf{T}})}_{\abi}(\theta \mid \xobs) \gets \widehat{\pi}_*^{(\mathsf{T})}(\theta)$\;
\vspace{.5em}
\KwOut{Generative model that samples from the \abi posterior: $\widehat{\pi}^{(\epsilon_{\mathsf{T}})}_{\abi}(\theta \mid \xobs)$\;}
\end{algorithm}

\subsection{Nonparametric Distribution Matching}
\subsubsection{Motivation: Minimal Posterior Sufficiency}

We first discuss the conceptual underpinnings underlying our proposed posterior-based distance. 
Consider the target posterior $\pi^*$, which conditions on the observed data $\xobs$.
If an alternative posterior $\pi(\theta \mid X)$ is close to $\pi^*$ under an appropriate measure of posterior discrepancy, then $\pi(\theta \mid X)$ naturally constitutes a viable approximation to the target posterior. Thus, we can select, from among all candidate posteriors, the ones whose divergence from the true posterior falls within the prescribed tolerance. 
We formalize this intuition through the notion of \textit{minimal posterior sufficiency}.

Under the classical frequentist paradigm, Fisher showed that the likelihood function $L(\theta; X) = P^{(n)}_\theta(X)$, viewed as a random function of the data, is a minimal sufficient statistic for $\theta$ as it encapsulates all available information about the parameter $\theta$ \citep[Chapter 3]{berger1988likelihood}.
This result is known as the likelihood principle. 
The likelihood principle naturally extends to the Bayesian regime since the posterior distribution is proportional to $\pi(\theta)P^{(n)}_\theta(X)$. In particular, 
Theorem \ref{thm:minimal-posterior-sufficiency} shows that, given a prior distribution $\pi(\cdot)$, the posterior map $X \mapsto \pi(\cdot \mid X)$ is minimally Bayes sufficient with respect to the prior $\pi$.
We refer to this concept as \textit{minimal posterior sufficiency}.

\begin{defn}[Bayes Sufficient]
    A statistic $T(X)$ is \textit{Bayes sufficient} with respect to a prior distribution $\pi(\theta)$ if $\pi(\theta \mid X) = \pi(\theta \mid T(X))$. 
\end{defn}

\begin{thm}[Minimal Bayes Sufficiency of Posterior Distribution]\label{thm:minimal-posterior-sufficiency}
    The posterior map $X \mapsto \pi(\cdot \mid X)$ is minimally Bayes sufficient. 
\end{thm}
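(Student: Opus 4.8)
The statement bundles two claims: first, that the posterior map $T^*\colon x\mapsto\pi(\cdot\mid x)$ is itself Bayes sufficient, and second, that it is \emph{minimal}, meaning every Bayes sufficient statistic $S$ determines $T^*$ (equivalently, $T^*$ is a measurable function of $S$, necessarily only up to a $\p_X$-null set). The plan is to first pin down the measure-theoretic setting. Since $(\Omega,\|\cdot\|)$ is Polish, the disintegration theorem supplies a jointly measurable version of the posterior, so $T^*$ is a genuine measurable map from $\X^n$ into $\cp(\Omega)$, the space of Borel probability measures on $\Omega$ equipped with its standard Borel $\sigma$-algebra (generated by the evaluations $\mu\mapsto\mu(B)$, $B$ Borel; equivalently, the Borel $\sigma$-algebra of the weak topology). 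All equalities between conditional laws below are understood $\p_X$-almost surely, and it is convenient to note that Bayes sufficiency of a statistic $T(X)$, as defined here, is equivalent to the conditional independence $\theta\indep X\mid T(X)$, using $\sigma(T(X))\subseteq\sigma(X)$.

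For the first claim I would verify the defining identity $\pi(\theta\mid X)=\pi(\theta\mid T^*(X))$ directly. Fix a bounded measurable $f\colon\Omega\to\R$ and put $g(X):=\int f\,dT^*(X)$. On one hand, $g=(\mu\mapsto\int f\,d\mu)\circ T^*$ is a fixed measurable functional of $T^*(X)$, hence $\sigma(T^*(X))$-measurable; on the other hand, $g(X)=\E[f(\theta)\mid X]$ by the very definition of the regular conditional distribution, so for every $A\in\sigma(T^*(X))\subseteq\sigma(X)$ we have $\E[f(\theta)\ind_A]=\E[g(X)\ind_A]$. By uniqueness of conditional expectation, $g(X)=\E[f(\theta)\mid T^*(X)]$ a.s. Since $f$ ranges over a measure-determining class, this yields $\pi(\cdot\mid X)=\pi(\cdot\mid T^*(X))$ a.s., i.e., $T^*$ is Bayes sufficient.

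For minimality, let $S=S(X)$ be any Bayes sufficient statistic, so $\pi(\cdot\mid X)=\pi(\cdot\mid S(X))$ a.s. The right-hand side is a regular conditional distribution of $\theta$ given $\sigma(S)$; because $\theta$ takes values in the Polish space $\Omega$, such a version exists and can be written as $\kappa(S(X),\cdot)$ for a probability kernel $\kappa$ from the range of $S$ into $\Omega$, with $s\mapsto\kappa(s,\cdot)$ measurable into $\cp(\Omega)$ by the definition of a kernel. Hence $T^*(X)=\pi(\cdot\mid X)=\kappa(S(X),\cdot)$ a.s., exhibiting $T^*$ as the composition of $S$ with a measurable map, up to a null set. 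Combined with the first claim, this is exactly minimal Bayes sufficiency. (Equivalently, $T^*$ induces the coarsest partition of $\X^n$---via $x\sim y\iff\pi(\cdot\mid x)=\pi(\cdot\mid y)$---that is refined by every Bayes sufficient statistic.)

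The argument is short, so the substance lies entirely in measure-theoretic hygiene, and I expect the main point to state carefully to be the measurability of $T^*$ as a map into $\cp(\Omega)$: one must select a jointly measurable (disintegrated) version of the posterior, which is precisely where Polishness of $\Omega$ enters, and likewise interpret $\pi(\cdot\mid S(X))$ as $\kappa(S(X),\cdot)$ for a bona fide kernel $\kappa$. The only other subtlety is that minimality is inherently an almost-sure statement---a statistic is determined only off a $\p_X$-null set---so the theorem and its proof should carry that caveat explicitly; with that understood, everything reduces to the uniqueness of conditional expectation and the definition of a probability kernel.
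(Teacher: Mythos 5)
Your proof is correct and follows essentially the same route as the paper: sufficiency is verified from the defining property of the regular conditional distribution (the tower-property argument, since $\sigma(T^*(X))\subseteq\sigma(X)$ and the posterior is $\sigma(T^*(X))$-measurable), and minimality follows because Bayes sufficiency of any other statistic $S$ forces $\pi(\cdot\mid X)=\pi(\cdot\mid S(X))$, so $T^*$ is determined by $S$. The only difference is cosmetic: the paper states minimality as the partition implication ($S(x)=S(x')\Rightarrow T^*(x)=T^*(x')$ for $P_\pi$-a.e.\ $x,x'$) while you factor $T^*$ through a kernel $\kappa(S(X),\cdot)$, and your added care about measurability of $T^*$ into the space of measures and the almost-sure caveats is a welcome refinement rather than a departure.
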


Ideally, inference should be based on minimally sufficient statistics, which suggests that our posterior inference should utilize such statistics when low-dimensional versions exist. However, low-dimensional sufficient statistics---let alone minimally sufficient ones---are available for only a very limited class of distributions; consequently, we need to consider other alternatives to classical summaries.
Observe that the acceptance event formed by matching the infinite-dimensional statistics $\pi(\theta\mid X)$ and $\pi(\theta\mid \xobs)$ coincides with the event,
\begin{align*}
  \ind\bigl\{ \mathcal{D}\bigl(\pi(\theta\mid X),\, \pi(\theta\mid \xobs)\bigr)\bigr\}\; \leq \;\epsilon.
\end{align*}
Leveraging this equivalence, our key insight is to collapse the infinite-dimensional posterior maps into a one-dimensional \textit{kernel statistic} formed by applying a distributional metric on posterior measures---thus preserving essential geometric structures, conceptually analogous to the ``kernel trick.''
We realize this idea concretely via the novel Marginally-augmented Sliced Wasserstein (\msw) distance.
The \msw distance preserves marginal structure and mitigates the curse of dimensionality, achieving the parametric convergence rate when $p=1$ (see Section \ref{theory:msw-conv-rate}). Moreover, \msw is topologically equivalent to the classical Wasserstein distance, retaining its geometric properties such as metrizing weak convergence.

\subsubsection{Estimating Trimmed \msw Distance via Deep Conditional Quantile Regression}\label{method:msw-estimation}
To mitigate the well-known sensitivity of the Wasserstein and Sliced Wasserstein distances to heavy tails, we adopt a robust, trimmed variant of the \msw distance, expanding upon the works of \cite{alvarez2008trimmed} and \cite{manole2022minimax}.
To set the stage for our multivariate extension, we first recall the definition of the trimmed Wasserstein distance in one dimension. 
For univariate probability measures $\mu$ and $\nu$, and a trimming parameter $\delta \in [0, 1/2)$, the $\delta$-trimmed $\W_p$ distance is defined as:
\begin{equation}
    \W_{p, \delta}(\mu, \nu) = \left(\frac{1}{1-2\delta}\int_\delta^{1-\delta} \left|F^{-1}_{\mu}(\tau) - F^{-1}_\nu(\tau)\right|^p \, d\tau\right)^{1/p},
\end{equation}
where $F^{-1}_{\mu}$ and $F^{-1}_{\nu}$ denote the quantile functions of $\mu$ and $\nu$, respectively.

We now extend this univariate trimming concept to the multivariate setting and provide a formal definition of the trimmed \msw distance.

\begin{defn}\label{def:msw-formal}
Let $\delta\in[0,1/2)$ be a trimming constant, and let $\mu$ and $\nu$ be probability measures on $\mathbb{R}^d$ (with $d\geq 2$) that possess finite $p$-th moments for $p\geq1$. The $\delta$-trimmed Marginally-augmented Sliced Wasserstein (\msw) distance between $\mu$ and $\nu$ is defined as
\begin{align}
\msw_{p,\delta}(\mu,\nu) &= \lambda\, \underbrace{\frac{1}{d}\sum_{j=1}^{d}\W_{p,\delta}\Bigl(\mu_j,\, \nu_j\Bigr)}_{\text{marginal augmentation}} + (1-\lambda)\, \underbrace{\left(\E_{\varphi\sim\sigma}\Bigl[\W_{p,\delta}^p\Bigl(\varphi_\#\mu,\varphi_\#\nu\Bigr)\Bigr]\right)^{\frac{1}{p}}}_{\text{Sliced Wasserstein distance}},
\end{align}
$\lambda \in (0, 1)$ is a mixing parameter; $\sigma$ is the uniform probability measure on the unit sphere $\smb^{d-1}$; $\mu_j$ denotes the marginal distribution of the $j$-th coordinate under the joint measure $\mu$; and
$\varphi_\#$ denotes the pushforward by the projection $\varphi$. 
This robustification of the \msw distance compares distributions after trimming up to a $2\delta$ fraction of their mass along each projection.
\end{defn}

\begin{rmk}
When $d=1$, Definition \ref{def:msw-formal} reduces to the marginal term alone. In that case, the trimmed \msw distance coincides exactly with the standard trimmed Wasserstein distance between the two one-dimensional distributions.
\end{rmk}

\begin{rmk}
    When $\delta=0$, $\msw_{p, \delta}$ reduces to the untrimmed $\msw_{p}$ distance. For completeness, we give the formal definition of $\msw_p(\cdot, \cdot)$ in Appendix \ref{def:msw-untrimmed}.
\end{rmk}

The trimmed \msw distance comprises two components: the Sliced Wasserstein term, which captures joint interactions through random projections on the unit sphere, and a marginal augmentation term, which gauges distributional disparities along coordinate axes.
The inclusion of the marginal term enhances the \msw distance’s sensitivity to discrepancies along each coordinate axis, remedying the inefficiency of standard SW projections that arises from uninformative directions sampled uniformly at random.
Furthermore, because the SW distance is approximated via Monte Carlo, explicitly accounting for coordinate-wise marginals is particularly pivotal as these marginal distributions directly determine the corresponding posterior credible intervals. 
The value of incorporating axis-aligned marginals has also been highlighted in recent works \citep{moala2010elicitation,drovandi2024improving,chatterjee2025univariate,lu2025generative}.
For brevity, unless stated otherwise, we refer to the trimmed \msw distance simply as the \msw distance throughout for the remainder of this section.

In continuation of our earlier discussion on the need for a posterior space metric, the posterior \msw distance quantifies the extent to which posterior distributions shift in response to perturbations in the observations.
In contrast, most existing ABC methods rely on distances computed directly between datasets, either as $\mathcal{D}(x, \xobs)$ or as $\mathcal{D}(\widehat{\mu}_x, \widehat{\mu}_{\xobs})$ where $\widehat{\mu}_{\boldsymbol{\cdot}}$ denotes the empirical distribution---serving as indirect proxies for posterior discrepancy due to the fundamental challenges in estimating posterior-based metrics.
Importantly, our approach overcomes this limitation by leveraging the quantile representation of the posterior \msw distance, as formally established in Definition \ref{thm:quantile-msw}.

\begin{defn}[Quantile Representation of \msw Distance]\label{thm:quantile-msw}
    The trimmed \msw distance defined in Definition \ref{def:msw-formal} can be equivalently expressed using the quantile representation as
    \begin{align}
        \msw_{p, \delta}(\mu, \nu) &=   \frac{\lambda}{d} \sum_{j=1}^d \left(\frac{1}{1-2\delta}\int_\delta^{1-\delta} \left|F^{-1}_{\mu_j}(\tau) - F^{-1}_{\nu_j}(\tau)\right|^p \, d\tau\right)^{1/p} \nonumber\\
        &\quad + (1-\lambda) \left( \frac{1}{1-2\delta}\int_{\smb^{d-1}} \int_\delta^{1-\delta} \left|F^{-1}_{\varphi_\# \mu}(\tau) - F^{-1}_{\varphi_\# \nu}(\tau)\right|^p \, d\tau \, d\sigma(\varphi)\right)^{1/p}.
    \end{align}
\end{defn}

Building on Definition \ref{thm:quantile-msw}, we reformulate posterior comparison as a conditional quantile regression problem for $\theta$ given $X=x$. 
Specifically, the \msw distance
is constructed in terms of one-dimensional projections of the distributions to leverage the closed-form expression available for univariate Wasserstein evaluation.
By approximating the spherical integral with $K$ Monte Carlo–sampled directions, computing \msw therefore reduces to fitting a series of conditional quantile regressions, each corresponding to a distinct single-dimensional projection. 

To evaluate the \msw distance, we first draw $K > 0$ projections $\{\varphi^{(1)}, \dots, \varphi^{(K)}\}$ uniformly at random from the unit sphere $\smb^{d-1}$. 
Subsequently, we discretize the interval $[\delta, 1-\delta]$ into $H > 0$ equidistant subintervals, each of width $\Delta = (1-2\delta) / H$.
For $x, x' \in \mathcal{X}$, we write $\pi_x = \pi(d\theta \mid x)$ and $\pi_{x'} = \pi(d\theta \mid x')$ as shorthand for the respective posterior distributions.
The posterior $\msw_{p,\delta}$ distance between $\pi_x, \pi_{x'}$, denoted $\widehat{\msw}_{p, \delta, K, H}\left(\pi_x, \pi_{x'} \right)$, can be approximated as follows:
\begin{align}
    \widehat{\msw}_{p, \delta, K, H}\left(\pi_x, \pi_{x'}\right) &= 
     \frac{\lambda}{d} \sum_{j=1}^d \left[ I_H \left(F^{-1}_{\pi_{x, j}}, F^{-1}_{\pi_{x', j}} \right)\right]^{1/p} \nonumber \\
    &\qquad + \, \, (1-\lambda) \left ( \frac{1}{K} \sum_{k=1}^K I_H\left (F^{-1}_{\varphi^{(k)}_\# \pi_x}, F^{-1}_{\varphi^{(k)}_\# \pi_{x'}} \right) \right)^{1/p}, \label{msw:emp-eval} \\[1em]
    \text{where } I_H(q_1, q_2) &= \frac{\Delta}{2 (1-2\delta)} \Big( \big|q_1(\delta ) - q_2(\delta) \big|^p + 2\sum_{h=1}^{H-1} \big|q_1(\delta + h\Delta) - q_2(\delta + h\Delta) \big|^p \nonumber \\
    &\qquad +  \, \, \big|q_1(1 - \delta) - q_2(1-\delta) \big|^p\Big).
\end{align}
In the equations above, $I_H$ represents the trapezoidal discretization function, while $F^{-1}_{\pi_{x, j}}(\tau)$ denotes the quantile function associated with the $j$-th coordinate of $\pi_x$ evaluated at the $\tau$th quantile. 

\begin{rmk}
    The estimated \msw distance can be viewed both as a measure of discrepancy between the posterior distributions and as an informative low-dimensional kernel statistic. Depending on the context, we will use these interpretations interchangeably to best suit the task at hand.
\end{rmk}

\begin{rmk}
    When only the individual posterior marginals $\pi(\theta_j \mid\xobs)$ for $j=1,\dots,d$ are of interest, one can elide the Sliced Wasserstein component entirely and compute only the univariate marginal terms. 
    Since the marginals often suffice for decision-making without the full joint posterior \citep{moala2010elicitation}, this approach yields substantial computational savings.  
\end{rmk}

To estimate the quantile functions, we perform nonparametric conditional quantile regression (CQR) via deep ReLU neural networks. These networks have demonstrated remarkable abilities to approximate complex nonlinear functions and adapt to unknown low-dimensional structures while possessing attractive theoretical properties. 
In particular, \cite{padilla2022quantile} establish that, under mild smoothness conditions, the ReLU-network quantile regression estimator attains minimax-optimal convergence rates.

\begin{defn}[Deep Neural Networks]\label{def:dnn}
Let $\phi(\mathbf{x}) = \max\{\mathbf{x}, 0\}$ be the ReLU activation function. For a network with $L$ hidden layers, let $\mathbf{d} = (d_0, d_1, \dots, d_{L+1})^\top \in \mathbb{R}^{L+2}$ specify the number of neurons in each layer, where $d_0$ represents the input dimension and $d_{L+1}$ the output dimension.
The class of multilayer feedforward ReLU neural networks specified by architecture $(L, \mathbf{d})$ comprises all functions from $\R^{d_0}$ to $\R^{d_{L+1}}$ formed by composing affine maps with elementwise ReLU activations:
\begin{align*}
    f(\mathbf{x}) = g_{L+1} \circ \phi \circ g_L \circ \cdots \circ \phi \circ g_1(\mathbf{x}),
\end{align*}
where each layer $\ell$ is represented by an affine transformation $g_\ell(\mathbf{d}^{(\ell-1)}) = \mathbf{W}^{(\ell)}d_{\ell-1} + \mathbf{b}^{(\ell)}$, with $\mathbf{W}^{(\ell)} \in \mathbb{R}^{d_\ell \times d_{\ell-1}}$ as the weight matrix and $\mathbf{b}^{(\ell)} \in \mathbb{R}^{d_\ell}$ as the bias vector.
\end{defn}

Building on Definition \ref{def:dnn}, we approximate \eqref{msw:emp-eval} by training a single deep ReLU network to jointly predict all slice-quantiles.
Let $K' := K + d$ denote the total number of directions in the augmented projection set $\Phi = \{\varphi^{(k)}\}_{k=1}^K \cup \{e_j\}_{j=1}^d$.
For each projection $\varphi^{(k)} \in \Phi$ and quantile level $\tau_h = \delta + h\Delta$ with $h = 0, 1, \dots, H$ and $\Delta = (1-2\delta)/H$, let 
\begin{align*}
    Q^*_{k, h}(x) = F^{-1}_{\varphi^{(k)}_\# \pi_x} (\tau_h)
\end{align*}
be the true conditional $\tau_h$-quantile along $\varphi^{(k)}$. 
Given $\ntrain$ training pairs $\{(x^{(m)}, \theta^{(m)})\}_{m=1}^{\ntrain}$, we learn $\qnetest: \mathcal X \to \R^{K'\,(H+1)}$ by solving
\begin{align}\label{msw:quantile-optim}
    \qnetest &= \argmin_{Q \in \Q} ~~\sum_{k=1}^{K'} \sum_{h=0}^H\sum_{m=1}^{\ntrain} \rho_{\tau_h, \kappa}(\langle \varphi^{(k)}, \theta^{(m)}\rangle - Q_{[k, h]}(x^{(m)})),
\end{align}
where $\Q$ is the class of ReLU neural network models with architecture $(L, \mathbf{d})$ and output dimension $d_{L+1}=K'(H+1)$, and $Q_{[k, h]}$ is the $((H+1)(k-1)+h+1)$-th entry of the flattened output. 
This single network thus shares parameters across all $K'$ slices and $H+1$ quantile levels. 

Contrary to the conventional pinball quantile loss \citep{padilla2022quantile}, we employ the Huber quantile regression loss \citep{huber1964robust}, which is less sensitive to extreme outliers. This loss function, parameterized by threshold $\kappa$ \citep{dabney2018implicit}, is defined as:
\begin{align}
    \rho_{\tau, \kappa}(u) &= 
    \begin{cases}
        \frac{1}{2 \kappa} |\tau - \ind(u < 0)| u^2, & |u| \leq \kappa\\
        |\tau - \ind(u < 0)|(|u| - \frac{1}{2}\kappa),  & |u| > \kappa.
    \end{cases}
\end{align}
Upon convergence of the training process, we obtain a single quantile network that outputs the predicted quantile of $\theta$ for any given projection $\varphi^{(k)}$, quantile level $\tau_h$, and conditioning variable $x \in \mathcal{X}$.

\begin{rmk}
    Unlike \cite{padilla2022quantile}, we impose no explicit monotonicity constraints in \eqref{msw:quantile-optim}; that is, we do not enforce the following ordering restrictions during the joint estimation stage:
    \begin{align*}
        \widehat{Q}_{[k,h]}(X^{(m)}) \; \leq \; \widehat{Q}_{[k,h']}(X^{(m)}) \quad \text{for all } 0 \leq h<h'\le H,\;k\in[K'],\; m\in[\ntrain].
    \end{align*}
    Instead, after obtaining the predictions $\{\widehat{Q}_{[k,h]}(x)\}_{h=0}^H$ for each slice $\varphi^{(k)}$, we simply sort these $H+1$ values in ascending order.  
    This post‐processing step automatically guarantees the non‐crossing restriction without adding any constraints to the optimization.
\end{rmk}

Collectively, the elements in this section constitute the core of the \textit{nonparametric distribution matching} component of our proposed methodology. The corresponding algorithmic procedure for distribution matching is summarized in \abc{Algorithm} \ref{alg:msw-estimation}.

\begin{algorithm}[h]
\caption{Trimmed \msw Distance Estimation via CQR}
\label{alg:msw-estimation}
\KwIn{Proposal distribution $\pi^{(t)}(\theta, X)$\;
    \phantom{\textbf{Input:} }Number of random projections $K$\;
    \phantom{\textbf{Input:} }Number of quantile levels $H$\;
    \phantom{\textbf{Input:} }Smoothing parameter $\kappa$\;
    \phantom{\textbf{Input:} }Network architecture $(L,\mathbf{d})$\;
    \phantom{\textbf{Input:} }Training sample size $\ntrain$\;
}
\vspace{0.5em}
\begin{spacing}{1.2}
Sample $K$ directions $\{\varphi^{(k)}\}_{k=1}^K\sim\mathrm{Unif}(S^{d-1})$\; 
Set $\Phi \gets \{\varphi^{(k)}\}_{k=1}^K \cup \{e_j\}_{j=1}^d$\;
Compute quantile grid $\tau_h=\delta+h\frac{1-2\delta}H$ for $h=0,\dots,H$\;
\For{$m=1,\dots,\ntrain$}{
  Generate sample $(\theta^{(m)}, X^{(m)})\sim \pi^{(t)}(\theta, X)$\;
}
Train quantile network $\qnetest$ with architecture $(L,\mathbf{d})$ by minimizing the empirical risk in \eqref{msw:quantile-optim} using loss $\rho_{\tau_h,\kappa}$\;
\For{$k=1, \dots, K'$}{
  For each sample $X^{(m)}$, sort the $H+1$ outputs $\{\hat{Q}_{[k,h]}(X^{(m)})\}_{h=0}^H$ to enforce monotonicity\;
}
\Return trained network $\qnetest$ for evaluating $\widehat{\msw}_{p,\delta,K,H}(\pi_x,\pi_{\xobs})$ via \eqref{msw:emp-eval}.
\end{spacing}
\end{algorithm}

\subsection{Adaptive Rejection Sampling}
In contrast to the single-round rejection framework of classical rejection-ABC, \abi implements an adaptive rejection sampling approach. For clarity of exposition, we decompose this approach into three distinct stages: proposal sampling, conditional refinement, and sequential updating.

At a high level, this sequential scheme decomposes the target event into a chain of more tractable conditional events.
Define a sequence of nested subsets $\X^n \supset A_1 \supset A_2 \supset \cdots \supset A_{\mathsf T}$ associated with decreasing thresholds $\epsilon_1 > \epsilon_2 > \cdots > \epsilon_\mathsf{T}$, following a structure similar to adaptive multilevel splitting.
We proceed by induction. 
At initialization, draw $\theta \sim \pi(\theta)$ and $X \mid \theta \sim P_\theta^{(n)}(\cdot)$. 
At iteration one, we condition on the event $X\in A_1$ by selecting among the initial samples
$(\theta, X)$ those for which $X \in A_1$, so that the joint law becomes 
\begin{equation*}
\qquad \p\bigl(\theta\mid X\in A_1\bigr)\;P_\theta\bigl(X\mid X\in A_1\bigr)\;=\;\p\bigl(\theta,\,X\mid X\in A_1\bigr)\quad(\text{conditional refinement}).
\end{equation*}
At iteration $t$, we first obtain samples from $\p(\theta,X\mid X\in A_{t-1})$ generated by
\begin{equation*}
    \qquad \theta \sim \pi \bigl(\theta \mid X\in A_{t-1} \bigr), 
    \quad X \mid \theta, ~A_{t-1} \sim P_\theta^{(n)} \bigl (\cdot \mid X\in A_{t-1}\bigr) \qquad(\text{proposal sampling}).
\end{equation*}
To refine to $A_t\subseteq A_{t-1}$, note that
\begin{align*}
    \p\bigl(\theta \mid X\in A_{t-1}\bigr)\, P_\theta\bigl(X\in A_t\mid X\in A_{t-1}\bigr) &= \frac{\p(\theta,X\in A_t)}{\p(X\in A_{t-1})} \\
&\propto \p\bigl(\theta\mid X\in A_t\bigr) \qquad (\text{sequential update}).
\end{align*}
Thus, by conditioning on $A_t$, we obtain samples from the intermediate partial posterior $\p(\theta\mid X\in A_t)$. 
Iterating this procedure until termination yields the final approximation $\p(\theta\mid X\in A_\mathsf{T})$, which converges to the true posterior as $\epsilon_{\mathsf{T}}$ approaches 0 and the acceptance regions become increasingly precise.

In the following subsections, we present a detailed implementation for each of these three steps.

\subsubsection{Sampling from the Refined Proposal Distribution}\label{method:sample-proposal}
In this section, we describe how to generate samples from the refined proposal distribution using rejection sampling.

\paragraph{Decoupling the Joint Proposal}
Let $\epsilon_1 > \epsilon_2 > \cdots > \epsilon_\mathsf{T}$ be a user-specified, decreasing sequence of tolerances.
Define the data‐space acceptance region and its corresponding event by
\begin{gather*}
    A_{t} = \{x \in \X^n : \widehat{\msw}_{p, \delta, K, H}(\pi_x, \pi_{\xobs}) \leq \epsilon_{t}\} \subseteq \X^n, \\
    E_{t} = \left\{\omega: X(\omega) \in A_{t}\right\}.
\end{gather*}
At iteration $t \geq 1$, we adopt the joint proposal distribution over $(\theta,X)$ given by:
\begin{align*}
  \pi^{(1)}(\theta,X) &= \pi(\theta)\,P_\theta^{(n)}(X),\\
  \pi^{(t)}(\theta,X) &= \pi\bigl(\theta,X\mid E_{t-1}\bigr) \quad \text{for } t=2,\dots, \mathsf{T},
\end{align*}
where we condition on the event $E_{t-1}=\{X\in A_{t-1}\}$ with $A_0=\mathcal{X}$. 
Since direct sampling from this conditional distribution is generally infeasible, we recover it via rejection sampling after decoupling the joint proposal.  
Let 
\begin{align*}
    \pi^{(t)}(\theta) =\int_{\X^n} \pi^{(t)}(\theta,x) \,dx
\end{align*}
be the marginal law of $\pi^{(t)}$ over $\theta$. 
This auxiliary distribution matches the correct conditional marginal while remaining independent of $X$. 
Observe that the joint proposal for the $t$-th iteration $\pi^{(t)}$ admits the factorization:
\begin{align}\label{method:rej-sampling-factorization}
    \pi^{(t)}(\theta, X) 
    &= \pi(\theta \mid E_{t-1}) P_\theta^{(n)}(X \mid E_{t-1}) \nonumber\\
    &= \underbrace{\pi^{(t)}(\theta)}_{\text{marginal in $\theta$}} \underbrace{P_\theta^{(n)}\left(X \mid X \in A_{t-1} \right)}_{\text{constraint on $X$}}.
\end{align}
Note that for a given $\theta \in \Omega$, the data‐conditional term in the equation above satisfies
\begin{align}
    P_\theta^{(n)}\left(X\mid X \in A_{t-1}\right)\propto P_\theta^{(n)}(X)\underbrace{\ind \left\{ \widehat{\msw}_{p, \delta, K, H}(\pi_{X}, \pi_{\xobs}) \leq \epsilon_{t-1}\right\}}_{\text{constraint on $X$}},
\end{align}
where the proportional symbol hides the normalizing constant $P_\theta^{(n)}(A_{t-1})$.
The decomposition in \eqref{method:rej-sampling-factorization} cleanly decouples the proposal distribution into a marginal draw over $\theta$ and a constraint on $X$. 
In other words, the first component eliminates the coupling while retaining the correct conditional marginal $\pi^{(t)}(\theta)$, and the second term imposes a data‐dependent coupling constraint to be enforced via a simple rejection step.

\paragraph{Sampling the Proposal via Rejection}
In order to draw 
\begin{align*}
    (\theta,X) \sim \pi^{(t)}(\theta,X) = \pi(\theta, X \mid E_{t-1})
\end{align*}
without computing its normalizing constant $P_\theta^{(n)}(A_{t-1})$, we apply rejection sampling to the unnormalized joint factorization in \eqref{method:rej-sampling-factorization}:
\begin{enumerate}
    \item Sample $\theta \sim \pi^{(t)}(\theta)$;
    \item Generate $X \sim P_{\theta}^{(n)}$ \textbf{repeatedly} until $\ind\{\widehat{\msw}_{p, \delta, K, H}(\pi_{X}, \pi_{\xobs}) \leq \epsilon_{t-1}\} = 1$.\label{step:rej-samp-while} 
\end{enumerate}
By construction, the marginal distribution of $\theta$ remains $\pi(\theta\mid E_{t-1})$ since all $\theta$ values are unconditionally accepted, while the acceptance criterion precisely enforces the constraint $X \in A_{t-1}$.
Consequently, the retained pairs $(\theta,X)$ follow the desired joint distribution $\pi(\theta,X\mid E_{t-1})$.
However, it is practically infeasible to perform exact rejection sampling as the expected number of simulations for Step \ref{step:rej-samp-while} may be unbounded.
To address this limitation, we introduce a budget-constrained rejection procedure termed Approximate Rejection Sampling (\abc{ARS}), as outlined in Algorithm \ref{alg:approx-rej-samp}.
The core idea is as follows: given a fixed computational budget $\mathsf{R} \in \mathbb{N}_+$, we repeat Step \ref{step:rej-samp-while} at most $\mathsf{R}$ times. If no simulated data set satisfies the tolerance criterion within this budget, the current parameter proposal is discarded, and the algorithm proceeds to the next parameter draw.

\begin{algorithm}[h] 
\caption{Approximate Rejection Sampling}
\label{alg:approx-rej-samp}
\For{$i = 1, \dots, \mathsf{N}$}{
    Sample $\theta^{(i)} \sim \pi^{(t)}(\theta)$\;
    \For{$r = 1, \dots, \mathsf{R}$}{
        Sample $X^{(i, r)} \sim P_{{\theta^{(i)}}}^{(n)}$\;
        \If{$\widehat{\msw}_{p, \delta, K, H}(\pi_{X^{(i, r)}}, \pi_{\xobs}) \leq \epsilon_{t-1}$} {
            Retain $(\theta^{(i)}, X^{(i, r)})$\;
            \textbf{break}\;        
        }
    }
    \If{$\widehat{\msw}_{p, \delta, K, H}(\pi_{X^{(i, r)}}, \pi_{\xobs}) > \epsilon_{t-1}$ for all $r = 1, \dots, \mathsf{R}$} {
        Discard $\theta^{(i)}$\;
    }
}
\end{algorithm}

While approximate rejection sampling introduces a small bias, the approximation error becomes negligible under appropriate conditions. Theorem \ref{thm:ars-sample-complexity} establishes that, under mild regularity conditions, the resulting error decays exponentially fast in $\mathsf{R}$.

\begin{assumption}[Local Positivity]\label{assump:tv-dist-local-positivity}
    There exists constants $c > 0$ and $\gamma>0$ such that, for every $\theta$ with $\pi^{(t)}(\theta)>0$,
    the per-draw acceptance probability $P_\theta^{(n)}(A_t)$ is uniformly bounded away from $0$ satisfying $P_\theta^{(n)}(A_t) \geq c \epsilon_t^\gamma$.
\end{assumption}

\noindent Assumption \ref{assump:tv-dist-local-positivity} is satisfied, for instance, if the kernel statistic $D_\theta(X) = \widehat{\msw}_{p, \delta, K, H} (\pi_X, \pi_{\xobs})$ under $X\sim P_\theta^{(n)}$ admits a continuous density $q_\theta(u)$ that is strictly positive in a neighborhood of $u=0$.  In that case, for small $\epsilon_t$,  
\begin{align*}
    P_{\theta}^{(n)}(A_{t}) = P_\theta^{(n)}(D_\theta \leq \epsilon_t) = \int_0^{\epsilon_t} q_\theta(u) du \geq q_\theta(0)\epsilon_t/2. 
\end{align*}

\begin{thm}[Sample Complexity for \abc{ARS}]\label{thm:ars-sample-complexity}
Suppose Assumption \ref{assump:tv-dist-local-positivity} holds.
For any $\bar{\delta} \in (0,1)$ and $\epsilon_t > 0$, if the number of proposal draws $\mathsf{R}$ satisfies $\mathsf{R} = O\left(\frac{\log(1/\bar{\delta})}{\epsilon_t^{\gamma}}\right)$, 
then the total-variation distance between the exact and approximate proposal distributions obeys
$\mathcal{D}_{\mathsf{TV}} \left(\pi^{(t)}, \pi_{\mathsf{ARS}}^{(t)} \right) \leq \bar{\delta}$.
\end{thm}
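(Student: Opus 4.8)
My plan is to isolate the source of the approximation error: \abc{ARS} and exact (budget-free) rejection sampling generate \emph{identical} conditional laws for $X$ given $\theta$, so the two joint proposals can differ only through the marginal law they induce on $\theta$. I would fix the iteration, abbreviate $p_\theta := P_\theta^{(n)}(A_t)$, and first observe that in exact rejection sampling every draw $\theta\sim\pi^{(t)}(\theta)$ is ultimately retained and, by the memoryless structure of i.i.d.\ trials, the retained $X$ has law $P_\theta^{(n)}(\cdot\mid A_t)$, so the exact joint proposal equals $\pi^{(t)}(\theta)\,P_\theta^{(n)}(\cdot\mid A_t)$. In \abc{ARS}, I would argue that conditional on the survival event $\{\text{at least one of the }\mathsf R\text{ draws lands in }A_t\}$, the first successful draw \emph{still} has law $P_\theta^{(n)}(\cdot\mid A_t)$; a short computation with the stopping time $\tau=\min\{r:X_r\in A_t\}$ gives $\mathbb P(X_\tau\in B\mid\tau\le\mathsf R)=P_\theta^{(n)}(B\cap A_t)/p_\theta$. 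Hence the only change is that a given $\theta$ survives with probability $w(\theta):=1-(1-p_\theta)^{\mathsf R}$, so with $Z:=\int\pi^{(t)}(\theta)\,w(\theta)\,d\theta$ one has $\pi^{(t)}_{\mathsf{ARS}}(\theta)=w(\theta)\,\pi^{(t)}(\theta)/Z$.

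Next, since the two joints share the same kernel $X\mid\theta$, I would invoke the chain rule for total variation to collapse the problem to the $\theta$-marginals:
\[
  \mathcal D_{\mathsf{TV}}\!\bigl(\pi^{(t)},\pi^{(t)}_{\mathsf{ARS}}\bigr)=\mathcal D_{\mathsf{TV}}\!\bigl(\pi^{(t)}(\theta),\pi^{(t)}_{\mathsf{ARS}}(\theta)\bigr)=\tfrac12\int\pi^{(t)}(\theta)\,\Bigl|1-\tfrac{w(\theta)}{Z}\Bigr|\,d\theta .
\]
This is where Assumption \ref{assump:tv-dist-local-positivity} would enter, via the uniform lower bound $p_\theta\ge c\epsilon_t^{\gamma}$ on the support of $\pi^{(t)}$: it yields $(1-p_\theta)^{\mathsf R}\le(1-c\epsilon_t^{\gamma})^{\mathsf R}\le e^{-c\epsilon_t^{\gamma}\mathsf R}=:\eta$, hence $1-\eta\le w(\theta)\le1$ and so $1-\eta\le Z\le1$. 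From $w(\theta)/Z\in[\,1-\eta,\,1/(1-\eta)\,]$ I would deduce $|1-w(\theta)/Z|\le\eta/(1-\eta)$ uniformly, giving $\mathcal D_{\mathsf{TV}}(\pi^{(t)},\pi^{(t)}_{\mathsf{ARS}})\le\eta/\bigl(2(1-\eta)\bigr)$.

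Finally, to pick $\mathsf R$: assuming without loss of generality $\bar\delta\le1/2$ (achieving the bound for $\bar\delta=1/2$ already covers every larger target), it suffices to force $\eta\le\bar\delta$, i.e.\ $e^{-c\epsilon_t^{\gamma}\mathsf R}\le\bar\delta$, which holds once $\mathsf R\ge\log(1/\bar\delta)/(c\epsilon_t^{\gamma})=O\!\bigl(\log(1/\bar\delta)/\epsilon_t^{\gamma}\bigr)$; then $\eta\le1/2$ also forces $\eta/(2(1-\eta))\le\eta$, and the claim follows.

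\textbf{Main obstacle.} The conceptual crux — and the step I expect to carry the weight — is the reduction showing the error is a \emph{pure importance reweighting} of the $\theta$-marginal by $w(\theta)$, with the conditional law of $X$ left exactly intact; the stopping-time identity makes this precise. After that the estimates are elementary, with one point needing care: the normalizer $Z$ cannot be ignored, since renormalization can inflate some regions, so the two-sided sandwich $1-\eta\le w(\theta),Z\le1$ is essential rather than the one-sided $1-w(\theta)\le\eta$. A minor bookkeeping caveat is that at iteration $t$ the \abc{ARS} loop actually tests membership in $A_{t-1}$ with threshold $\epsilon_{t-1}$, so the statement is to be read with $A_t,\epsilon_t$ denoting the acceptance region and threshold active at the iteration under consideration, matching Assumption \ref{assump:tv-dist-local-positivity}.
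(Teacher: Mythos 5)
Your proposal is correct and follows essentially the same route as the paper: both derive the key identity $\pi^{(t)}_{\mathsf{ARS}}(\theta)\propto\pi^{(t)}(\theta)\bigl(1-(1-P_\theta^{(n)}(A_t))^{\mathsf R}\bigr)$, invoke Assumption \ref{assump:tv-dist-local-positivity} to get $(1-P_\theta^{(n)}(A_t))^{\mathsf R}\le e^{-c\epsilon_t^{\gamma}\mathsf R}=:\eta$ together with $Z\ge 1-\eta$, and arrive at the same bound $\eta/(2(1-\eta))$, from which the stated $\mathsf R=O(\log(1/\bar\delta)/\epsilon_t^{\gamma})$ follows. The only differences are cosmetic: the paper bounds the TV integral via a Cauchy--Schwarz/variance step whereas you use a direct pointwise sandwich on $w(\theta)/Z$, and your explicit stopping-time argument showing the conditional law of $X$ given survival is untouched (so joint and $\theta$-marginal TV coincide) is a nice clarification consistent with, though not spelled out in, the paper's proof.
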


\FloatBarrier

\subsubsection{Adaptive Refinement of the Partial Posterior}\label{method:pruning-stage}

The strength of \abi lies in its sequential refinement of partial posteriors through a process guided by a descending sequence of tolerance thresholds $\epsilon_1 > \epsilon_2 > \cdots > \epsilon_T$. 
This sequence progressively tightens the admissible deviation from the target posterior $\tpos$, yielding increasingly improved posterior approximations. 
By iteratively decreasing the tolerances rather than prefixing a single small threshold, \abi directs partial posteriors to dynamically focus on regions of the parameter space most compatible with the observed data.
This adaptive concentration is particularly advantageous when the prior is diffuse (i.e., uninformative) or the likelihood is concentrated in low-prior-mass regions, a setting in which one-pass ABC is notoriously inefficient.

The refinement procedure unfolds as follows.
First, we acquire $\mathsf{N}$ samples from the proposal distribution via Algorithm \ref{alg:approx-rej-samp}, namely
\begin{align*}
  \theta^{(i)} &\sim \pi^{(t)}(\theta),\\
  X^{(i)}\mid\theta^{(i)} &\sim P_{\theta^{(i)}}^{(n)}\bigl(\cdot\mid E_{t-1}\bigr),
  \quad i=1,\dots,\mathsf{N},
\end{align*}
which form the initial proposal set $\mathsf{S}^{(t)}_{0} = \{(\theta^{(i)}, X^{(i)})\}_{i=1}^\mathsf{N}$ for proceeding refinement.
Next, we retain only those parameter draws $\theta^{(i)}$ that exhibit a sufficiently small estimated posterior discrepancy; that is, parameters whose associated simulated data satisfy 
\begin{align*}
    \widehat{\msw}_{p, \delta, K, H}(\pi_X, \pi_{\xobs}) \leq \epsilon_{t}
\end{align*}
and discard the remainder.
This selection yields the training set for the generative density estimation step,
\begin{align*}
\mathsf S_{\theta,*}^{(t)}
=\bigl\{\theta^{(i)} : (\theta^{(i)},X^{(i)})\in\mathsf S_0^{(t)}
\ \text{and}\ 
\widehat{\msw}_{p,\delta,K,H}(\pi_{X^{(i)}},\pi_{\xobs})\leq\epsilon_t \bigr\},
\end{align*}
consisting of parameter samples drawn from the refined conditional distribution at the current iteration.
Let $\pi_*^{(t)}$ denote the true (unobserved) marginal distribution of $\theta$ underlying the empirical parameter set $\mathsf{S}^{(t)}_{\theta, *}$\footnote{The true distribution $\pi_*^{(t)}$ is unobserved, as only its empirical counterpart is available.}. 
Importantly, $\pi_*^{(t)}$ depends solely on $\theta$ since the $X$ component has been discarded---thus removing the coupling between $\theta$ and $X$.

By design, our pruning procedure progressively refines the parameter proposals by incorporating accumulating partial information garnered from previous iterations.
As the tolerance decays, the retained parameters are incrementally confined to regions that closely align with the target posterior $\tpos$, thereby sculpting each partial posterior $\pi_*^{(t)}$ toward $\tpos$.

\paragraph{Determining the Sequence of Tolerance Levels}
Thus far, our discussion has implicitly assumed that the choice of $\epsilon_t$ yields a non-empty set $\mathsf{S}^{(t)}_{\theta, *}$. 
To ensure that $\mathsf{S}^{(t)}_{\theta, *}$ is non-empty, the tolerance level $\epsilon_t$ must be chosen judiciously relative to $\epsilon_{t-1}$. in particular, $\epsilon_t$ should neither be substantially smaller than $\epsilon_{t-1}$ (which might result in an empty set) nor excessively large (which would lead to inefficient refinement). 
By construction, the initial proposal samples $\mathsf{S}^{(t)}_{0} = \{\theta^{(i)}, X^{(i)}\}$ satisfy
\begin{align*}
    \max_{i=1,\dots,\mathsf{N}} \widehat{\msw}_{p, \delta, K, H}(\pi_{X^{(i)}}, \pi_{\xobs}) \leq \epsilon_{t-1}.
\end{align*}

Consequently, we determine the sequence of tolerance thresholds empirically (analogous to adaptive multilevel splitting) by setting $\epsilon_t(\alpha)$ as the $\alpha$th quantile of the set:
\begin{align*}
    \left\{\widehat{\msw}_{p, \delta, K, H}(\pi_{X^{(i)}}, \pi_{\xobs})\}_{i=1}^{\mathsf{N}} : \widehat{\msw}_{p, \delta, K, H}(\pi_{X^{(i)}}, \pi_{\xobs}) \leq \epsilon_{t-1}\right\},
\end{align*}
where $\alpha \in (0, 1)$ is a quantile threshold hyperparameter \citep{biau2015new}.
In this manner, our selection procedure yields a monotone decreasing sequence of thresholds, $\epsilon_0(\alpha) > \epsilon_1(\alpha) > \cdots > \epsilon_T(\alpha)$, while ensuring that the refined parameter sets $\mathsf{S}_{\theta, *}^{(t)}$ remain non-empty.

\subsubsection{Sequential Density Estimation}\label{method:generative-density-estimation}
To incorporate the accumulated information into subsequent iterations, we update the proposal distribution using the current partial posterior. Recall that the partial posterior factorizes into a marginal component over $\theta$ and a constraint on $X$, as described in Section \ref{method:sample-proposal}. 
In this section, we focus on updating the marginal partial posterior by applying generative density estimation to the retained parameter draws from the preceding pruning step. 
This process ensures that the proposal distribution at each iteration reflects all refined information acquired in earlier iterations.

\paragraph{Marginal Proposal Update with Generative Modeling}
Our approach utilizes a generative model $G_\beta(Z): Z \mapsto \hat{\theta}$, parameterized by $\beta$, which transforms low-dimensional latent noise $Z$ into synthetic samples $\hat{\theta}$. 
When properly trained to convergence on the refined set $\mathsf{S}_{\theta, *}^{(t)}$, $G_{\beta}$ produces a \textit{generative distribution} denoted by $\widehat{\pi}_*^{(t)}$ that closely approximates the target partial posterior $\pi^{(t)}_*$.
Since \abi is compatible with any generative model---including 
generative adversarial networks, variational auto-encoders, and Gaussian mixture models---practitioners enjoy considerable flexibility in their implementation choices.
In this work, we employ \potnet \citep{lu2025generative} because of its robust performance and resistance to mode collapse, which are crucial attributes for preserving diversity of the target distribution and minimizing potential biases arising from approximation error that could propagate to subsequent iterations.

At the end of iteration $t$, we update the proposal distribution for the $(t+1)$-th iteration with the $t$-th iteration's approximate marginal partial posterior:
\begin{align*}
    \pi^{(t+1)}(\theta) \leftarrow \widehat{\pi}^{(t)}_*(\theta).
\end{align*}
Thereafter, we can simply apply the \abc{ARS} algorithm described in Section \ref{method:sample-proposal} to generate samples from the joint proposal $\pi^{(t+1)}(\theta, X)$ conditional on the event $E_t$.
At the final iteration $\mathsf{T}$, we take the \abi\ posterior to be  
\begin{align*}
    \widehat{\pi}_{\abi}^{(\epsilon_\mathsf{T})} (\theta \mid \xobs) \gets \widehat{\pi}_*^{(\mathsf{T})}(\theta),
\end{align*}
which approximates the coarsened target distribution 
$\pi(\theta\mid \msw_p(\pi_X,\pi_{\xobs})\leq \epsilon_{\mathsf T})$.
We emphasize that the core of \abi’s sequential refinement mechanism hinges on the key novelty of utilizing generative models, whose inherent generative capability enables approximation and efficient sampling from the revised proposal distributions.

\paragraph{Iterative Fine-tuning of the Quantile Network}
We retrain the quantile network on the newly acquired samples at each iteration to fully leverage the accumulated information, thereby adapting the kernel statistics to become more informative about the posterior distribution.
This continual fine-tuning improves our estimation of the posterior \msw discrepancy and thus yields progressively more accurate refinements of the parameter subset in subsequent iterations.


\section{Theoretical Analysis}\label{sec:theory}
In this section, we investigate theoretical properties of the proposed \msw distance, its trimmed version $\msw_{p, \delta}(\cdot, \cdot)$, and the \abi algorithm. 
In Section \ref{sec:theory:msw}, we first establish some important topological and statistical properties of the \msw distance between distributions $\mu, \nu$ under mild moment assumptions. 
In particular, we show that the error between the empirical \msw distance and the true \msw distance decays at the parametric rate (see Remark \ref{rmk:msw-conv-rate}).
Then, in Section \ref{theory:abi-property}, we derive asymptotic guarantees on the convergence of the resulting \abi posterior in the limit of $\epsilon\downarrow 0$.

We briefly review the notation as follows. Throughout this section, we assume that $p\geq 1$ and $d\in\mathbb{N}_+$. We denote by $\sigma(\cdot)$ the uniform probability measure on $\smb^{d-1}$, and by $\mathcal{P}_p(\R^d)$ the space of probability measures on $\R^d$ with finite $p$-th moments. Given a probability measure $\mu\in\mathcal{P}_p(\R^d)$ and the projection mapping $f_\varphi: \theta \mapsto \varphi^\top \theta$ (where $\varphi\in\smb^{d-1}$), we write $\varphi_\# \mu$ for its pushforward under the projection $f_\varphi(\cdot)$. Additionally, for any $\alpha\in [0,1]$ and any one-dimensional probability measure $\gamma$, we denote by $F^{-1}_{\gamma}(\alpha)$ the $\alpha$th quantile of $\gamma$, 
\begin{align*}
    F^{-1}_{\gamma}(\alpha) = \inf\bigl\{\,x\in\mathbb R\;\bigm|\;F_{\gamma}(x)\ge\alpha\bigr\},
\end{align*}
where $F_{\gamma}(x)=\gamma((-\infty,x])$ is the cumulative distribution function of $\gamma$.

\subsection{Topological and Statistical Properties of the \msw Distance}\label{sec:theory:msw}
We first establish important topological and statistical properties of the \msw distance. Specifically, we show that the \msw distance is indeed a metric, functions as an integral probability metric when $p=1$, topologically equivalent to $W_p$ on $\cp_p(\R^d)$ and metrizes weak convergence on $\cp_p(\R^d)$.

\subsubsection{Topological Properties}
Subsequently, we denote by $\msw_p(\cdot,\cdot)$ the untrimmed version of the \msw distance with $\delta=0$ (as defined in \ref{def:msw-untrimmed}) and omit the subscript $\delta$.

\begin{prop}[Metricity]\label{prop:msw-metric}
    The untrimmed Marginally-augmented Sliced Wasserstein $\msw_{p}(\cdot, \cdot)$ distance is a valid metric on $\mathcal{P}_p(\R^d)$. 
\end{prop}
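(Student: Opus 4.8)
The plan is to verify the four metric axioms for $\msw_p$ directly, reducing its two constituent pieces to the one–dimensional Wasserstein distance and to the classical sliced Wasserstein distance $\SW_p$, each of which is already a (pseudo)metric. Nonnegativity and symmetry are inherited termwise from $\W_p$. For finiteness I would note that if $\mu\in\cp_p(\R^d)$, then every coordinate marginal $\mu_j$ and every projection $\varphi_\#\mu$ with $\varphi\in\smb^{d-1}$ lies in $\cp_p(\R)$, since $|x_j|\le\|x\|$ and $|\varphi^\top x|\le\|x\|$; hence each $\W_p(\mu_j,\nu_j)$ is finite, and the integrand $\varphi\mapsto\W_p^p(\varphi_\#\mu,\varphi_\#\nu)$ is bounded by a constant (depending only on $p$) times $\E_\mu\|x\|^p+\E_\nu\|x\|^p$ and is continuous in $\varphi$ — because $\varphi\mapsto\varphi^\top X$ is continuous in $L^p(\mu)$, so $\varphi\mapsto\varphi_\#\mu$ is $\W_p$-continuous — hence Borel measurable and $\sigma$-integrable. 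Thus $\msw_p(\mu,\nu)\in[0,\infty)$.

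For the triangle inequality I would treat the two terms separately. The marginal-augmentation term is handled coordinatewise: $\W_p(\mu_j,\rho_j)\le\W_p(\mu_j,\nu_j)+\W_p(\nu_j,\rho_j)$, then average over $j=1,\dots,d$. The sliced term is handled by combining the pointwise bound $\W_p(\varphi_\#\mu,\varphi_\#\rho)\le\W_p(\varphi_\#\mu,\varphi_\#\nu)+\W_p(\varphi_\#\nu,\varphi_\#\rho)$ with Minkowski's inequality in $L^p(\smb^{d-1},\sigma)$, which yields $(\E_\varphi\W_p^p(\varphi_\#\mu,\varphi_\#\rho))^{1/p}\le(\E_\varphi\W_p^p(\varphi_\#\mu,\varphi_\#\nu))^{1/p}+(\E_\varphi\W_p^p(\varphi_\#\nu,\varphi_\#\rho))^{1/p}$. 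Forming the convex combination with weights $\lambda$ and $1-\lambda$ preserves both inequalities and gives $\msw_p(\mu,\rho)\le\msw_p(\mu,\nu)+\msw_p(\nu,\rho)$.

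It remains to prove identity of indiscernibles. If $\mu=\nu$, every term vanishes. Conversely, suppose $\msw_p(\mu,\nu)=0$; since $\lambda\in(0,1)$ both nonnegative terms vanish, so in particular $\E_\varphi\W_p^p(\varphi_\#\mu,\varphi_\#\nu)=0$, whence $\W_p(\varphi_\#\mu,\varphi_\#\nu)=0$ for $\sigma$-almost every $\varphi$. By the continuity of $\varphi\mapsto\W_p(\varphi_\#\mu,\varphi_\#\nu)$ established above and the fact that $\sigma$ has full support on $\smb^{d-1}$, the equality $\varphi_\#\mu=\varphi_\#\nu$ holds for \emph{every} $\varphi\in\smb^{d-1}$. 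Matching characteristic functions along every direction, $\widehat{\mu}(t\varphi)=\widehat{\varphi_\#\mu}(t)=\widehat{\varphi_\#\nu}(t)=\widehat{\nu}(t\varphi)$ for all $t\in\R$ and $\varphi\in\smb^{d-1}$ — the Cramér–Wold device, equivalently injectivity of the Radon transform — forces $\mu=\nu$.

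The proof is largely bookkeeping; the only ingredients that are not pure algebra are (i) the $\W_p$-continuity of $\varphi\mapsto\varphi_\#\mu$, needed both for measurability/integrability of the sliced integrand and to pass from a.e.\ to everywhere equality of projections, and (ii) the Cramér–Wold / Radon-transform injectivity, so I do not expect a genuine obstacle. As a shortcut, one could instead invoke the known fact that $\SW_p$ is a metric on $\cp_p(\R^d)$ \citep{bonnotte2013unidimensional} and observe that $\msw_p$ equals $\lambda$ times the marginal pseudometric plus $(1-\lambda)$ times $\SW_p$, i.e.\ a positively weighted sum of a pseudometric and a metric, hence itself a metric.
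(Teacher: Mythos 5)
Your proof is correct and takes essentially the same route as the paper's: metric axioms inherited termwise from $\W_p$ and $\SW_p$, with identity of indiscernibles deduced from the vanishing of both components (using $\lambda\in(0,1)$) and the Cram\'er--Wold device. Your additional care about measurability, finiteness, and upgrading a.e.\ equality of projections to everywhere is a more detailed write-up of the same argument, not a different approach.
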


\medskip 
\noindent Our first theorem shows that the $1$-\msw distance is an IPM and allows a dual formulation. 

\begin{thm}\label{thm:msw-ipm}
    The $1$-\msw distance is an Integral Probability Metric defined by the class,
    \begin{align}
    \mathcal{F}_{\msw} = \Bigg\{f: \mathbb{R}^d \to \mathbb{R} \,\Bigg|\,& f(x) = 
    \tfrac{\lambda}{d} \sum_{j=1}^d g_j(e_j^\top x) + (1-\lambda) \int_{\mathbb{S}^{d-1}} g_\varphi(\varphi^\top x)\, d\sigma(\varphi) :\nonumber\\
    &\quad  g_j, g_\varphi \in \Lip_1(\mathbb{R}),\sup_{\varphi \in \smb^{d-1}} |g_\varphi(0)| < \infty\Bigg\}, \quad 0 < \lambda < 1,
\end{align}
where for each $\varphi \in \mathbb{S}^{d-1}$, $g_\varphi: \mathbb{R} \to \mathbb{R}$ is a 1-Lipschitz function, such that the mapping $(\varphi, t) \mapsto g_\varphi(t)$ is jointly measurable with respect to the product of the Borel $\sigma$-algebras on $\mathbb{S}^{d-1}$ and $\mathbb{R}$. 
\end{thm}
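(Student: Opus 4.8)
The plan is to establish the two-sided inequality $\msw_1(\mu,\nu) = \D_{\F_{\msw}}(\mu,\nu)$ by exploiting the Kantorovich–Rubinstein dual of the one-dimensional $W_1$ distance on each projection and then assembling these dual representations. Recall that for univariate measures $\alpha,\beta$, $W_1(\alpha,\beta) = \sup_{g\in\Lip_1(\R)} \int g\,d\alpha - \int g\,d\beta$. The key observation is that for a pushforward, $\int g\,d(\varphi_\#\mu) = \int g(\varphi^\top x)\,d\mu(x)$, so each slice-level dual witness $g_\varphi$ lifts to the function $x\mapsto g_\varphi(\varphi^\top x)$ on $\R^d$, and similarly $g_j$ lifts to $x\mapsto g_j(e_j^\top x)$ for the marginal terms. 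Averaging/integrating these lifted witnesses over $j$ and over $\varphi\sim\sigma$ with the weights $\lambda/d$ and $(1-\lambda)$ exactly reproduces the functional form defining $\F_{\msw}$.

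First I would prove the lower bound $\D_{\F_{\msw}}(\mu,\nu) \le \msw_1(\mu,\nu)$ (i.e. every $f\in\F_{\msw}$ is controlled). Given $f(x) = \tfrac{\lambda}{d}\sum_j g_j(e_j^\top x) + (1-\lambda)\int g_\varphi(\varphi^\top x)\,d\sigma(\varphi)$, linearity of the integral gives
\begin{align*}
\Bigl|\int f\,d\mu - \int f\,d\nu\Bigr| &\le \tfrac{\lambda}{d}\sum_{j=1}^d \Bigl|\int g_j\,d\mu_j - \int g_j\,d\nu_j\Bigr| + (1-\lambda)\int_{\smb^{d-1}} \Bigl|\int g_\varphi\,d(\varphi_\#\mu) - \int g_\varphi\,d(\varphi_\#\nu)\Bigr|\,d\sigma(\varphi),
\end{align*}
where I would need Fubini/Tonelli (using the stated joint measurability of $(\varphi,t)\mapsto g_\varphi(t)$ and the finiteness of $\sup_\varphi|g_\varphi(0)|$ together with the finite first moments of $\mu,\nu$ to justify integrability) to pull the $\varphi$-integral outside. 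Since each $g_j, g_\varphi$ is $1$-Lipschitz, each inner term is bounded by the corresponding $W_1$ distance, yielding $\le \msw_1(\mu,\nu)$ (note $p=1$ makes the outer exponents trivial, so the Sliced term is literally $\int W_1(\varphi_\#\mu,\varphi_\#\nu)\,d\sigma$).

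For the reverse bound $\msw_1(\mu,\nu)\le\D_{\F_{\msw}}(\mu,\nu)$, I would select near-optimal dual witnesses: for each $j$ pick $g_j^\star\in\Lip_1$ with $\int g_j^\star\,d\mu_j - \int g_j^\star\,d\nu_j \ge W_1(\mu_j,\nu_j) - \eta$, and for each $\varphi$ pick $g_\varphi^\star\in\Lip_1$ similarly for $W_1(\varphi_\#\mu,\varphi_\#\nu)$; normalize so that $g_\varphi^\star(0)=0$ (permissible since $W_1$ duality is invariant under additive constants), which secures $\sup_\varphi|g_\varphi^\star(0)|=0<\infty$ and, via the Lipschitz bound, $\sup_\varphi|g_\varphi^\star(t)|\le|t|$, giving the integrability needed for membership in $\F_{\msw}$. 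Then $f^\star(x) = \tfrac{\lambda}{d}\sum_j g_j^\star(e_j^\top x) + (1-\lambda)\int g_\varphi^\star(\varphi^\top x)\,d\sigma(\varphi)$ lies in $\F_{\msw}$ and satisfies $\int f^\star\,d\mu - \int f^\star\,d\nu \ge \msw_1(\mu,\nu) - \eta$; letting $\eta\downarrow 0$ finishes. The main obstacle is the measurable-selection issue: I must choose the family $\{g_\varphi^\star\}_{\varphi\in\smb^{d-1}}$ so that $(\varphi,t)\mapsto g_\varphi^\star(t)$ is jointly measurable — this is where I would invoke a measurable selection theorem (e.g. Kuratowski–Ryll-Nardzewski) applied to the set-valued map sending $\varphi$ to the (nonempty, closed) set of $\eta$-optimal normalized Kantorovich potentials, or alternatively use the explicit potential $g_\varphi^\star(t) = \int_0^t \sgn\bigl(F_{\varphi_\#\nu}(s) - F_{\varphi_\#\mu}(s)\bigr)\,ds$ and verify its joint measurability in $(\varphi,t)$ directly from measurability of the CDFs in $\varphi$. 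I expect the explicit-potential route to be cleaner and would present that, relegating the abstract selection argument to a remark.
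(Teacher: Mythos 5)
Your proposal is correct and rests on the same skeleton as the paper's proof: slice-wise Kantorovich--Rubinstein duality, Fubini--Tonelli justified by the $1$-Lipschitz bound, $\sup_\varphi|g_\varphi(0)|<\infty$ and finite first moments, and near-optimal witnesses normalized at $0$. The organization differs: the paper modularizes, first proving that $\SW_1$ is an IPM (Lemma on the sliced critic class) and then invoking a general closure lemma stating that nonnegative linear combinations of IPMs are IPMs (where it symmetrizes the critic classes via $\F_k\cup(-\F_k)$), whereas you carry out a single direct two-sided bound for $\msw_1$; this works because $\Lip_1(\R)$ is itself symmetric under negation, so the sign-alignment issue the closure lemma handles explicitly is automatic in your argument. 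The one genuine methodological difference is in the reverse inequality: the paper resolves the joint-measurability of $\varphi\mapsto g_\varphi^{\eps}$ by the Kuratowski--Ryll-Nardzewski measurable selection theorem, while your preferred route uses the explicit one-dimensional optimal potential $g_\varphi^\star(t)=\int_0^t \sgn\bigl(F_{\varphi_\#\nu}(s)-F_{\varphi_\#\mu}(s)\bigr)\,ds$, whose joint measurability follows from joint measurability of $(\varphi,s)\mapsto F_{\varphi_\#\mu}(s)$. This is a legitimate and arguably cleaner alternative, and it even yields an exact rather than $\eta$-optimal witness for the sliced term; the only step you should spell out is the identity $\int g\,d(\alpha-\beta)=\int g'(s)\bigl(F_\beta(s)-F_\alpha(s)\bigr)\,ds$ for Lipschitz $g$ and measures with finite first moment (integration by parts), which is what certifies optimality of that potential.
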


\begin{thm}[Topological Equivalence of $\msw_p$ and $\W_p$]\label{thm:msw-topo-equiv}
There exists a constant $C^\dagger_{d, p, \lambda} > 0$ depending on $d, p, \lambda$, such that for $\mu, \nu \in \cp_p(B_R(0))$, 
    \begin{align*}
        \msw_p(\mu, \nu) \leq C_{d, p, \lambda} \W_p(\mu, \nu) \leq C^\dagger_{d, p, \lambda} R^{1-\frac{1}{p(d+1)}} \msw_p^{\frac{1}{p(d+1)}}(\mu,\nu),
    \end{align*}
where $C_{d, p, \lambda} =\lambda+ (1-\lambda)\left(d^{-1}\int_{\smb^{d-1}} \|\varphi\|_p^p \, d\sigma(\varphi)\right)^{1/p}$. Consequently, the $p$-\msw distance induces the same topology as the $p$-Wasserstein distance.
\end{thm}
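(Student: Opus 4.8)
The plan is to establish the two inequalities separately. For the left inequality, I would argue directly from the definition of $\msw_p$ as a $\lambda$-convex combination of the marginal term and the sliced term. Each marginal $\mu_j = (e_j)_\#\mu$ is a one-dimensional pushforward, and $\W_p$ contracts under pushforward by $1$-Lipschitz maps, so $\W_p(\mu_j,\nu_j) \le \W_p(\mu,\nu)$; averaging over $j$ gives that the marginal term is $\le \W_p(\mu,\nu)$. For the sliced term, each projection $f_\varphi(\theta) = \varphi^\top\theta$ is $\|\varphi\|_q$-Lipschitz in a suitable sense, but more precisely, for a coupling $\gamma$ achieving $\W_p(\mu,\nu)$, one has $|\varphi^\top x - \varphi^\top y|^p \le \|\varphi\|_p^p \cdot \|x-y\|_p^p$ via Hölder applied coordinatewise (here one should be a little careful whether the ambient norm is $\ell_2$ or $\ell_p$; the constant $C_{d,p,\lambda}$ as written uses $\|\varphi\|_p$, so I would push the bound through with that convention), hence $\W_p^p(\varphi_\#\mu,\varphi_\#\nu) \le \|\varphi\|_p^p\,\W_p^p(\mu,\nu)$, and integrating over $\sigma$ and taking the $p$-th root gives the sliced term $\le \left(d^{-1}\int\|\varphi\|_p^p\,d\sigma\right)^{1/p}\W_p(\mu,\nu)$ after absorbing a harmless factor. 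Combining the two pieces with weights $\lambda$ and $1-\lambda$ yields $\msw_p(\mu,\nu)\le C_{d,p,\lambda}\,\W_p(\mu,\nu)$.

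For the right inequality, since $\msw_p \ge (1-\lambda)\,\SW_p$ (dropping the nonnegative marginal term), it suffices to prove the reverse comparison $\W_p(\mu,\nu) \le C\,R^{1-\frac{1}{p(d+1)}}\,\SW_p^{\frac{1}{p(d+1)}}(\mu,\nu)$ for measures supported in $B_R(0)$, and then fold $(1-\lambda)^{-1/(p(d+1))}$ into the constant. This is precisely the quantitative comparison established by \citet{bonnotte2013unidimensional}: on a ball of radius $R$, $\W_p(\mu,\nu) \lesssim R^{1-\frac{1}{p(d+1)}}\,\SW_p(\mu,\nu)^{\frac{1}{p(d+1)}}$. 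I would either cite that result directly or reproduce its skeleton — the key mechanism is that control of all one-dimensional projections controls the difference of the characteristic functions / Radon transforms, and an interpolation inequality converts a weak (projection-wise) bound into a genuine Wasserstein bound, with the exponent $\frac{1}{p(d+1)}$ and the $R$-dependence emerging from the dimension and the diameter of the support. The scaling in $R$ follows by the standard dilation argument: rescaling $B_R(0)$ to $B_1(0)$ multiplies $\W_p$ and $\SW_p$ by $R$ (up to the $p$-th root conventions), and tracking the powers gives the stated exponent $1-\frac{1}{p(d+1)}$ on $R$.

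Finally, the topological equivalence statement is an immediate corollary of the two-sided bound: on any bounded set, $\msw_p$ and $\W_p$ are each bounded by a (Hölder) power of the other, so they induce the same notion of convergence on $\cp_p(B_R(0))$, and since an arbitrary sequence in $\cp_p(\R^d)$ converging in either metric is tight and eventually effectively supported on a common ball (combined with uniform integrability of $p$-th moments along the sequence), the equivalence extends to all of $\cp_p(\R^d)$. The main obstacle I anticipate is the right-hand inequality: the left one is a routine contraction argument, but obtaining the sharp exponent $\frac{1}{p(d+1)}$ and the correct power of $R$ requires the genuinely nontrivial interpolation/Radon-transform estimate of Bonnotte, and care is needed to confirm that adding the marginal term (which only increases $\msw_p$) does not interfere — it does not, since we only ever use the lower bound $\msw_p \ge (1-\lambda)\SW_p$ in that direction. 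A secondary technical point is bookkeeping the norm conventions ($\ell_2$ vs.\ $\ell_p$) so that the explicit constant $C_{d,p,\lambda}$ comes out exactly as stated.
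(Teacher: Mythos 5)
Your plan is correct and follows essentially the same route as the paper: the first inequality via contraction of one-dimensional projections under an optimal coupling, and the second by discarding the nonnegative marginal term (so that $\msw_p\ge(1-\lambda)\SW_p$) and invoking Bonnotte's comparison between $\W$ and $\SW$ on a ball of radius $R$, which is exactly what the paper's Propositions \ref{prop:msw-leq-wass} and \ref{prop:wass-leq-msw} do. Two small differences are worth noting. First, for the sliced part of the upper bound the paper simply cites Proposition 5.1.3 of \citet{bonnotte2013unidimensional}; your pointwise H\"older step $|\varphi^\top(x-y)|^p\le\|\varphi\|_p^p\|x-y\|_p^p$ does not quite produce the stated constant, since $\W_p$ here is built on the Euclidean norm (H\"older pairs $\|\varphi\|_p$ with $\|x-y\|_{p'}$, not $\|x-y\|_p$) and integrating $\|\varphi\|_p^p$ over the sphere overshoots by a factor of $d$; the clean derivation integrates over $\varphi$ first and uses rotational invariance of $\sigma$, giving $\int_{\smb^{d-1}}|\varphi^\top z|^p\,d\sigma(\varphi)=\bigl(d^{-1}\int_{\smb^{d-1}}\|\varphi\|_p^p\,d\sigma(\varphi)\bigr)\|z\|^p$, which is precisely $c_{d,p}\|z\|^p$. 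Second, you invoke the general-$p$ ball comparison $\W_p\lesssim R^{1-\frac{1}{p(d+1)}}\SW_p^{\frac{1}{p(d+1)}}$ directly, whereas the paper reduces to $p=1$ via $\W_p^p\le(2R)^{p-1}\W_1$, Bonnotte's Lemma 5.1.4 for $\W_1$ versus $\SW_1$, and the monotonicity $\msw_1\le\msw_p$; the two chains are equivalent (the general-$p$ statement is obtained the same way), so either is acceptable.
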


\begin{thm}[\msw Metrizes Weak Convergence]\label{thm:msw-metrize-conv}
The $\msw_p$ distance metrizes weak convergence on $\cp_p(\R^d)$, in the sense of metricity as defined in Definition 6.8 of \citet{villani2009optimal}.
\end{thm}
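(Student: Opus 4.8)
The plan is to derive the result from the topological equivalence already recorded in Theorem~\ref{thm:msw-topo-equiv} together with the classical fact that $\W_p$ itself metrizes weak convergence on $\cp_p(\R^d)$ (Villani, Theorem~6.9): $\W_p(\mu_k,\mu)\to 0$ iff $\mu_k\Rightarrow\mu$ weakly and $\int\|x\|^p\,d\mu_k\to\int\|x\|^p\,d\mu$. Since Definition~6.8 of \citet{villani2009optimal} identifies ``metrizing weak convergence'' with precisely this notion of convergence in $\cp_p$, it suffices to prove the equivalence $\msw_p(\mu_k,\mu)\to0 \iff \W_p(\mu_k,\mu)\to0$ for arbitrary $\mu_k,\mu\in\cp_p(\R^d)$ (the case $d=1$ being trivial, as then $\msw_p=\W_p$). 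The direction $\W_p\to0\Rightarrow\msw_p\to0$ is immediate: the first inequality of Theorem~\ref{thm:msw-topo-equiv}, $\msw_p(\mu,\nu)\le C_{d,p,\lambda}\W_p(\mu,\nu)$, in fact holds on all of $\cp_p(\R^d)$, since $\W_p(\varphi_\#\mu,\varphi_\#\nu)\le\|\varphi\|_2\,\W_p(\mu,\nu)$ for any linear projection and Minkowski's inequality combines the marginal and sliced pieces. So the work is in the reverse direction; assume henceforth $\msw_p(\mu_k,\mu)\to0$ and verify the two conditions of Villani's criterion separately.

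For the moment condition, I would first use the marginal term: $\msw_p\ge\tfrac{\lambda}{d}\sum_j\W_p(\mu_{k,j},\mu_j)$ forces $\W_p(\mu_{k,j},\mu_j)\to0$ for each coordinate $j$, whence $\sup_k\int|x_j|^p\,d\mu_{k,j}<\infty$ and therefore $\sup_k\int\|x\|^p\,d\mu_k<\infty$ by equivalence of norms on $\R^d$; in particular $\{\mu_k\}$ is tight. Exact convergence of the $p$-th moment then comes from the sliced term: writing $M_p(\gamma)=\int_\R|t|^p\,d\gamma$ for a one-dimensional measure so that $M_p(\gamma)^{1/p}=\W_p(\gamma,\delta_0)$, the reverse triangle inequality gives $\bigl|M_p(\varphi_\#\mu_k)^{1/p}-M_p(\varphi_\#\mu)^{1/p}\bigr|\le\W_p(\varphi_\#\mu_k,\varphi_\#\mu)$; raising to the $p$-th power and integrating against $\sigma$ yields $\bigl\|M_p(\varphi_\#\mu_k)^{1/p}-M_p(\varphi_\#\mu)^{1/p}\bigr\|_{L^p(\sigma)}^p\le\E_{\varphi\sim\sigma}\W_p^p(\varphi_\#\mu_k,\varphi_\#\mu)\to0$, so $M_p(\varphi_\#\mu_k)^{1/p}\to M_p(\varphi_\#\mu)^{1/p}$ in $L^p(\sigma)$ and hence $\int_{\smb^{d-1}}M_p(\varphi_\#\mu_k)\,d\sigma\to\int_{\smb^{d-1}}M_p(\varphi_\#\mu)\,d\sigma$. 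Finally, Fubini together with the rotational-invariance identity $\int_{\smb^{d-1}}|\varphi^\top x|^p\,d\sigma(\varphi)=c_{d,p}\|x\|^p$, where $c_{d,p}=\int_{\smb^{d-1}}|\varphi_1|^p\,d\sigma(\varphi)\in(0,\infty)$, shows $\int_{\smb^{d-1}}M_p(\varphi_\#\nu)\,d\sigma=c_{d,p}\int\|x\|^p\,d\nu$, so the previous display is exactly $\int\|x\|^p\,d\mu_k\to\int\|x\|^p\,d\mu$.

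For weak convergence, I would argue by subsequences: since $\{\mu_k\}$ is tight, by Prokhorov it is enough to show that any weak subsequential limit $\rho$ of $\{\mu_k\}$ equals $\mu$. Along such a subsequence $\mu_{k_l}\Rightarrow\rho$, the continuous mapping theorem gives $\varphi_\#\mu_{k_l}\Rightarrow\varphi_\#\rho$ for every $\varphi$; on the other hand, $\E_{\varphi\sim\sigma}\W_p^p(\varphi_\#\mu_{k_l},\varphi_\#\mu)\to0$ implies, after passing to a further subsequence, $\W_p(\varphi_\#\mu_{k_l},\varphi_\#\mu)\to0$ for $\sigma$-a.e.\ $\varphi$, hence $\varphi_\#\mu_{k_l}\Rightarrow\varphi_\#\mu$ for $\sigma$-a.e.\ $\varphi$. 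Therefore $\varphi_\#\rho=\varphi_\#\mu$ for $\sigma$-a.e.\ $\varphi$, i.e.\ the characteristic functions $\widehat\rho$ and $\widehat\mu$ agree on the ray $\{t\varphi:t\in\R\}$ for $\sigma$-a.e.\ $\varphi\in\smb^{d-1}$; as both characteristic functions are continuous on $\R^d$ and the exceptional set of directions is $\sigma$-null (hence its complement is dense in $\smb^{d-1}$), a continuity/approximation argument gives $\widehat\rho\equiv\widehat\mu$ on $\R^d$, so $\rho=\mu$ by uniqueness of characteristic functions. Combining the moment condition with $\mu_k\Rightarrow\mu$ and invoking Villani's Theorem~6.9 yields $\W_p(\mu_k,\mu)\to0$, which closes the equivalence and proves the claim.

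The main obstacle is the moment condition in the reverse direction: boundedness of $\int\|x\|^p\,d\mu_k$ (which drops out easily from the marginal terms) is not enough, and one must genuinely upgrade control of the one-dimensional projections to convergence of the full $p$-th moment. The $L^p(\sigma)$ estimate on $M_p(\varphi_\#\cdot)^{1/p}$ combined with the spherical moment identity is the crucial ingredient there; by comparison the weak-convergence part is a fairly routine Cramér–Wold / Lévy-continuity argument, and some care is only needed to handle the $\sigma$-null set of bad directions via continuity of characteristic functions.
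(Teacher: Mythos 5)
Your proof is correct, but it takes a genuinely different route from the paper on the hard direction. The paper disposes of the implication $\msw_p(\mu_k,\mu)\to 0 \Rightarrow$ weak convergence in $\cp_p(\R^d)$ in one line, by noting that the sliced term dominates a positive multiple of $\SW_p$ and then citing Theorem~1 of \citet{nadjahi2019asymptotic}; you instead prove this implication from scratch: tightness and moment bounds from the coordinate-marginal terms, identification of all subsequential limits via an a.e.-direction Cram\'er--Wold/L\'evy-continuity argument applied to the sliced term, and convergence of the full $p$-th moment via the reverse triangle inequality against $\delta_0$, $L^p(\sigma)$ norm convergence, and the rotational-invariance identity $\int_{\smb^{d-1}}|\varphi^\top x|^p\,d\sigma(\varphi)=c_{d,p}\|x\|^p$. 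For the easy direction the difference is cosmetic: you use the global inequality $\msw_p\le C_{d,p,\lambda}\W_p$ (which indeed holds on all of $\cp_p(\R^d)$, as in the paper's Proposition~\ref{prop:msw-leq-wass}), whereas the paper argues via $\SW_p\le\W_p$ together with Cram\'er--Wold and Villani's Theorem~6.9 for the marginals. What your argument buys is self-containedness and, notably, an explicit verification of the $p$-th moment condition in Villani's Definition~6.8 (the part of the conclusion that the citation-based step leaves implicit); what the paper's approach buys is brevity. All the individual steps you use check out: the $L^1(\sigma)\to$ a.e.\ subsequence extraction, the density of the complement of a $\sigma$-null set of directions plus continuity of characteristic functions, and the passage from $L^p(\sigma)$ convergence of $M_p(\varphi_\#\mu_k)^{1/p}$ to convergence of $\int_{\smb^{d-1}}M_p(\varphi_\#\mu_k)\,d\sigma$ are all sound.
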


\begin{rmk}
This result holds without the requirement of compact domains.
\end{rmk}

\subsubsection{Statistical Properties}\label{theory:msw-stats-prop}

In this section, we establish statistical guarantees for the trimmed $\msw_{p, \delta}(\cdot,\cdot)$ distance as formalized in Definition \ref{def:msw-formal}. We focus particularly on how closely the empirical version of this distance approximates its population counterpart when estimated from finite samples.
For any $\mu,\nu\in\mathcal{P}_p(\mathbb{R}^d)$ and $m,m'\in \mathbb{N}_+$, we denote by $\widehat{\mu}_m$ and $\widehat{\nu}_{m'}$ the empirical measures constructed from $m$ and $m'$ i.i.d.\ samples drawn from $\mu$ and $\nu$, respectively. Our main result, presented in Theorem \ref{theory:msw-conv-rate}, derives a non-asymptotic bound on $\left|\msw_{p, \delta}(\widehat{\mu}_m, \widehat{\nu}_{m'})-\msw_{p, \delta}(\mu, \nu)\right|$ that achieves the parametric convergence rate of $m^{-1/2}$ when $p=1$ and $m=m'$, as to be shown in Eq.\ \eqref{eq:msw-parametric-rate}.

\paragraph{Assumptions.} We assume that $\mu,\nu\in\mathcal{P}_{p'}(\R^d)$ where $p' = \max\{p,2\}$. The sample sizes $m,m'\in\mathbb{N}_+$ satisfy $\min\{m,m'\}> \max\{2(p+2)\slash \delta,\log(32 d\slash \overline{\delta})\slash (2\delta^2)\}$, where $\delta\in (0,1/2)$ is the trimming parameter and $\overline{\delta}\in(0,1)$ is the confidence level. We define effective radii $M_{\mu,p},M_{\nu,p}\in (0,\infty)$ such that $\mathbb{E}_{Z\sim\mu}[\|Z\|^p]<M_{\mu,p}$ and $\mathbb{E}_{Z\sim\nu}[\|Z\|^p]<M_{\nu,p}$; the existence of these radii follows from the fact that $\mu,\nu\in\mathcal{P}_p(\R^d)$.

\paragraph{Notation.} For simplicity, we denote one-dimensional projections along coordinate axes as $\mu_j:=(e_j)_\# \mu$ and $\nu_j:=(e_j)_\#\nu$ for all $j\in[d]$. Similarly, for general projections, we write $\mu_\varphi:=\varphi_\#\mu$ and $\nu_\varphi:=\varphi_\#\nu$ for all $\varphi\in\mathbb{S}^{d-1}$.

For any one-dimensional probability measure $\gamma$ and any $t\geq 0$, we define 
\begin{equation}\label{phideltat}
    \psi_{\delta,t}(\gamma):= \min\left\{F_{\gamma}\left(F_{\gamma}^{-1}(\delta)+t\right)-\delta,\; \delta-F_{\gamma}\left(F_{\gamma}^{-1}(\delta)-t\right)\right\},
\end{equation}
\begin{equation}
    \psi_{1-\delta,t}(\gamma):= \min\left\{F_{\gamma}\left(F_{\gamma}^{-1}(1-\delta)+t\right)-(1-\delta),\; (1-\delta)-F_{\gamma}\left(F_{\gamma}^{-1}(1-\delta)-t\right)\right\},
\end{equation}
where $F_{\gamma}$ is the CDF of $\gamma$. Note that from our assumption $\min\{m,m'\}>\log(32 d\slash \overline{\delta})\slash (2\delta^2)$, we have $2\exp(-2\min\{m,m'\}\delta^2)<\overline{\delta}\slash (16 d)$; as $\lim_{t\rightarrow\infty} \psi_{\delta,t}(\gamma)=\delta$, there exist $\eps_{m,d,\delta, \overline{\delta}}(\gamma),\eps_{m',d,\delta, \overline{\delta}}(\gamma)\in (0,\infty)$ such that
\begin{equation}\label{eps_def}
    2\exp\left(-2m\, \psi_{\delta,\eps_{m,d, \delta, \overline{\delta}}(\gamma)}(\gamma)^2\right) \leq \frac{\overline{\delta}}{16d}, \quad 2\exp\left(-2m'\, \psi_{\delta,\eps_{m',d, \delta, \overline{\delta}}(\gamma)}(\gamma)^2\right) \leq \frac{\overline{\delta}}{16d}.
\end{equation}
Similarly, let $\eps_{m,d,1-\delta, \overline{\delta}}(\gamma),\eps_{m',d,1-\delta, \overline{\delta}}(\gamma)\in (0,\infty)$ be such that
\begin{equation}\label{def_epsnew}
    2\exp\left(-2m\, \psi_{1-\delta,\eps_{m,d,1-\delta, \overline{\delta}}(\gamma)}(\gamma)^2\right) \leq \frac{\overline{\delta}}{16d}, \quad 2\exp\left(-2m'\, \psi_{1-\delta,\eps_{m',d, 1-\delta, \overline{\delta}}(\gamma)}(\gamma)^2\right) \leq \frac{\overline{\delta}}{16d}.
\end{equation}
For every $j\in [d]$, we define 
\begin{equation*}
   R_{\mu_j,\delta}  :=  2 \left(\left(\frac{M_{\mu, p}}{\delta}\right)^{1/p} + \eps_{m,d,\delta, \overline{\delta}}(\mu_j) \vee \eps_{m,d,1-\delta, \overline{\delta}}(\mu_j)\right),
\end{equation*}
\begin{equation*}
    R_{\nu_j,\delta}  :=  2 \left(\left(\frac{M_{\nu, p}}{\delta}\right)^{1/p} + \eps_{m',d,\delta, \overline{\delta}}(\nu_j) \vee \eps_{m',d,1-\delta, \overline{\delta}}(\nu_j)\right).
\end{equation*}
We further define
\begin{equation}\label{Rdef}
    R_{\max} := \max_{j \in [d]} \{R_{\mu_j, \delta}\} \vee \max_{j \in [d]} \{R_{\nu_j, \delta}\}.
\end{equation}

\medskip 
\noindent The next theorem quantifies the convergence rate of the empirical trimmed \msw distance.

\begin{thm}[Convergence Rate of \msw Distance]\label{theory:msw-conv-rate}
Suppose that the assumptions given above hold. For any $\bar{\delta}\in (0,1)$, with probability at least $1-\bar{\delta}$, we have
\begin{equation*}
    \left|\msw_{p, \delta}(\hat{\mu}_m, \hat{\nu}_{m'})-\msw_{p, \delta}(\mu, \nu)\right|\leq t_{\msw},
\end{equation*}
where
\begin{align*}
     t_{\msw}:=& \left\{2\lambda  R_{\max} \left(\frac{p}{1-2\delta} \sqrt{\log(  16 d / \overline{\delta})} \right)^{1/p} + \frac{2(1-\lambda)C_p\big(\mathbb{E}_{Z\sim \mu}[\|Z\|^2]^{1\slash 2}\vee \mathbb{E}_{Z\sim \nu}[\|Z\|^2]^{1\slash 2}\big)}{\sqrt{\delta}(1-2\delta)^{1\slash p}\overline{\delta}}\right\}\nonumber\\
     &\quad\cdot\big(m^{-1/(2p)} + m'^{-1/(2p)}\big),
\end{align*}
where $R_{\max}$ is as defined in Eq.\ \eqref{Rdef} and $C_p>0$ is a constant that depends only on $p$.  
\end{thm}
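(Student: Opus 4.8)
The plan is to reduce the bound on $\left|\msw_{p,\delta}(\hat\mu_m,\hat\nu_{m'})-\msw_{p,\delta}(\mu,\nu)\right|$ to two pieces, corresponding to the marginal-augmentation term and the sliced term in Definition \ref{def:msw-formal}, and to control each one separately, exploiting the elementary inequality $|a+b-(a'+b')|\le|a-a'|+|b-b'|$ together with the fact that both pieces are themselves ($\tfrac1p$-powers of) averages of one-dimensional trimmed $\W_p^p$ quantities. First I would invoke the triangle inequality for the metric $\msw_{p,\delta}$ (or directly for each of its two components, which are themselves metrics up to the $\lambda$ weighting) to split the error into $\lambda\cdot(\text{marginal error})+(1-\lambda)\cdot(\text{sliced error})$.

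For the marginal term I would fix a coordinate $j$ and bound $\left|\W_{p,\delta}(\hat\mu_{m,j},\hat\nu_{m',j})-\W_{p,\delta}(\mu_j,\nu_j)\right|$ by a further triangle-inequality split into $\left|\W_{p,\delta}(\hat\mu_{m,j},\mu_j)\right|$-type terms, then apply a one-dimensional concentration argument for the trimmed Wasserstein distance in the spirit of \cite{manole2022minimax}. This is where the quantities $\psi_{\delta,t}$, the $\eps_{m,d,\delta,\overline\delta}(\gamma)$, and the radii $R_{\mu_j,\delta}$, $R_{\max}$ enter: the DKW/Hoeffding-type inequality controls the empirical CDF uniformly, the definitions in Eq.\ \eqref{eps_def}--\eqref{def_epsnew} convert this into control on the empirical quantiles $F^{-1}_{\hat\mu_{m,j}}(\tau)$ for $\tau\in[\delta,1-\delta]$ staying inside the interval of radius $R_{\mu_j,\delta}$, and trimming ensures the integrand $\bigl|F^{-1}_{\hat\mu_{m,j}}(\tau)-F^{-1}_{\mu_j}(\tau)\bigr|^p$ is uniformly bounded so that a bounded-differences or direct integration argument gives the $m^{-1/(2p)}$ rate with the factor $2\lambda R_{\max}\bigl(\tfrac{p}{1-2\delta}\sqrt{\log(16d/\overline\delta)}\bigr)^{1/p}$. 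A union bound over the $d$ coordinates (and the $\delta$ and $1-\delta$ endpoints, hence the $16d$ in the log) accounts for the failure probability budget.

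For the sliced term I would not attempt a direct quantile bound for every direction $\varphi$, since the trimmed integral over $\smb^{d-1}$ is an expectation and does not admit a uniform endpoint-radius control. Instead I would bound $\bigl(\E_\varphi[\W_{p,\delta}^p(\varphi_\#\hat\mu_m,\varphi_\#\hat\nu_{m'})]\bigr)^{1/p}-\bigl(\E_\varphi[\W_{p,\delta}^p(\varphi_\#\mu,\varphi_\#\nu)]\bigr)^{1/p}$ using the fact that $\W_{p,\delta}\le(1-2\delta)^{-1/p}\W_p$ and appealing to an $L^1(\sigma)$ or expectation bound on the sliced Wasserstein rate — the standard one-dimensional rate $\E\,\W_p^p(\hat\gamma_m,\gamma)\lesssim_p m^{-1/2}\,\E[\|Z\|^2]^{1/2}$ (after projecting, each projected measure inherits the second-moment bound since $\|\varphi\|=1$), then a Markov/Jensen step to pass from the expectation bound to a high-probability bound, which is exactly where the $\overline\delta$ in the denominator and the constant $C_p$ come from. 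Taking $\tfrac1p$-th roots and using $|x^{1/p}-y^{1/p}|\le|x-y|^{1/p}$ for $p\ge1$ converts the $m^{-1/2}$ rate on the $p$-th powers into $m^{-1/(2p)}$.

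The main obstacle I anticipate is the marginal term's endpoint control: carefully showing that with the prescribed sample-size lower bound $\min\{m,m'\}>\max\{2(p+2)/\delta,\log(32d/\overline\delta)/(2\delta^2)\}$ the empirical trimmed quantile functions are, with high probability, supported in $[-R_{\max},R_{\max}]$, and that $\sup_{\tau\in[\delta,1-\delta]}|F^{-1}_{\hat\mu_{m,j}}(\tau)-F^{-1}_{\mu_j}(\tau)|$ can be bounded in terms of the modulus-of-continuity quantities $\psi_{\delta,t}$ without any lower bound on the density — this is the delicate quantile-from-CDF conversion that the functions $\psi_{\delta,t}$, $\psi_{1-\delta,t}$ are designed to handle, and it is the step where the precise constants in the theorem are pinned down. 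The sliced term, by contrast, should follow fairly directly from known one-dimensional Wasserstein convergence rates once the projection-invariance of the moment bounds is noted.
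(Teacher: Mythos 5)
Your decomposition and your treatment of the marginal term essentially reproduce the paper's argument: the paper also splits the error into the $\lambda$-weighted coordinate term and the $(1-\lambda)$-weighted sliced term, handles each coordinate by a triangle inequality into one-sample errors, controls the empirical quantiles at the levels $\delta$ and $1-\delta$ by a Hoeffding-type bound (this is exactly the role of $\psi_{\delta,t}$ and $\eps_{m,d,\delta,\overline\delta}$, giving the radii $R_{\mu_j,\delta}$ and $R_{\max}$), bounds the true trimmed quantiles by Markov's inequality using the $p$-th moment, and then converts a DKW sup-norm bound on the CDFs into a bound on the trimmed quantile integral by a Fubini argument, with the union bound over coordinates, endpoints and the two measures producing the $16d$ inside the logarithm. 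The ``delicate quantile-from-CDF conversion'' you flag is precisely the content of the paper's empirical-quantile-deviation lemma, so that part of your plan is sound and matches the source.

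The sliced component, however, contains a genuine gap. You propose to discard the trimming via $\W_{p,\delta}\le(1-2\delta)^{-1/p}\W_p$ and then invoke a ``standard'' untrimmed one-dimensional rate $\E\,\W_p^p(\hat\gamma_m,\gamma)\lesssim_p m^{-1/2}\,\E[\|Z\|^2]^{1/2}$. Under the theorem's assumptions ($\mu,\nu\in\mathcal{P}_{\max\{p,2\}}$, i.e.\ only second moments when $p\le 2$) this inequality is false in general: already for $p=1$ the parametric rate for the untrimmed distance requires $\int\sqrt{F(1-F)}\,dx<\infty$, a condition strictly stronger than a finite second moment, and for larger $p$ the requirements are stronger still. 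The trimming is not a nuisance to be removed but the mechanism that delivers the $m^{-1/(2p)}$ rate under only second moments: on $[\delta,1-\delta]$ the projected quantiles are bounded by $(\E\|Z\|^2/\delta)^{1/2}$, which is where the $\sqrt{\delta}$ in the denominator of $t_{\msw}$ comes from --- a factor your route would not produce, since you attribute the $\delta$-dependence solely to $(1-2\delta)^{-1/p}$. The paper instead invokes Proposition 1(ii) of \cite{manole2022minimax}, which bounds $\E\bigl|\SW_{p,\delta}(\hat\mu_m,\hat\nu_{m'})-\SW_{p,\delta}(\mu,\nu)\bigr|$ directly for the \emph{trimmed} sliced distance with exactly the stated constant, and then applies Markov's inequality to obtain the high-probability statement (this is where $C_p$ and the $1/\overline\delta$ factor enter, as you correctly anticipated). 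So the fix is to keep the trimmed distance throughout the sliced term and use (or reprove) the trimmed concentration result, rather than passing to the untrimmed Wasserstein distance.
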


\begin{rmk}\label{rmk:msw-conv-rate}
Theorem \ref{theory:msw-conv-rate} states that the empirical trimmed \(\msw_{p,\delta}\) distance between two samples converges to the true population trimmed $\msw_{p, \delta}$ distance at the rate
\begin{align}\label{eq:msw-parametric-rate}
    t_{\msw}
=O\bigl(m^{-1/(2p)}+m'^{-1/(2p)}\bigr).
\end{align}
In particular, when $p=1$ and $m=m'$, this recovers the familiar $O(m^{-1/2})$ parametric rate. 
\end{rmk}

\subsection{Theoretical properties of the \abi posterior}\label{theory:abi-property}

In this section, we investigate theoretical properties of the \abi posterior. First, we prove that the oracle \abi posterior converges to the true posterior distribution as $\epsilon\downarrow 0$. 
We additionally establish that the \msw distance is continuous with respect to ABC posteriors in the sense that this distance vanishes in the limit as $\epsilon\downarrow 0$ through a novel martingale-based technique. 

\begin{thm}[Convergence of the \abi Posterior]\label{thm:abi-conv}
Let $(\Theta,X)$ be defined on a probability space $(\Xi,\F,\p)$ with 
$\Theta\in \Omega$ where $\Omega\subseteq\R^{d}$ is a Polish parameter space with Borel $\sigma$-algebra $\mathcal{B}$, and $X\in\X^{n}$ where $\X^{n}\subseteq\R^{nd_X}$ is a Polish observation space with Borel $\sigma$-algebra $\mathcal{A}$. Let $p \in [1, \infty)$ and assume
\begin{align*}
    \Theta\sim\pi, \quad X\mid\Theta\sim P_\Theta^{(n)}. 
\end{align*}
Assume that the joint distribution of $(\Theta,X)$ (denoted by $P_{\Theta,X}$) admits the density $f_{\Theta,X}$, the marginal distribution of $X$ (denoted by $P_X$) admits the density $f_X$, and $P_\Theta^{(n)}$ admits the density $f_{X|\Theta}$, all with respect to Lebesgue measure. Let $\xobs\in\X^{n}$ be such that $f_X(\xobs)>0$ and $f_X$ is continuous at $\xobs$. Suppose
\begin{align*}
M := \sup_{x\in\X^{n}}\int_{\Omega}\|\theta\|_p^p\,
               f_{\Theta, X}(\theta, x)\,d\theta<\infty.
\end{align*}
    Then as $\epsilon \downarrow 0$, the oracle \abi posterior, with density
    \begin{align*}
        \pi^{(\epsilon)}_{\abi}(\theta \mid \xobs) &:= \pi_{\Theta \mid X}(\theta \mid \msw{p}(\pi_{\Theta \mid x},\pi_{\Theta \mid x^{*}})\leq\epsilon)\nonumber\\
        &= \frac{\pi(\theta)\int_{\mathcal{X}} f_{X \mid \Theta}(x \mid \theta)\ind\{\msw{p}(\pi_{\Theta  \mid  x}, \pi_{\Theta  \mid  \xobs}) \leq \epsilon\}\, dx}{\int_{\Theta\times\mathcal{X}}\pi(\theta)f_{X \mid \Theta}(x \mid \theta)\ind\{\msw{p}(\pi_{\Theta  \mid  x}, \pi_{\Theta  \mid  \xobs}) \leq \epsilon\}\, d\theta \, dx}
    \end{align*}
    converges weakly in $\mathcal{P}_p(\Omega)$ to the true posterior distribution $\pi_{\Theta \mid X}(\theta \mid \xobs)$.
\end{thm}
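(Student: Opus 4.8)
The plan is to prove weak convergence of the oracle \abi posterior to $\pi_{\Theta\mid X}(\cdot\mid\xobs)$ by a two-step argument: first show that the conditioning event shrinks onto the fiber $\{x:\msw_p(\pi_{\Theta\mid x},\pi_{\Theta\mid\xobs})=0\}$, which by the topological equivalence of $\msw_p$ and $W_p$ (Theorem~\ref{thm:msw-topo-equiv}) is exactly $\{x:\pi_{\Theta\mid x}=\pi_{\Theta\mid\xobs}\}$, a set containing $\xobs$; and second, use a continuity/approximate-identity argument to identify the limiting measure. Concretely, I would write $\pi^{(\epsilon)}_{\abi}(\theta\mid\xobs)$ with density proportional to $\pi(\theta)\,g_\epsilon(\theta)$ where $g_\epsilon(\theta)=\int_{\X^n} f_{X\mid\Theta}(x\mid\theta)\,\ind\{\msw_p(\pi_{\Theta\mid x},\pi_{\Theta\mid\xobs})\le\epsilon\}\,dx$, and analyze the numerator and denominator separately as $\epsilon\downarrow0$.

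The key technical device, as flagged in the introduction, is a martingale/Doob argument rather than the Lebesgue-differentiation approach of \citet{barber2015rate}. I would set up the filtration generated by the nested acceptance events $A_\epsilon=\{x:\msw_p(\pi_{\Theta\mid x},\pi_{\Theta\mid\xobs})\le\epsilon\}$ as $\epsilon\downarrow0$: for a bounded measurable test function $h$ on $\Omega$, the conditional expectations $\E[h(\Theta)\mid X\in A_\epsilon]$ form (along a countable sequence $\epsilon_k\downarrow0$) a reverse martingale with respect to the decreasing $\sigma$-algebras $\mathcal{G}_k=\sigma(\ind\{X\in A_{\epsilon_k}\})$, and Doob's martingale convergence theorem gives a limit. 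The remaining work is to identify this limit as $\E[h(\Theta)\mid X=\xobs]$: here I would use the assumed continuity of $f_X$ at $\xobs$ together with $f_X(\xobs)>0$ to show that the conditional law of $X$ given $X\in A_\epsilon$ concentrates near $\xobs$, and that the map $x\mapsto \E[h(\Theta)\mid X=x]$ is continuous at $\xobs$ on the set where it matters — or, more robustly, appeal directly to the a.e.\ characterization of conditional expectation via the shrinking sets $A_\epsilon$, which is where Doob's theorem earns its keep. The moment bound $M<\infty$ is used to upgrade convergence in distribution on bounded test functions to weak convergence in $\mathcal{P}_p(\Omega)$, by establishing uniform integrability of $\|\theta\|_p^p$ under the family $\{\pi^{(\epsilon)}_{\abi}\}$.

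Finally, I would assemble the pieces: (i) the denominator $\int_\Omega \pi(\theta) g_\epsilon(\theta)\,d\theta = \p(X\in A_\epsilon)\to 0$ but is strictly positive for every $\epsilon>0$ since $f_X(\xobs)>0$ and $A_\epsilon$ contains a neighborhood-like fiber around $\xobs$; (ii) the normalized ratio stabilizes by the reverse-martingale limit; (iii) tightness plus the $p$-th moment bound yield $\mathcal{P}_p$-convergence. The main obstacle I anticipate is step (ii)'s identification of the limit: $A_\epsilon$ is a fiber in $x$-space cut out by a fairly opaque functional $x\mapsto\msw_p(\pi_{\Theta\mid x},\pi_{\Theta\mid\xobs})$, so I cannot simply invoke a product-structure argument, and I must be careful that this functional is measurable and that its zero set has the right relationship to $\xobs$ — in particular that $\p(X\in A_\epsilon)>0$ for all $\epsilon$, which requires the continuity of $f_X$ at $\xobs$ to rule out $\xobs$ being an isolated atypical point. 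A secondary subtlety is handling the case where the model is non-identifiable, so that $\{x:\pi_{\Theta\mid x}=\pi_{\Theta\mid\xobs}\}$ may be strictly larger than $\{\xobs\}$; but since the target is the \emph{posterior} $\pi_{\Theta\mid\xobs}$ and any $x$ in that set induces the same posterior, the limit is unaffected, and I would make this explicit.
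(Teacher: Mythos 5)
There is a genuine gap: your main machinery is the wrong tool for this theorem, and the step on which it relies fails. The acceptance region $A_\epsilon=\{x:\msw_p(\pi_{\Theta\mid x},\pi_{\Theta\mid\xobs})\le\epsilon\}$ does \emph{not} shrink to a neighborhood of $\xobs$ in data space; it shrinks to the fiber $\{x:\pi_{\Theta\mid x}=\pi_{\Theta\mid\xobs}\}$, which can be large (non-identifiability, sufficiency). Consequently the conditional law of $X$ given $X\in A_\epsilon$ need not concentrate near $\xobs$ at all, so your proposed identification of the limit via continuity of $f_X$ at $\xobs$ and continuity of $x\mapsto\E[h(\Theta)\mid X=x]$ does not go through (the latter is not even assumed). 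The martingale device is also set up incorrectly: the $\sigma$-algebras $\mathcal{G}_k=\sigma(\ind\{X\in A_{\epsilon_k}\})$ are not nested in either direction (the indicator of a smaller set is not measurable with respect to the $\sigma$-algebra generated by the indicator of a larger set), so the quantities $\E[h(\Theta)\mid X\in A_{\epsilon_k}]$ do not form a reverse martingale with respect to them. In the paper, the martingale/L\'evy 0--1 law argument is used for the \emph{companion} result (Theorem \ref{thm:cont-msw-dist}), where one conditions on Euclidean balls $B_{\epsilon_t}(\xobs)$ whose intersection is exactly $\{X=\xobs\}$ and where an increasing filtration $\sigma(Z_1,\dots,Z_t)$ is used; that structure is precisely what is unavailable for the \msw-balls here.

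The paper's actual proof of this statement is far shorter and bypasses all of this: by disintegration, $\pi^{(\epsilon)}_{\abi}(\cdot\mid\xobs)$ is the $f_X$-weighted mixture of the posteriors $\pi_{\Theta\mid x}$ over $x\in A_\epsilon$, and every component of that mixture is within \msw-distance $\epsilon$ of $\pi_{\Theta\mid\xobs}$ \emph{by the very definition of} $A_\epsilon$. A convexity (Jensen-type) bound for $\msw_p$ under mixtures then gives $\msw_p\bigl(\pi^{(\epsilon)}_{\abi}(\cdot\mid\xobs),\pi_{\Theta\mid\xobs}\bigr)\le\epsilon$, and Theorem \ref{thm:msw-metrize-conv} ($\msw_p$ metrizes weak convergence on $\mathcal{P}_p$) immediately yields the conclusion; no uniform-integrability upgrade via the moment bound is needed for this step. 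Your closing remark---that any $x$ in the zero fiber induces the same posterior, so the limit is unaffected---is the germ of this argument, but in your write-up it is relegated to a side subtlety rather than made the engine of the proof, and the apparatus you build in its place (reverse martingale over indicator $\sigma$-algebras plus concentration of $X$ near $\xobs$) would not survive as stated.
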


\begin{thm}[Continuity of the ABC Posterior under the $\msw_p$ Distance]\label{thm:cont-msw-dist}
Let $\Theta, X, x^{*}$ be defined similarly as and satisfy all the assumptions in Theorem \ref{thm:abi-conv}.
    For any $\epsilon>0$, define $B_{\epsilon}(\xobs)=\{x:\|x-\xobs\|\leq \epsilon\}$. Then for any decreasing sequence $\epsilon_t \downarrow 0$, 
\begin{align*}
    \pi_{\Theta\mid X}\bigl(d\theta\mid X\in B_{\epsilon_t}(\xobs)\bigr) \Rightarrow
\pi_{\Theta\mid X}(d\theta\mid X=\xobs)
\quad \text{as }t\to\infty,
\end{align*}
where we use $\Rightarrow$ to denote weak convergence in $\mathcal P_{p}(\Omega)$. Consequently,
\begin{align*}
    \lim_{t\to\infty}
\msw_{p}\!\Bigl(
  \pi_{\Theta\mid X}(d\theta\mid X\in B_{\epsilon_t}(\xobs)),
  \pi_{\Theta\mid X}(d\theta\mid X=\xobs)
\Bigr)=0 .
\end{align*}
\end{thm}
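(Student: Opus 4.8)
The plan is to establish the weak convergence $\pi_{\Theta\mid X}(d\theta\mid X\in B_{\epsilon_t}(\xobs))\Rightarrow\pi_{\Theta\mid X}(d\theta\mid X=\xobs)$ in $\mathcal P_p(\Omega)$ first, and then transfer this to vanishing of the $\msw_p$ distance by invoking the fact (Theorem \ref{thm:msw-metrize-conv}) that $\msw_p$ metrizes weak convergence on $\mathcal P_p(\R^d)$. For the first part, I would work directly with densities. Writing $\mu_t := \pi_{\Theta\mid X}(\cdot\mid X\in B_{\epsilon_t}(\xobs))$, its density is
\begin{align*}
g_t(\theta)=\frac{\int_{B_{\epsilon_t}(\xobs)} f_{\Theta,X}(\theta,x)\,dx}{\int_{B_{\epsilon_t}(\xobs)} f_X(x)\,dx}.
\end{align*}
Since $f_X$ is continuous and strictly positive at $\xobs$, the denominator behaves like $f_X(\xobs)\,\mathrm{Leb}(B_{\epsilon_t}(\xobs))$ for small $\epsilon_t$. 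For the numerator, the key is to show that for (Lebesgue-)almost every $\theta$,
\begin{align*}
\frac{1}{\mathrm{Leb}(B_{\epsilon_t}(\xobs))}\int_{B_{\epsilon_t}(\xobs)} f_{\Theta,X}(\theta,x)\,dx \;\longrightarrow\; f_{\Theta,X}(\theta,\xobs),
\end{align*}
so that $g_t(\theta)\to f_{\Theta,X}(\theta,\xobs)/f_X(\xobs) = \pi_{\Theta\mid X}(\theta\mid\xobs)$ pointwise a.e. This is where the martingale/Doob argument advertised in the introduction enters: rather than appealing to the Lebesgue differentiation theorem in $x$, I would set up a filtration generated by a refining sequence of partitions of $\X^n$ adapted to the shrinking balls $B_{\epsilon_t}(\xobs)$, express the averaged numerator as a conditional expectation $\E[f_{\Theta,X}(\theta,X)\mid \mathcal F_t]$ (for fixed $\theta$, or jointly), and apply Doob's martingale convergence theorem (or its reverse/backward version) to get convergence to the limit along the filtration. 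The finite-$p$-th-moment bound $M<\infty$ provides the uniform integrability needed to upgrade this to an $L^1$-type statement and to handle the $\theta$-integration.

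Once pointwise a.e.\ convergence of densities $g_t\to \pi_{\Theta\mid X}(\cdot\mid\xobs)$ is in hand, Scheffé's lemma gives convergence in total variation, hence weak convergence of $\mu_t$ to $\pi_{\Theta\mid X}(\cdot\mid\xobs)$. To strengthen this to weak convergence in $\mathcal P_p(\Omega)$ — i.e., convergence that also matches $p$-th moments — I would verify uniform integrability of $\|\theta\|_p^p$ under the family $\{\mu_t\}$: for each $t$, $\int\|\theta\|_p^p\,g_t(\theta)\,d\theta \le M/\int_{B_{\epsilon_t}(\xobs)} f_X(x)\,dx \cdot \mathrm{Leb}(B_{\epsilon_t}(\xobs))^{-1}\cdot(\cdots)$; using the lower bound $\int_{B_{\epsilon_t}} f_X \gtrsim f_X(\xobs)\mathrm{Leb}(B_{\epsilon_t})$ from continuity at $\xobs$, the $p$-th moments of $\mu_t$ stay uniformly bounded by roughly $2M/f_X(\xobs)$ for $t$ large. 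Combined with the TV (hence weak) convergence, a standard truncation argument upgrades this to convergence of $\int\|\theta\|_p^p\,d\mu_t$ to $\int\|\theta\|_p^p\,d\pi_{\Theta\mid X}(\cdot\mid\xobs)$, which is exactly weak convergence in $\mathcal P_p(\Omega)$ in the sense of Definition 6.8 of \citet{villani2009optimal}.

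The final line of the theorem is then immediate: since $\msw_p$ metrizes weak convergence on $\mathcal P_p$ (Theorem \ref{thm:msw-metrize-conv}), $\mu_t\Rightarrow \pi_{\Theta\mid X}(\cdot\mid\xobs)$ in $\mathcal P_p(\Omega)$ is equivalent to $\msw_p(\mu_t,\pi_{\Theta\mid X}(\cdot\mid\xobs))\to 0$. I expect the main obstacle to be the careful construction of the filtration and the rigorous application of Doob's theorem so as to obtain convergence of the \emph{ratio} $g_t$ — in particular, handling the joint dependence on $\theta$ and $x$, ensuring measurability of the conditional-expectation version, and justifying the interchange of the $\theta$-integral with the limit via the moment bound $M$. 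A secondary subtlety is that the balls $B_{\epsilon_t}(\xobs)$ may not be nested in a way that exactly generates a filtration, so I would either pass to a nested subsequence or use a dyadic-cube approximation of the balls, arguing that the difference is negligible because $f_X$ is continuous and positive at $\xobs$.
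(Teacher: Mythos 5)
Your high-level architecture (establish weak convergence in $\cp_p(\Omega)$, then conclude via the fact that $\msw_p$ metrizes weak convergence) matches the paper's, but the route you propose to the weak convergence is genuinely different from the paper's and has a gap at its central step. The paper never passes through densities, Lebesgue points, or Scheff\'e: it defines the nested events $D_t=\{X\in B_{\epsilon_t}(\xobs)\}$, takes the filtration $\F_t=\sigma(\ind_{D_1},\dots,\ind_{D_t})$ generated by these events themselves, uses $\bigcap_t D_t=\{X=\xobs\}$, and applies L\'evy's 0--1 law to $\ind_{D_t}\ind\{\Theta\in B\}$ for each Borel $B$ and, for the $p$-th moments, to $\|\Theta\|_p^p\,\ind_{D_t}$ (dominated by $\|\Theta\|_p^p\,\ind_{D_1}$, integrable thanks to $M<\infty$); weak convergence together with convergence of $\int\|\theta\|_p^p$ gives convergence in $\cp_p(\Omega)$, and Theorems \ref{thm:msw-topo-equiv}--\ref{thm:msw-metrize-conv} finish. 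Your plan instead hinges on the pointwise statement $\mathrm{Leb}(B_{\epsilon_t}(\xobs))^{-1}\int_{B_{\epsilon_t}(\xobs)}f_{\Theta,X}(\theta,x)\,dx\to f_{\Theta,X}(\theta,\xobs)$ for a.e.\ $\theta$, at the \emph{fixed} point $\xobs$. A Doob martingale along refining partitions (dyadic cubes or otherwise) delivers exactly what the Lebesgue differentiation theorem delivers: convergence at almost every point of $\X^n$, not at a prescribed point. The hypotheses only make $f_X$ continuous at $\xobs$; they do not make $\xobs$ a Lebesgue point of $x\mapsto f_{\Theta,X}(\theta,x)$ for a.e.\ $\theta$, so neither Doob's theorem nor a subsequence/dyadic-approximation trick can produce the pointwise limit you need, and the subsequent Scheff\'e step has nothing to rest on. (Note also that $g_t(\theta)$ is the $f_X$-weighted average $\E[\pi_{\Theta\mid X}(\theta\mid X)\mid X\in B_{\epsilon_t}(\xobs)]$, not a plain Lebesgue average; the natural martingale object is indexed by the conditioning events --- which is precisely what the paper conditions on.)

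There is a second, smaller gap in your moment step. From $M<\infty$ you obtain only a uniform bound $\int\|\theta\|_p^p\,d\mu_t\le 2M/f_X(\xobs)$ for large $t$, which is strictly weaker than uniform integrability of $\|\theta\|_p^p$ under $\{\mu_t\}$; total-variation convergence plus uniformly bounded $p$-th moments does not yield convergence of $p$-th moments (a mass $1/t$ escaping to distance $t^{1/p}$ keeps the $p$-th moment at $1$ while the TV distance to the limit vanishes). Since $M$ is a supremum over $x$ of the conditional $p$-th moment, not a uniform-in-$x$ tail bound, uniform integrability does not follow directly from the assumptions, so the "standard truncation argument" does not close this step as written. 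The paper avoids the issue by applying L\'evy's 0--1 law directly to $\|\Theta\|_p^p\,\ind_{D_t}$ with the domination by $\|\Theta\|_p^p\,\ind_{D_1}$, which is exactly where the hypothesis $M<\infty$ is used.
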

\begin{rmk}
    Contrary to the standard convergence proofs that rely on the Lebesgue differentiation theorem (\citealp{barber2015rate, biau2015new, prangle2017adapting}), we establish the convergence of the ABC posterior by leveraging martingale techniques. 
    To the best of the authors' knowledge, this represents the first convergence proof for ABC that employs a martingale-based method (specifically, leveraging L\'evy's 0--1 law).
\end{rmk}

\section{Empirical Evaluation}\label{sec:sim}
In this section, we present extensive empirical evaluations demonstrating the efficacy of \abi across a broad range of simulation scenarios. 
We benchmark the performance of \abi against four widely used alternative methods: ABC with the 2-Wasserstein distance (WABC; see \citealt{bernton2019approximate}\footnote{We use the implementation available at \url{https://github.com/pierrejacob/winference}}), 
ABC with automated neural summary statistic (ABC-SS; see \citealt{jiang2017learning}),
Sequential Neural Likelihood Approximation (SNLE; see \citealt{papamakarios2019sequential}), and Sequential Neural Posterior Approximation (SNPE; see \citealt{greenberg2019automatic}). For SNLE and SNPE, we employ the implementations provided by the Python \texttt{SBI} package \citep{tejero-cantero2020sbi}. In the Multimodal Gaussian example (Section \ref{simulation:complex-gaussian-toy-model}), we additionally compare \abi against the Wasserstein generative adversarial network with gradient penalty (WGAN-GP; see \citealt{gulrajani2017improved}). 
We summarize the key characteristics of each method below:
\vspace{.8em}
\begin{table}[!htp]
\centering
\renewcommand{\arraystretch}{1.2}
\begin{small}
\begin{tabularx}{0.8\textwidth}{l|ccccc}
\toprule
\hspace{.5em} Compatibility & \hspace{.6em} \textbf{\abi} \hspace{.1em} & WABC & ABC-SS & SNLE & SNPE \\
\midrule
\hspace{.8em}Intractable Prior      & \hspace{.3em}\textbf{Yes}  & No   & Yes    & No   & No \\
\hspace{.8em}Intractable Likelihood & \hspace{.3em}\textbf{Yes}  & Yes  & Yes    & Yes  & Yes \\
\hspace{.8em}ABC-based              & \hspace{.3em}\textbf{Yes}  & Yes  & Yes    & No   & No \\
\bottomrule
\end{tabularx}
\end{small}
\caption{\small Compatibility comparison of different inference methods. \abi provides full compatibility with both intractable priors and intractable likelihood, offering enhanced modeling flexibility.}
\label{tab:method_comparison}
\end{table}

\vspace{-.5em}

\subsection{Multimodal Gaussian Model with Complex Posterior}\label{simulation:complex-gaussian-toy-model}
For the initial example, we consider a model commonly employed in likelihood‐free inference (see \citealp{papamakarios2019sequential, wang2022adversarial}), which exposes the intrinsic fragility of traditional ABC methods even in a seemingly simple scenario.
In this setup, $\theta$ is a 5-dimensional vector drawn according to $\mathrm{Unif}(-3, 3)$; for each $\theta \in \R^5$, we observe four i.i.d.\ sets of bivariate Gaussian samples $X$, where the mean and covariance of these samples are determined by $\theta$. For simplicity, we will subsequently treat $X$ as a flattened 8-dimensional vector. The forward sampling model is defined as follows:

\begin{align*}
    \theta_k &\sim \mathrm{Unif}(-3, 3), \quad k = 1, \dots, 5,\\
    \mu_\theta &= (\theta_1, \theta_2)^\top,\\
    s_1 = \theta^2_3, ~~ s_2 &= \theta_4^2, ~~\rho = \tanh(\theta_5),\\
    \Sigma_\theta &= \begin{pmatrix}
        s_1^2 & \rho s_1 s_2\\
        \rho s_1 s_2 & s_2^2
    \end{pmatrix},\\
    X_j \mid \theta &\sim \N(\mu_\theta, \Sigma_\theta), \quad j = 1, \dots, 4.
\end{align*}
Despite its structural simplicity, this model yields a complex posterior distribution characterized by truncated support and four distinct modes that arise from the inherent unidentifiability of the signs of $\theta_3$ and $\theta_4$.

We implemented \abi with two sequential iterations using adaptively selected thresholds.
The \msw distance was evaluated using 10 quantiles and five SW slices. 
For SNLE and SNPE, we similarly conducted two-round sequential inference. 
To ensure fair comparison, we calibrated the training budget for ABC-SS\footnote{Since both \abi and ABC-SS are rejection-ABC-based, we applied the same adaptive rejection quantile thresholds for \abi (iteration 1) and ABC-SS.}, WABC, and WGAN to match the total number of samples utilized for training across both \abi iterations.

\begin{figure}[!htp]
\vspace{1em}
\begin{center}
    \includegraphics[width=1.05\columnwidth]{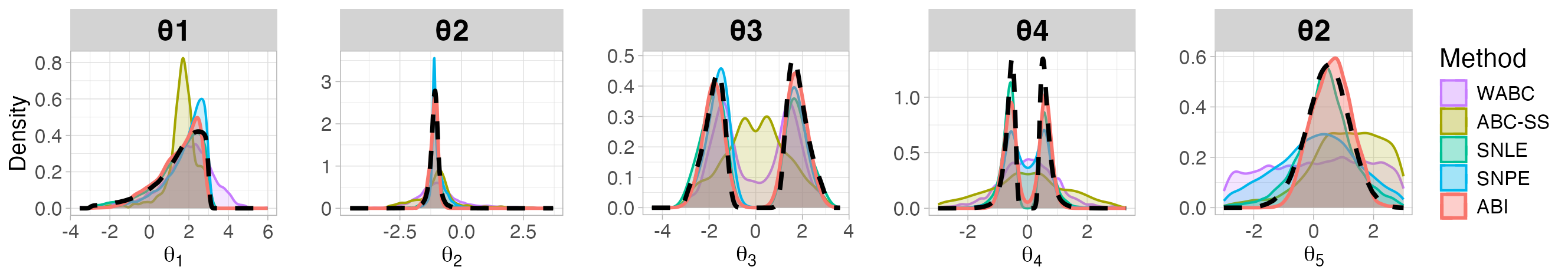}
\caption{Comparison of approximate posterior densities obtained from \abi and alternative benchmark methods under the Multimodal Gaussian model. The true posterior is shown in the black dashed line. \abi produced posteriors that accurately align with the true posterior distribution.}
\label{fig:toy-model-full}
\end{center}
\end{figure}

Figures \ref{fig:toy-model-full} and \ref{fig:toy-model-wgan-abc} present comparative analyses of posterior marginal distributions generated by \abi (in red) and alternative inference methods, with the true posterior distribution (shown in black) obtained via the No-U-Turn Sampler (NUTS) implemented in \texttt{rstan} using 10 MCMC chains.
We illustrate the evolution of the \abi posterior over iterations in Figure \ref{fig:toy-model-sample-path}.
Notably, ABC-SS produces a predominantly unimodal posterior distribution centered around the posterior mean, illustrating its fundamental limitation of yielding only first-order sufficient statistics (i.e., mean-matching) in the asymptotic regime with vanishing tolerance.
WGAN partially captures the bimodality of $\theta_3$ and $\theta_4$, yet produces posterior samples that significantly deviate from the true distribution.
The parameter $\theta_5$ poses the greatest challenge for accurate estimation across methods.
Overall, \abi generates samples that closely approximate the true posterior distribution across all parameters.

\begin{figure}[!htp]
\vspace{1em}
\begin{center}
    \includegraphics[width=1.05\columnwidth]{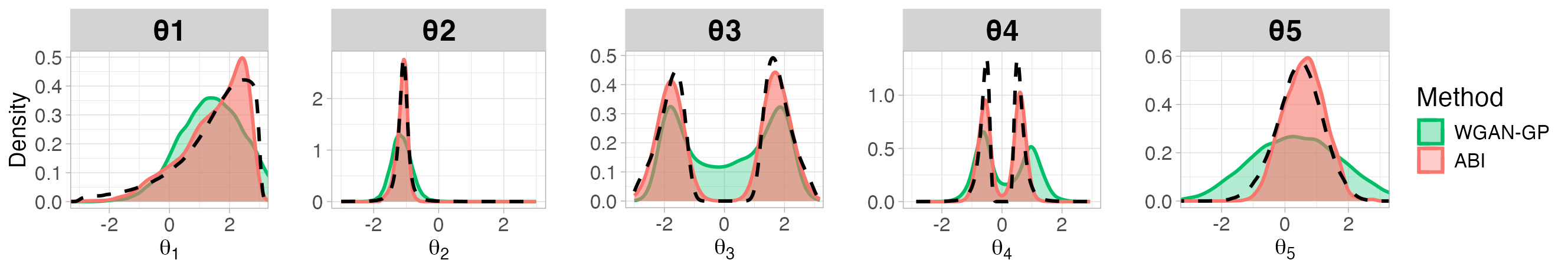}
\caption{Comparison of marginal posteriors generated by \abi and WGAN-GP. The true posterior is shown in the black dashed line.}
\label{fig:toy-model-wgan-abc}
\end{center}
\end{figure}

\begin{figure}[!htp]
\vspace{1em}
\begin{center}
    \includegraphics[width=1.00\columnwidth]{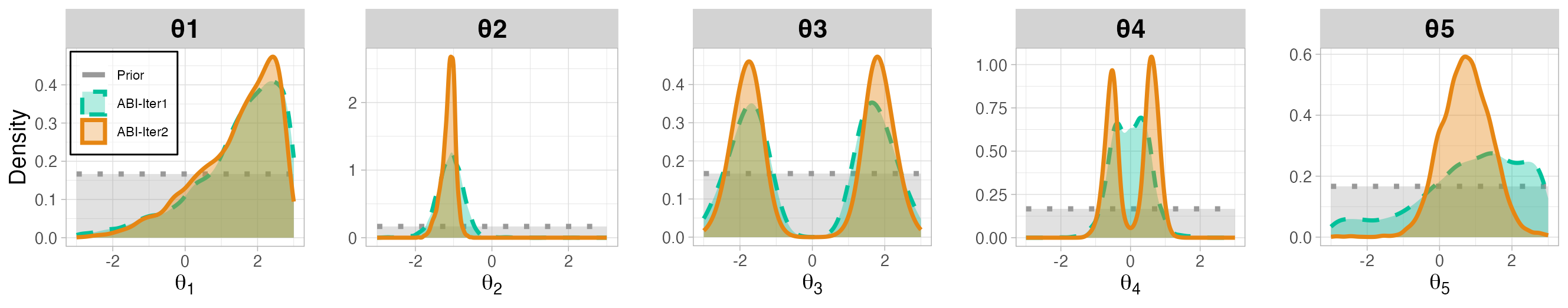}
\caption{Evolution of the sample path over successive iterations of \abi.}
\label{fig:toy-model-sample-path}
\end{center}
\end{figure}

Table \ref{table:toy-comparison} presents a quantitative comparison using multiple metrics: maximum mean discrepancy with Gaussian kernel, empirical 1-Wasserstein distance\footnote{The $W_1$ distance is computed using the Python Optimal Transport (\texttt{POT}) package.}, bias in posterior mean (measured as absolute difference between posterior distributions), and bias in posterior correlation (calculated as summed absolute deviation between empirical correlation matrices). \abi consistently demonstrates superior performance across the majority of evaluation criteria (with the exception of parameters $\theta_2$ and $\theta_5$),

\vspace{1.5em}
\begin{table}[!htp]
\centering
\renewcommand{\arraystretch}{1.2}
\begin{small}
\begin{tabularx}{\textwidth}{l l|cccccc}
\toprule
\multicolumn{1}{l}{\hspace{0em}\textbf{Evaluation Metric}} 
  & \multicolumn{1}{c|}{(Parameter)} 
    & \abi & WABC & ABC-SS & SNLE & SNPE & WGAN \\
\midrule
\hspace{0.8em}MMD       &      & \textbf{0.466} & 0.592 & 0.573 & 0.511 & 0.536 & 0.514 \\
\hspace{0.8em}$1$-Wasserstein       &     & \textbf{0.609} & 2.738 & 1.663 & 0.912 & 1.126 & 1.079 \\
\midrule
\multirow{5}{*}{\makecell[l]{\hspace{0.8em}Bias\\\hspace{0.8em}(Posterior\\\hspace{0.8em}Mean)}} 
          &\hspace{1.7em} $\theta_1$ & \textbf{0.001}  & 0.383  & 0.18  & 0.033  & 0.363  & 0.039  \\
          &\hspace{1.7em} $\theta_2$ & 0.030  & 0.193  & 0.103  & \textbf{0.001}  & 0.112  & 0.049  \\
          &\hspace{1.7em} $\theta_3$ & \textbf{0.016}  & 0.058  & 0.076  & 0.24  & 0.192  & 0.030  \\
          &\hspace{1.7em} $\theta_4$ & \textbf{0.006}  & 0.022  & 0.007  & 0.084  & 0.012  & 0.143  \\
          &\hspace{1.7em} $\theta_5$ & 0.137  & 0.345  & 0.642  & \textbf{0.078}  & 
 0.361  & 0.087  \\
\midrule
\hspace{0.8em}Bias (Posterior Corr.) & & \textbf{0.881} & 1.776 & 1.382 & 1.146 & 1.094 & 2.340 \\
\bottomrule
\end{tabularx}
\end{small}
\caption{\small Comparative performance evaluation of inference methodologies. Lower values indicate superior accuracy; best results are highlighted in bold.}
\label{table:toy-comparison}
\end{table}

\subsection{M/G/1 Queuing Model}
We now turn to the M/G/1 queuing model \citep{fearnhead2012constructing, bernton2019approximate}. 
This system illustrates a setting where, despite dependencies among observations, the model parameters remain identifiable from the marginal distribution of the data.

In this model, customers arrive at a single server with interarrival times $W_i\sim\mathrm{Exp}(\theta_3)$ (where $\theta_3$ represents the arrival rate) and model the service times as $U_i\sim\mathrm{Unif}(\theta_1,\theta_2)$. Rather than observing $W_i$ and $U_i$ directly, we record only the interdeparture times, defined through the following relationships:
\begin{align*}
V_i &= \sum_{j=1}^i W_j \qquad \text{(arrival time of $i$-th customer)},\\
X_i &= \sum_{j=1}^i Y_j, \,\, X_0 \equiv 0 \qquad \text{(departure time of $i$-th customer)},\\
Y_i &= U_i + \max\Bigl\{0,\sum_{j=1}^i W_j - \sum_{j=1}^{i-1}Y_j\Bigr\} = U_i + \max(0, V_i - X_{i-1}).
\end{align*}
We assume that the queue is initially empty before the first customer arrives.
We assign the following truncated prior distributions:
\begin{align*}
\theta_1,\;\theta_2 \sim \mathrm{Unif}(0,10),
\qquad
\theta_3 \sim \mathrm{Unif}\bigl(0,\tfrac{1}{3}\bigr).
\end{align*}
For our analysis, we use the dataset from \cite{shestopaloff2014bayesian}\footnote{This corresponds to the \texttt{Intermediate} dataset, with true posterior means provided in Table 4 of \cite{shestopaloff2014bayesian}.}, which was simulated with true parameter values $(\theta_1,\;\theta_2-\theta_1,\;\theta_3)=(4,3,0.15)$ and consists of $n=50$ observations.
Sequential versions of all algorithms were executed using four iterations with 10,000 training samples per iteration. As in the first example, to ensure fair comparison, we allocated an equal total number of training samples to WABC and ABC-SS as provided to \abi, SNLE, and SNPE. Figure \ref{fig:mg1-marginal-posteriors} presents the posterior distributions for parameters $\theta_1$, $\theta_2 - \theta_1$, and $\theta_3$, with the true posterior mean indicated by a black dashed line. 
The results demonstrate that the approximate posteriors produced by \abi (in red) not only align most accurately with the true posterior means, but also concentrate tightly around them.

\begin{figure}[!htp]
    \vspace{1em}
    \centering
    \includegraphics[width=1.0\linewidth]{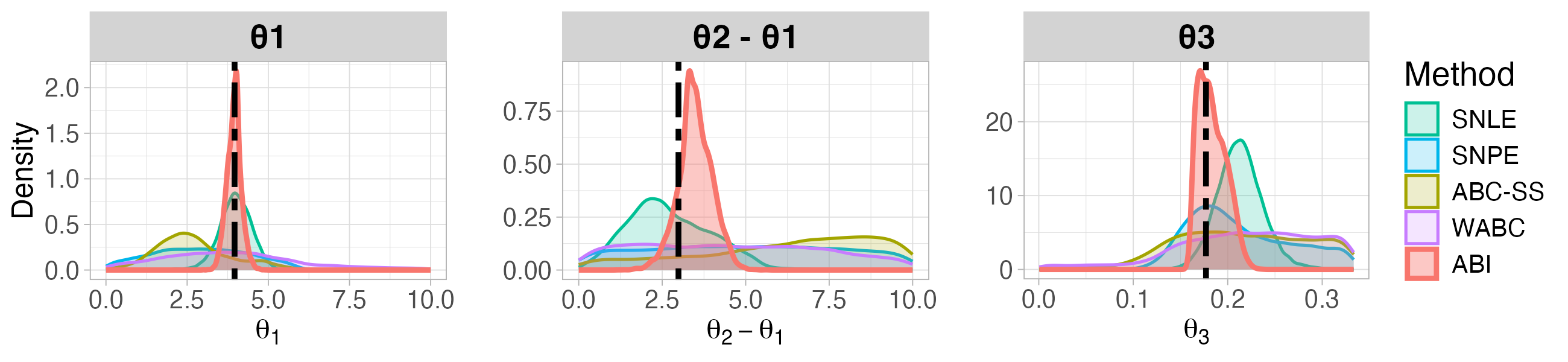}
    \caption{Comparison of approximate posterior densities under the M/G/1 queuing example. The dashed black line indicates the true posterior mean at $(3.96, \,2.99, \,0.177)$. \abi outperforms alternative approaches and exhibits superior alignment with the true posterior mean. }
    \label{fig:mg1-marginal-posteriors}
\end{figure}

\begin{figure}[!htp]
\begin{center}
    \includegraphics[width=0.8\columnwidth]{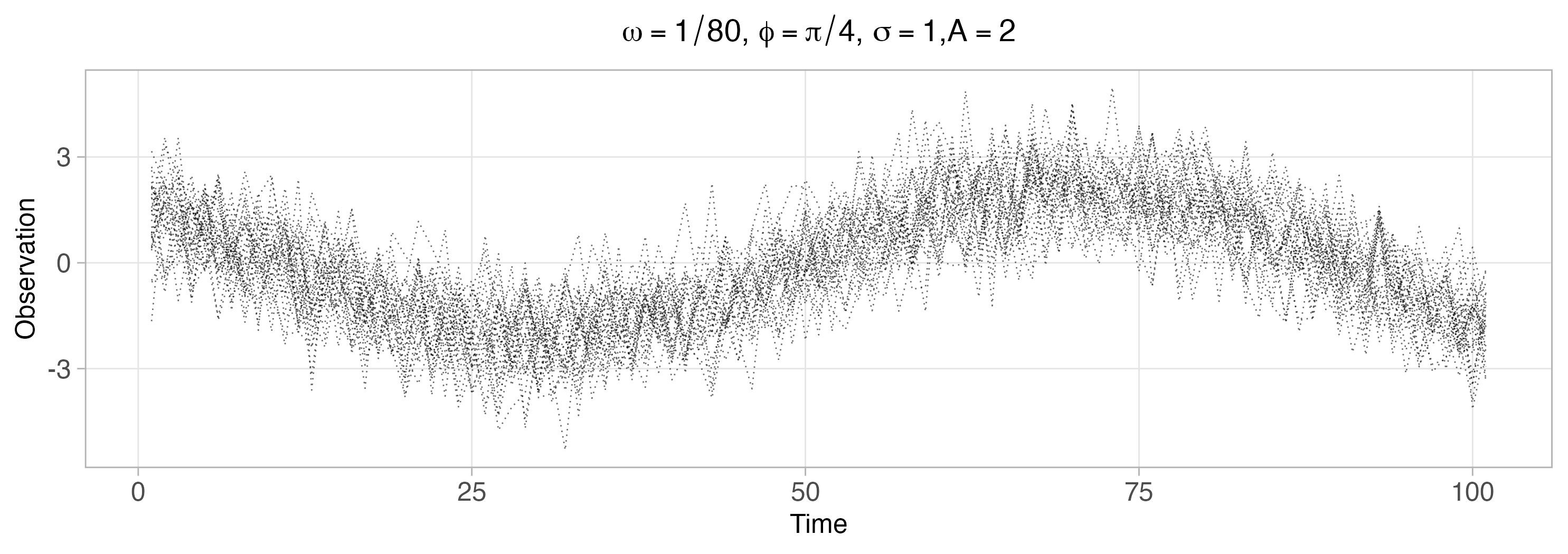}
\caption{30 trajectories simulated from the cosine model with parameter values $\omega^* = 1/80$, $\phi^* = \pi/4$, $\log(\sigma^*) = 0$, and $\log(A^*) = \log(2)$.}
\label{fig:cosine-model-obs-data}
\end{center}
\end{figure}

\subsection{Cosine Model}
In the third demonstration, we examine the cosine model \citep{bernton2019approximate}, defined as:
\begin{align*}
    Y_t &= A \cos (2 \pi \omega t + \phi) + \sigma \epsilon_t, \quad \epsilon_t \sim \N(0, 1) \quad \text{for } t\geq 1,
\end{align*}
with prior distributions:
\begin{align*}
    \omega \sim \mathrm{Unif}[0, 1/10], \quad \phi \sim \mathrm{Unif}[0, 2\pi],\quad \log(\sigma) \sim \N(0, 1), \quad \log(A) \sim \N(0, 1).
\end{align*}

Posterior inference for these parameters is challenging because information about $\omega$ and $\phi$ is substantially obscured in the marginal empirical distribution of observations $(Y_1, \dots, Y_n)$.
The observed data was generated with $t=100$ time steps using parameter values $\omega^* = 1/80$, $\phi^* = \pi/4$, $\log(\sigma^*) = 0$, and $\log(A^*) = \log(2)$; 30 example trajectories are displayed in Figure \ref{fig:cosine-model-obs-data}.
The exact posterior distribution was obtained using the NUTS algorithm.
Sequential algorithms (\abi, SNLE, SNPE) were executed with three iterations, each utilizing 5,000 training samples, while WABC\footnote{Wasserstein distance was computed by treating each dataset as a flattened vector containing 100 independent one-dimensional observations.} and ABC-SS were trained with a total of 15,000 simulations.

\begin{figure}[!htp]
\vspace{1em}
\begin{center}
    \includegraphics[width=1.03\columnwidth]{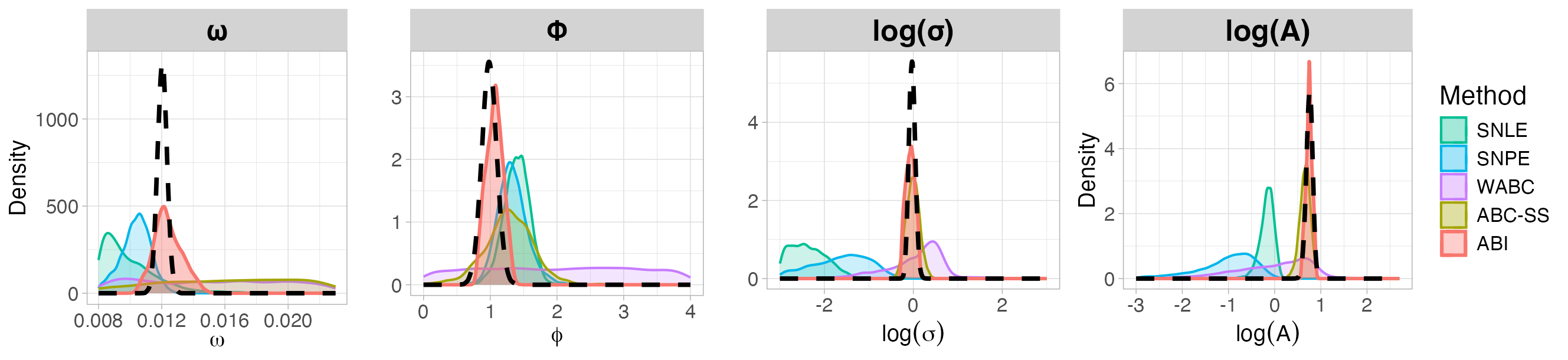}
\caption{Comparison of approximate posterior densities under the cosine model. The true posterior density is displayed as a dashed black line. Among all methods, \abi achieves the most accurate representation of the true posterior.}
\label{fig:cosine-model}
\end{center}
\end{figure}

\begin{figure}[!htp]
\vspace{1em}
\begin{center}
    \includegraphics[width=1.03\columnwidth]{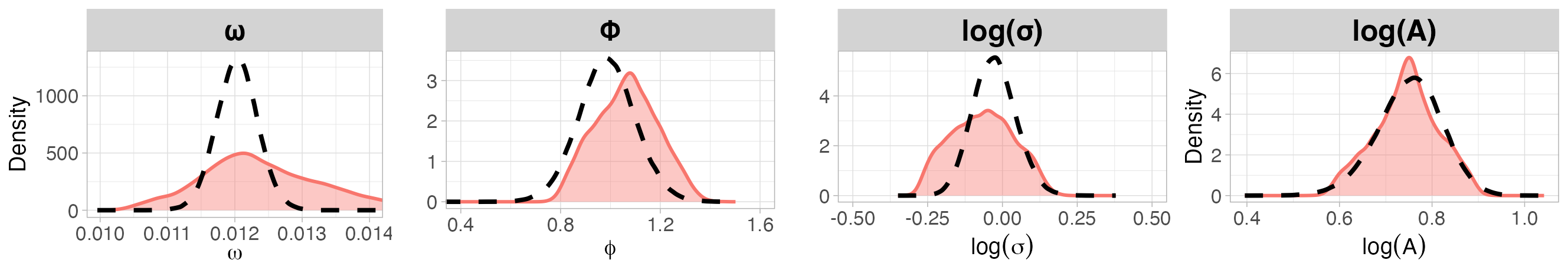}
\caption{Approximate posterior densities generated by \abi under the cosine model.}
\label{fig:cosine-model-abi}
\end{center}
\end{figure}

Figure \ref{fig:cosine-model} compares the approximate posterior distributions obtained from \abi and alternative methods, with the true posterior shown in black.
Among these parameters, $\omega$ indeed proves to be the most difficult.
We observe that \abi again yields the most satisfactory approximation across all four parameters. For clarity, we additionally provide a direct comparison between the \abi-generated posterior and the true posterior in Figure \ref{fig:cosine-model-abi}.

\subsection{Lotka-Volterra Model}

The final simulation example investigates the Lotka-Volterra (LV) model \citep{din2013dynamics}, which involves a pair of first-order nonlinear differential equations.
This system is frequently used to describe the dynamics of biological systems involving interactions between two species: a predator population (represented by $y_t$) and a prey population (represented by $x_t$). The populations evolve deterministically according to the following set of equations:
\begin{align*}
    \frac{dx}{dt} &= \alpha x - \beta x y, \quad \frac{dy}{dt} = -\gamma y + \delta x y.
\end{align*}
The changes in population states are governed by four parameters $(\alpha, \beta, \gamma, \delta)$ controlling the ecological processes: prey growth rate $(\alpha)$, predation rate $(\beta)$, predator growth rate $(\delta)$, and predator death rate $(\gamma)$.

\vspace{0.5em}
\begin{figure}[!htp]
\vspace{1em}
\begin{center}
    \includegraphics[width=0.97\columnwidth]{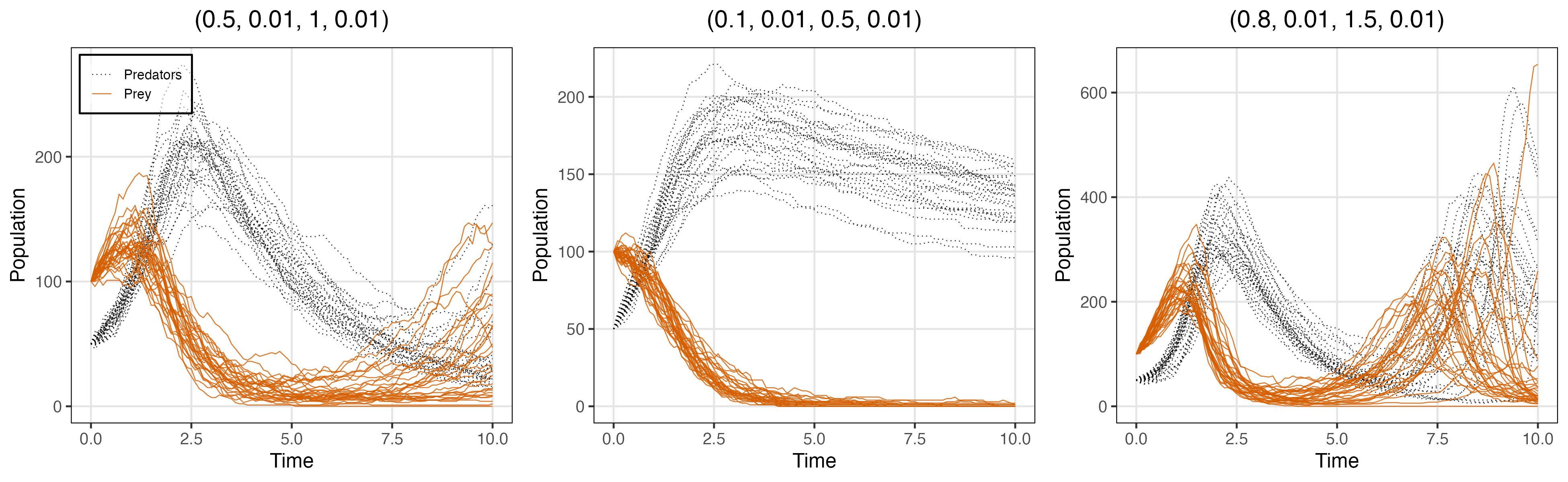}
\caption{30 trajectories sampled from the Lotka-Volterra model, each corresponding to one of three distinct parameter configurations $(\alpha^*, \beta^*, \gamma^*, \delta^*)$.}
\label{fig:lv-model-obs-data}
\end{center}
\end{figure}

Due to the absence of a closed-form transition density, inference in LV models using traditional methods presents significant challenges; however, these systems are particularly well-suited for ABC approaches since they allow for efficient generation of simulated datasets. 
The dynamics can be simulated using a discrete-time Markov jump process according to the Gillespie algorithm \citep{gillespie1976general}. 
At time $t$, we evaluate the following rates,
\begin{gather*}
    r_\alpha(t) = \alpha x_t, \quad r_\beta(t) = \beta x_t y_t, \quad r_\gamma(t) = \gamma y_t, \quad r_\delta(t) = \delta x_t y_t,\\
    r_\Delta(t) \coloneqq  r_\alpha(t) + r_\beta(t) + r_\gamma(t) + r_\delta(t)
\end{gather*}
The algorithm first samples the waiting time until the next event from an exponential distribution with parameter $r_\Delta(t)$, then selects one of the four possible events (prey death/birth, predator death/birth) with probability proportional to its own rate $r_{\boldsymbol{\cdot}}(t)$. 
We choose the truncated uniform prior over the restricted domain $[0, 1] \times [0.0.1]\times [0,2] \times [0,0.1]$.
The true parameter values are $(\alpha^*, \beta^*, \gamma^*, \delta^*) = (0.5, 0.01, 1.0, 0.01)$ \citep{papamakarios2019sequential}.
Posterior estimation in this case is particularly difficult because the likelihood surface contains concentrated probability mass in isolated, narrow regions throughout the parameter domain.
We initialized the populations at time $t=0$ with $(x_0, y_0) = (50, 100)$ and recorded system states at $0.1$ time unit intervals over a duration of $10$ time units, yielding a total of 101 observations as illustrated in Figure \ref{fig:lv-model-obs-data}.

We carried out sequential algorithms across two rounds, with each round utilizing 5,000 training samples. For fair comparison, WABC and ABC-SS were provided with a total of 10,000 training samples.
The resulting posterior approximations are presented in Figure \ref{fig:lv-model}, with black dashed lines marking the true parameter values, as the true posterior distributions are not available for this model.
The results again demonstrate that \abi consistently provides accurate approximations across all four parameters of the model. 

\begin{figure}[!htp]
\vspace{1em}
\begin{center}
    \includegraphics[width=1.03\columnwidth]{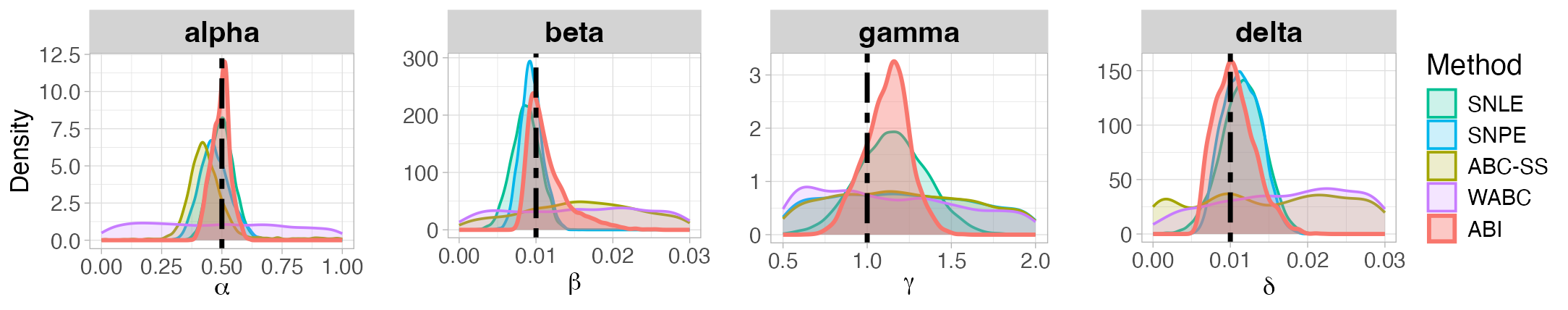}
\caption{Comparison of approximate posterior distributions for the Lotka-Volterra model. True parameter values are indicated by the dashed black lines.}
\label{fig:lv-model}
\end{center}
\end{figure}


\section{Discussion}\label{sec:discussion}

In this work, we introduce the Adaptive Bayesian Inference (\abi) framework, which shifts the focus of approximate Bayesian computation from data‐space discrepancies to direct comparisons in posterior space.
Our approach leverages a novel Marginally-augmented Sliced Wasserstein (\msw) distance---an integral probability metric defined on posterior measures that combines coordinate-wise marginals with random one-dimensional projections. 
We then establish a quantile-representation of \msw that reduces complex posterior comparisons to a tractable distributional regression task.
Moreover, we propose an adaptive rejection‑sampling scheme in which each iteration's proposal is updated via generative modeling of the accepted parameters in the preceding iteration.
The generative modeling–based proposal updates allow \abi to perform posterior approximation without explicit prior density evaluation, overcoming a key limitation of sequential Monte Carlo, population Monte Carlo, and neural density estimation approaches. 
Our theoretical analysis shows that \msw retains the topological guarantees of the classical Wasserstein metric, including the ability to metrize weak convergence, while achieving parametric convergence rates in the trimmed setting when $p=1$. 
The martingale‐based proof of sequential convergence offers an alternative to existing Lebesgue differentiation arguments and may find broader application in the analysis of other adaptive algorithms.

Empirically, \abi delivers substantially more accurate posterior approximations than Wasserstein ABC and summary-based ABC, as well as state-of-the-art likelihood‑free simulators and Wasserstein GAN.
Through a variety of simulation experiments, we have shown that the posterior \msw distance remains robust under small observed sample sizes, intricate dependency structures, and non‐identifiability of parameters. 
Furthermore, our conditional quantile‐regression implementation exhibits stability to network initialization and requires minimal tuning.

Several avenues for future research arise from this work.
One promising direction is to integrate our kernel‑statistic approach into Sequential Monte Carlo algorithms to improve efficiency in likelihood‑free settings.
Future work could also apply \abi to large‑scale scientific simulators, such as those used in systems biology, climate modeling, and cosmology, to spur domain‑specific adaptations of the posterior‑matching paradigm.

In summary, \abi offers a new perspective on approximate Bayesian computation as well as likelihood‐free inference by treating the posterior distribution itself as the primary object of comparison.  
By combining a novel posterior space metric, quantile-regression–based estimation, and generative-model–driven sequential refinement, \abi significantly outperforms alternative ABC and likelihood-free methods. 
Moreover, this posterior‑matching viewpoint may catalyze further advances in approximate Bayesian computation and open new avenues for inference in complex, simulator‐based models.


\section*{Acknowledgements}
The authors would like to thank Naoki Awaya, X.Y. Han, Iain Johnstone, Tengyuan Liang, Art Owen, Robert Tibshirani, John Cherian, Michael Howes, Tim Sudijono, Julie Zhang, and Chenyang Zhong for their valuable discussions and insightful comments.
The authors would like to especially acknowledge Michael Howes and Chenyang Zhong for their proofreading of the technical results. 
W.S.L. gratefully acknowledges support from the Stanford Data Science Scholarship and the Two Sigma Graduate Fellowship Fund during this research. 5366
W.H.W.'s research was partially supported by NSF grant 2310788.

\bibliographystyle{plainnat}
\bibliography{refs}


\newpage
\begin{appendices}

{\LARGE\bfseries\section*{Appendix}}
\noindent The Appendix section contains detailed proofs of theoretical results, supplementary results and descriptions on the simulation setup, 
as well as further remarks.

\section{Marginally-augmented Sliced Wasserstein (\msw) Distance}

In this section, we further discuss the Marginally-augmented Sliced Wasserstein (\msw) distance. We begin by presenting the formal definition of the untrimmed \msw distance.

\begin{defn}\label{def:msw-untrimmed}
Let $p \geq 1$ and $\mu, \nu \in \mathcal{P}_p(\R^d)$ with $d\geq 1$. The Marginally-augmented Sliced Wasserstein (\msw) distance between $\mu$ and $\nu$ is defined as
\begin{align}
\msw_{p}(\mu,\nu) = \lambda \, \underbrace{\frac{1}{d} \sum_{j=1}^{d} \W_{p} \Bigl(({e_j})_\# \mu, ({e_j})_\# \nu \Bigr)}_{\text{marginal augmentation}} + (1 - \lambda) \,  \underbrace{ \left (\E_{\varphi \sim \sigma} \Bigl[\W_{p}^p \left(\varphi_\# \mu, \varphi_\#\nu \right) \Bigr] \right)^{1/p} }_{\text{Sliced Wasserstein distance}},
\end{align}
where $\lambda \in (0, 1)$ is a mixing parameter, $\sigma(\cdot)$ denotes the uniform probability measure on the unit sphere $\smb^{d-1}$, and $({e_j})_\#$ represents the pushforward of a measure under projection onto the $j$-th coordinate axis.
\end{defn}

\section{Curse of Dimensionality in Rejection-\abc{ABC}}
In practice, observed data often involve numerous covariates, nonexchangeable samples, or dependencies between samples, resulting in a high-dimensional sample space.
Examples of such scenarios include the Lotka-Volterra model for modeling predator-prey interactions, the M/G/1 queuing model, and astronomical models of high-redshift galaxy morphology (\citealp{cameron2012approximate, bernton2019approximate}).
However, it is well known that distance metrics such as the Euclidean distance become highly unreliable in high dimensions, as observations concentrate near the hypersphere---a phenomenon called the \textit{curse of dimensionality}. 
In the following example, we demonstrate that even when the likelihood has fast decay, the number of samples needed to obtain an observation within an $\epsilon$-ball around $\xobs$ grows exponentially with the dimensionality of the observation space.

\begin{example}[High-dimensional Gaussian]
    For an illustrative example, consider the case when the prior distribution is uninformative, i.e.,
\begin{align*}
     \theta \in \R^d &\sim \mathrm{Unif}[-1, 1]^d ,\\
     X | \theta \in \R^n&\sim \N(\mathbf{H}\theta, \sigma^2 \mathbf{I}_n) ,
\end{align*}
where $\mathbf{H} \in \R^{n \times d}$ is a matrix that maps from the low-dimensional parameter space to the high-dimensional observation space. 
\end{example}

\begin{lemma}[Curse of dimensionality]\label{lemma:curse-of-dim-gaussian}
    The expected number of samples needed to produce a draw within the $\epsilon$-ball centered at $\xobs$ grows as  $\Omega(\exp(n))$ for $0 < \epsilon < 1/2(n + \|\mathbf{H}\theta - \xobs\|_2^2 / \sigma^2)^{1/2}$.
\end{lemma}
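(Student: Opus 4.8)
The plan is to show that the per-draw acceptance probability $\p_\theta := P_\theta^{(n)}(X \in B_\epsilon(\xobs))$ decays exponentially in $n$, so that the expected number of i.i.d.\ draws needed to obtain a single accepted sample, namely $1/\E_\theta[\p_\theta]$ (or more precisely the reciprocal of the marginal acceptance probability $\int \p_\theta\, \pi(d\theta)$), grows like $\exp(\Omega(n))$. First I would fix $\theta$ and write $X \mid \theta \sim \N(\mathbf{H}\theta, \sigma^2 \mathbf{I}_n)$, so that $X - \xobs \mid \theta$ is Gaussian with mean $m_\theta := \mathbf{H}\theta - \xobs$ and covariance $\sigma^2\mathbf{I}_n$. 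Then $\|X - \xobs\|_2^2/\sigma^2 \mid \theta$ follows a noncentral $\chi^2_n$ distribution with noncentrality parameter $\kappa_\theta := \|m_\theta\|_2^2/\sigma^2$, whose mean is $n + \kappa_\theta$. The acceptance event $\|X-\xobs\|_2 \le \epsilon$ is the event that this noncentral $\chi^2$ variable is at most $\epsilon^2/\sigma^2$, which — under the stated range $0 < \epsilon < \tfrac12(n + \kappa_\theta)^{1/2}$, equivalently $\epsilon^2/\sigma^2 < \tfrac14(n+\kappa_\theta)$ — is a lower-tail deviation of the variable to below one quarter of its mean.

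The key step is a Chernoff / large-deviations bound for the lower tail of a noncentral $\chi^2_n$ random variable $T$. Using the moment generating function $\E[e^{-sT}] = (1+2s)^{-n/2}\exp\!\big(-\tfrac{s\kappa_\theta}{1+2s}\big)$ for $s>0$, Markov's inequality gives, for any threshold $a = \epsilon^2/\sigma^2$,
\begin{align*}
    \p_\theta \;=\; \p(T \le a) \;\le\; e^{sa}\,(1+2s)^{-n/2}\exp\!\Big(-\tfrac{s\kappa_\theta}{1+2s}\Big).
\end{align*}
Dropping the (nonpositive-exponent) noncentrality factor and choosing, say, $s=1$ yields $\p_\theta \le e^{a} 3^{-n/2}$; since $a = \epsilon^2/\sigma^2 < \tfrac14(n+\kappa_\theta)$, optimizing over $s$ (or just taking $a \le n/4$ in the central-part bound $\p(T \le a) \le (a/n)^{n/2}e^{(n-a)/2}$) gives a clean bound of the form $\p_\theta \le c_0^{\,n}$ with an absolute constant $c_0 = (1/4)^{1/2}e^{3/8} < 1$, uniformly over $\theta$. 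Integrating against the prior, the marginal acceptance probability satisfies $\int \p_\theta\,\pi(d\theta) \le c_0^{\,n}$, so the expected number of draws is at least $c_0^{-n} = \exp(\Omega(n))$, as claimed.

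I expect the main obstacle to be purely bookkeeping: getting the noncentral $\chi^2$ tail bound stated cleanly so that the constant is genuinely independent of $\theta$ (since $\kappa_\theta$ depends on $\theta$ and $\xobs$) and verifying that the hypothesis $\epsilon < \tfrac12(n+\|\mathbf{H}\theta-\xobs\|_2^2/\sigma^2)^{1/2}$ is exactly what is needed to place $a=\epsilon^2/\sigma^2$ strictly below a fixed fraction of the mean $n+\kappa_\theta$, so that the large-deviation rate is bounded away from zero. A minor subtlety is that the problem statement phrases the threshold in terms of $\|\mathbf{H}\theta - \xobs\|$ for a generic $\theta$; I would either interpret this as holding for each fixed $\theta$ in the support (giving a $\theta$-uniform bound after taking the worst case, which only makes $c_0$ larger but still $<1$), or restrict to the relevant $\theta$ near $\theta^*$. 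Either way the exponential-in-$n$ lower bound on the rejection count follows, and no delicate estimate beyond the standard Gaussian/$\chi^2$ Chernoff bound is required.
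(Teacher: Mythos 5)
Your plan is essentially the paper's own argument: condition on $\theta$, observe that $\sigma^{-2}\|X-\xobs\|_2^2$ is noncentral $\chi^2_n(\zeta)$ with $\zeta=\|\mathbf{H}\theta-\xobs\|_2^2/\sigma^2$ and mean $n+\zeta$, and show the lower-tail acceptance probability decays exponentially in $n$, so the expected number of draws is $\Omega(e^n)$. The only difference is in how the tail bound is obtained: the paper plugs $c=n+\zeta-\epsilon^2\ge\tfrac34(n+\zeta)$ into the ready-made inequality of \citet{ghosh2021exponential}, $\p(Y<n+\zeta-c)\le\exp\bigl(-nc^2/(4(n+2\zeta)^2)\bigr)$, which is uniform in $\zeta$, whereas you derive a Chernoff bound from the noncentral-$\chi^2$ MGF. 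One intermediate step in your writeup does not survive the required uniformity in $\theta$: you cannot drop the noncentrality factor $\exp\bigl(-s\zeta/(1+2s)\bigr)$ and still get a bound of the form $c_0^n$, because the hypothesis only gives $a=\epsilon^2/\sigma^2<\tfrac14(n+\zeta)$, and when $\zeta$ is large (say $\zeta>3n$) the threshold $a$ can exceed $n$; in that regime $\inf_{s>0}\{sa-\tfrac n2\log(1+2s)\}=0$, so the centered Chernoff bound is trivial, and the central-$\chi^2$ shortcut with ``$a\le n/4$'' is likewise unavailable. The fix is simply to keep the noncentrality term, which is exactly what makes the paper's cited inequality $\zeta$-uniform: with $s=1$ your exponent is $a-\tfrac n2\log 3-\zeta/3\le n\bigl(\tfrac14-\tfrac12\log 3\bigr)-\zeta/12\le -0.29\,n$ for all $\zeta\ge0$, giving $\p_\theta\le e^{-0.29 n}$ uniformly over $\theta$, after which your conclusion (conditionally on $\theta$, or after integrating against the prior) goes through as stated.
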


\begin{proof}
    Since $\bX |\theta \sim \N(\mathbf{H}\theta, \sigma^2 \mathbf{I}_n)$, we have $\bX - \xobs \mid \theta \sim \N(\mathbf{H} \theta - \xobs, \sigma^2 \mathbf{I}_n)$. 
Define $\Delta := \|\mathbf{H} \theta - \xobs\|_2$, so $\|\bX - \xobs\|_2^2 \mid \theta \sim \sigma^2 \chi_n^2(\zeta)$ with noncentrality parameter $\zeta = \Delta^2 / \sigma^2$. 
Therefore, the expected distance is given by:
\begin{align*}
    \E\left[\sigma^{-2}\|\bX - \xobs\|^2_2\right] = n + \sigma^{-2}(\|\mathbf{H}\theta\|_2^2 + \|\xobs\|_2^2  - 2(\xobs)^\top \mathbf{H}\theta) = n + \Delta^2 / \sigma^2,
\end{align*}
which scales linearly in $n$ as the dimension $n \to \infty$. 
Let $Y \sim \chi_n^2(\zeta)$, then for $0 < c < n + \zeta = n + \Delta^2 / \sigma^2$, by Theorem 4 of \cite{ghosh2021exponential} we have,
\begin{align*}
    \p(Y < n + \zeta -c ) \leq \exp \left(\frac{n}{2} \left( \frac{c}{n + 2\zeta} + \log \left(1 - \frac{c}{n + 2\zeta}\right)\right)\right) \leq \exp\left(- \frac{nc^2}{4(n + 2\zeta)^2}\right).
\end{align*}
For $0< \epsilon < 1/2\sqrt{n + \zeta}$, where $\zeta = \|\mathbf{H}\theta - \xobs\|_2^2 / \sigma^2$, we set $c := n + \zeta - \epsilon^2 > 0$. Applying the inequality above yields
\begin{align*}
    \p\left(\sigma^{-2}\|\bX - \xobs\|_2^2 < \epsilon^2 | \theta \right) \leq \exp\left(-\frac{n (n + \zeta - \sigma^2\epsilon^2)^2}{4 (n + 2\zeta)^2}\right) = O\left(\exp\left(-n\right)\right).
\end{align*}
Therefore, the expected number of samples needed to produce a draw within the $\epsilon$-ball centered at $\xobs$ is $\Omega(\exp(n))$; specifically, the number of required samples grows exponentially with the dimension $n$.
\end{proof}

\begin{lemma}[High-dimensional Bounded Density]\label{lemma:curse-of-dim-bounded}
    Let $\bX = (X_1, \dots, X_n) \in \R^n$ be a random vector with a joint density function $f(x_1, \dots, x_n)$ that is bounded by $K > 0$. 
    Then, for any $\epsilon > 0$, there exists a constant $C > 0$, independent of $n$, such that
    \begin{align*}
        \p(\|\bX\| \leq \epsilon) \leq C \epsilon^n.
    \end{align*}
    Consequently, if we use an ABC‐acceptance region of radius $\eps$ around the observed $\xobs$, the expected number of simulations needed to obtain a single accepted draw is at least $\frac{1}{C\,\epsilon^n}$,
    which grows exponentially in the dimension $n$.
\end{lemma}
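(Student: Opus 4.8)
The plan is to bound the event probability directly by the Lebesgue volume of the Euclidean ball, using nothing beyond the uniform density bound. First I would write
\[
  \p(\|\bX\|\le\epsilon) \;=\; \int_{\{x\in\R^n\,:\,\|x\|\le\epsilon\}} f(x_1,\dots,x_n)\,dx_1\cdots dx_n \;\le\; K\cdot\mathrm{Leb}\bigl(B_\epsilon(0)\bigr),
\]
where $B_\epsilon(0)\subseteq\R^n$ is the $\ell_2$ ball of radius $\epsilon$. Invoking the closed form $\mathrm{Leb}(B_\epsilon(0)) = V_n\,\epsilon^n$, with $V_n = \pi^{n/2}/\Gamma(n/2+1)$ the volume of the unit $n$-ball, this already yields $\p(\|\bX\|\le\epsilon)\le K\,V_n\,\epsilon^n$.

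It then only remains to absorb the dimension-dependent prefactor $V_n$ into a single constant. I would note that $V_n$ is finite for every $n$ and that $V_n\to 0$ as $n\to\infty$ (by Stirling's approximation, $V_n\sim (2\pi e/n)^{n/2}/\sqrt{\pi n}$), so $V^\star := \sup_{n\ge 1} V_n < \infty$; concretely $V^\star = V_5 = 8\pi^2/15 < 6$. Setting $C := K\,V^\star$—a constant depending on neither $n$ nor $\epsilon$—gives $\p(\|\bX\|\le\epsilon)\le C\,\epsilon^n$ for all $n$, as claimed. (Note the stated lemma only requires independence of $n$; the argument in fact delivers independence of $\epsilon$ as well.)

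For the rejection-ABC consequence I would observe that the shifted vector $\bX-\xobs$ has density $x\mapsto f(x+\xobs)$, which is again bounded by $K$, so the same argument gives $p_{\mathrm{acc}} := \p(\|\bX-\xobs\|\le\epsilon)\le C\,\epsilon^n$. Since the number of independent simulations until the first accepted draw is geometric with success probability $p_{\mathrm{acc}}$, its expectation equals $1/p_{\mathrm{acc}}\ge 1/(C\epsilon^n) = C^{-1}(1/\epsilon)^n$, which grows exponentially in $n$ whenever $\epsilon<1$. I do not expect a genuine obstacle here: the result is an immediate consequence of the density bound together with the explicit ball-volume formula, the only mildly delicate point being the uniform bound $\sup_n V_n<\infty$, which is handled by the Stirling asymptotics for $\Gamma$.
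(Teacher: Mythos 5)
Your proof is correct and follows essentially the same route as the paper: bound the probability by $K$ times the volume of the Euclidean ball, then use Stirling's approximation to show the dimension-dependent prefactor $\pi^{n/2}/\Gamma(n/2+1)$ is uniformly bounded (indeed tends to zero) so it can be absorbed into a constant $C$ independent of $n$. The additional remarks on shifting by $\xobs$ and the geometric waiting time are accurate and simply make explicit what the paper leaves implicit.
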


\begin{proof}
    We begin by bounding the probability using the volume of an $n$-dimensional ball:
    \begin{align*}
        \p(\|\bX\| \leq \epsilon) = \int_{\|\bx\|\leq \epsilon} f(\bx) \, d \bx &\leq \int_{\|\bx\|\leq \epsilon} K\, d \bx = K \mathrm{Vol}(B^n_{\epsilon})\\
        &= \frac{K \pi^{n/2} \epsilon^n}{\Gamma\left(\frac{n}{2} + 1\right)},
    \end{align*}
    where $B^n_{r}$ denotes a ball of radius $r$ in $\R^n$.
    To analyze the behavior of this bound as $n$ increases, we apply Stirling's approximation:
    \begin{align*}
        \frac{\pi^{n/2}}{\Gamma\left(\frac{n}{2} + 1\right)} \sim \frac{\pi^{n/2}}{\sqrt{\pi n} \left(\frac{n}{2e}\right)^{n/2}} = \frac{1}{\sqrt{\pi n}} \left(\frac{2 \pi e}{n}\right)^{n/2} \overset{n \to \infty}{\to} 0
    \end{align*}
    Therefore, we can define a constant $C$, independent of $n$, as
    \begin{align*}
        C := \sup_{n\ge1} \frac{K\pi^{n/2}}{\Gamma\left(\frac{n}{2}+1\right)} < \infty.
    \end{align*}
    The inequality then follows as claimed for all $n \geq 1$.
\end{proof}

\section{Adaptive Rejection Sampling}

\subsection{Warmup: Adaptive Bayesian Inference for the Univariate Gaussian Model}\label{example:gaussian-gaussian}

As a simple illustration of our sequential procedure, consider the conjugate Gaussian–Gaussian model:
\begin{gather*}
    \theta \sim \mathcal{N}(0,20), \qquad X \mid \theta\sim \mathcal{N}(\theta,1).
\end{gather*}
The prior on $\theta$ is diffuse, while $X\mid\theta$ concentrates around the true parameter.  

We run \abi for $\mathsf{T}$ iterations with a decreasing tolerance sequence $\epsilon_1> \epsilon_2> \cdots> \epsilon_\mathsf{T}$.  In this univariate setting, we use the Euclidean distance, and write $B_r(x)= \{y:|y-x|\leq r\}$.
Initialize the proposal as $\pi^{(0)}(\theta):=\pi(\theta)$.  
At iteration $t$, we draw $\mathsf{N}$ synthetic parameter–data pairs from the current proposal $\pi^{(t)}(\theta) := \pi(\theta \mid E_{t-1})$:
\begin{enumerate}
    \item Draw $\theta \sim \pi^{(t)}(\theta)$;
    \item Draw $X \mid \theta \sim P_\theta^{(n)}(\cdot \mid E_{t-1})$ by rejection sampling using Algorithm \ref{step:rej-samp-while}.
\end{enumerate}
This yields the selected set, whose distribution is 
$\pi(\theta,X\mid E_{t-1})$, which we denote by $\pi^{(t)}(\theta,X)$:
\begin{align*}
  \mathsf{S}_0^{(t)}=\{(\theta^{(i)},X^{(i)})\}_{i=1}^\mathsf{N}\,.
\end{align*}
We then retain those $\theta^{(i)}$ whose associated $X^{(i)}$ falls within $B_{\epsilon_t}(\xobs)$:
\begin{align*}
\mathsf{S}_{\theta,*}^{(t)}
&=\bigl\{\theta^{(i)} : (\theta^{(i)},X^{(i)})\in\mathsf{S}_0^{(t)}
\ \text{and}\ X^{(i)}\in B_{\epsilon_t}(\xobs)\bigr\}.
\end{align*}
Let $\bar{\theta}_*^{(t)}$ and $(\hat{\sigma}^{(t)}_*)^{2}$ denote the sample mean and variance of the set of retained samples $\mathsf{S}_{\theta,*}^{(t)}$. Since a Gaussian distribution is fully determined by its mean and variance, proposal distribution for the next iteration is updated to be 
\begin{align*}
\theta \sim \N\bigl(\bar\theta_*^{(t)}, \,(\hat{\sigma}^{(t)}_*)^{2}\bigr).
\end{align*}

We generated a single observed value $\xobs = 6.24$ from the model; by conjugacy the exact posterior is $\pi(\theta\mid \xobs)=\N(5.94,\;0.95)$.
At each iteration $t$, we drew $\mathsf N=10^4$ synthetic pairs $(\theta^{(i)},X^{(i)})$ from the current Gaussian proposal and retained those $\theta^{(i)}$ for which $X^{(i)}\in B_{\epsilon_t}(\xobs)$. We used the decreasing tolerance schedule
\begin{align*}
\epsilon=(2,\;0.7,\;0.3,\;0.01,\;0.005,\;0.003,\;0.001,\;0.001,\;0.001).
\end{align*}
The retained $\theta$-values were used to fit the next Gaussian proposal by moment matching.  We repeated this procedure for $\mathsf{T}=9$ iterations. 
Let the empirical retention rate be defined as
\begin{align*}
    \frac{1}{\mathsf N}\,\bigl|\{\,i:|X^{(i)}-\xobs|\le0.001\}\bigr|.
\end{align*}
Figure \ref{fig:simple-gaussian-example-fig} displays the smoothed posterior densities across iterations, together with the corresponding retention rates.  From our simulations, we make the following observations:  
\begin{enumerate}[label=(\arabic*).]
  \item The sequential rejection sampling refinement is \emph{self-consistent}: repeated application does not cause the sampler to drift away from the true posterior.  
  \item In this univariate example, the sampler converges rapidly, requiring just just three iterations in this univariate example.  
\end{enumerate}

\begin{figure}[H]
\centering
\includegraphics[width=0.96\columnwidth]{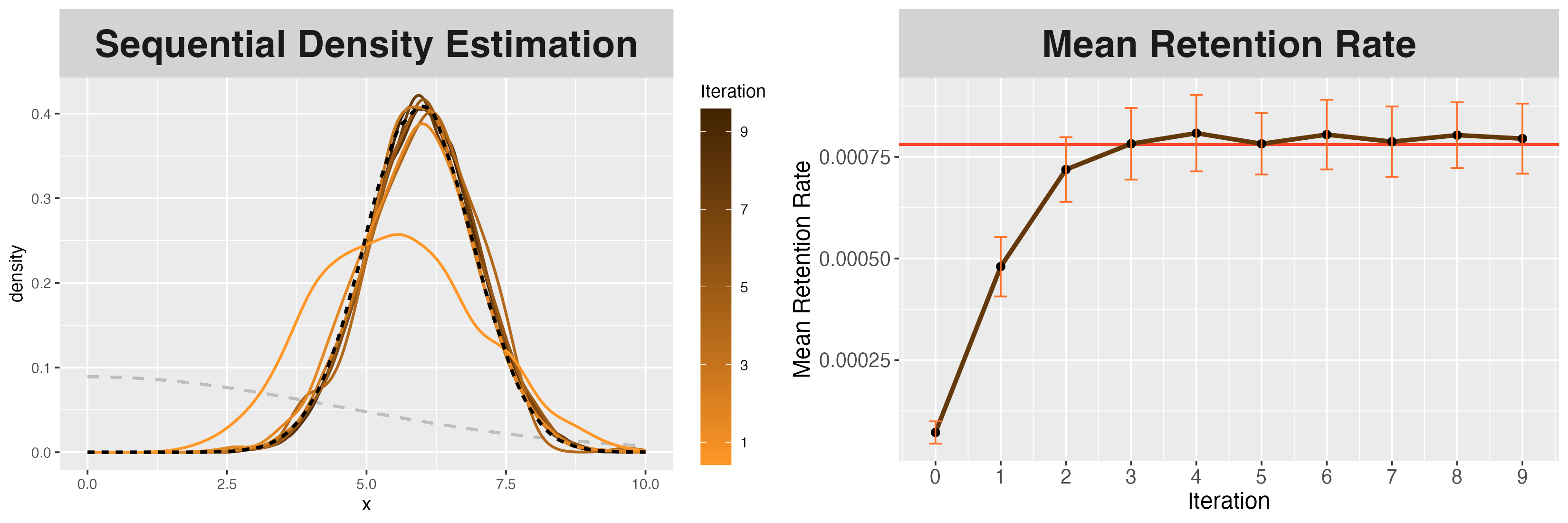}
\caption{Adaptive inference on the univariate Gaussian model.  
\textbf{Left:} Smoothed proposal and posterior densities over iterations; prior shown in gray, true posterior in black. 
\textbf{Right:} Empirical retention rate per iteration. The red line indicates the rate obtained using the true posterior, while the error bands represent $\pm 1$ standard deviation of the retention rate, calculated from 100 independent repetitions.}
\label{fig:simple-gaussian-example-fig}
\end{figure}

\section{Posterior Sufficiency}
First, for completeness and measure-theoretic rigor, we formally define the probability space. Let the parameter and data $(\Theta, X)$ be jointly defined on a nice probability space $(\Omega \times \X^n, \B \otimes \mathcal{A}, P)$, where $(\Omega, \mathcal{B}) \subseteq (\R^d, \mathcal{B}(\R^d))$ is the measurable Polish parameter space and $(\X^n, \mathcal{A}) \subseteq (\R^{d_X}, \mathcal{B}(\R^{d_X}))$ is the measurable observation space, with $d, d_X \in \mathbb{N}_+$. The prior probability measure $\Pi$ on $(\Omega, \mathcal{B})$ is assumed to be absolutely continuous with respect to the Lebesgue measure, with corresponding density function $\pi(d\theta)$ for $\theta \in \Omega$. For each $\theta \in \Omega$, let $P^{(n)}_\theta$ denote the conditional probability measure on $(\X^n, \mathcal{A})$ given $\Theta = \theta$.

We provide the definition of Bayes sufficiency presented in \cite{godambe1968bayesian} below.

\begin{defn}[Bayes Sufficiency]
Let $(\Omega, \mathcal{B})$ be a measurable parameter space, $(\mathcal{X}, \mathcal{A})$ be a sample space, and $C_\pi$ be a countable class of prior distributions on $(\Omega, \mathcal{B})$. A statistic $T: \mathcal{X} \to \mathcal{T}$ is said to be \textit{Bayes sufficient} with respect to $C_\pi$ if, for any $\pi \in C_\pi$, the posterior density $\pi(\theta \mid X)$ depends on $X$ only through $T(X)$ for almost every $x$, i.e.,
there exists a function $g: \Omega \times \mathcal{T} \to \mathbb{R}^+$ such that
\begin{align}
    \pi(\theta \mid x) = g(\theta, T(x)) \quad \text{for } \pi\text{-almost every } \theta \in \Omega \text{ and } P_\pi\text{-almost every } x \in \mathcal{X},
\end{align}
where $P_\pi$ is the marginal distribution of $X$ under the prior $\pi$ for all $\pi \in C_\pi$.
Equivalently, for any $\pi \in C_\pi$ and $B \in \mathcal{B}$,
\begin{align}
\E[\ind_B(\Theta) \mid X] = \E[\ind_B(\Theta) \mid T] \quad P_\pi\text{-almost surely}.
\end{align}
In other words, conditional on $T(X)$, the observation $X$ provides no additional information about $\Theta$ under the specified prior $\pi$.
\end{defn}

\begin{thm}[Minimal Bayes Sufficiency of Posterior Distribution]
    Let $(\Omega, \mathcal{B})$ be a measurable parameter space and $(\mathcal{X}, \mathcal{A})$ be a sample space. For $X \in \mathcal{X}$, define $T: \mathcal{X} \to \mathcal{M}(\Omega)$ as the function that maps each observation $X$ to its corresponding posterior distribution under the prior class $C_\pi$, i.e., $T(X) := \pi(d\theta \mid X)$, where $\mathcal{M}(\Omega)$ denotes the space of probability measures on $(\Omega, \mathcal{B})$.
    Note that this is a random probability measure on $\Omega$, and that $\mathcal{M}(\Omega)$ is a Polish space (under the weak topology).
    Then:
    \begin{enumerate}
        \item $T$ is Bayes sufficient for $\Theta$ with respect to $\pi$. 
        \item If $T'$ is any other Bayes sufficient statistic with respect to $\pi$, then $T'(x) = T'(x')$ implies that $T(x) = T(x')$, $P_\pi$-almost every $x, x'$.
    \end{enumerate}
    Consequently, the posterior map $X \mapsto \pi(d \theta \mid X)$ is minimally Bayes sufficient for the specified prior $\pi$.
\end{thm}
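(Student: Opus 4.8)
The plan is to establish the two assertions separately, relying on the existence of a regular conditional distribution (disintegration), which is guaranteed because both $(\Omega,\mathcal{B})$ and $(\X^n,\mathcal{A})$ are Polish. Fix the prior $\pi\in C_\pi$. By the disintegration theorem there is a measurable map $x\mapsto\pi(d\theta\mid x)$ from $(\X^n,\mathcal{A})$ into $\mathcal{M}(\Omega)$, equipped with its Borel $\sigma$-algebra (generated by the weak topology), such that for every $B\in\mathcal{B}$ one has $\E[\ind_B(\Theta)\mid X]=\pi(B\mid X)$, $P_\pi$-a.s. Set $T(X):=\pi(d\theta\mid X)$; this is $\sigma(X)$-measurable and $\mathcal{M}(\Omega)$-valued, so $\sigma(T(X))\subseteq\sigma(X)$.

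For the first claim (Bayes sufficiency of $T$), the key observation is that for each fixed $B\in\mathcal{B}$ the random variable $\E[\ind_B(\Theta)\mid X]=T(X)(B)$ is by construction a measurable function of $T(X)$ — indeed $\nu\mapsto\nu(B)$ is Borel on $\mathcal{M}(\Omega)$ — hence $\sigma(T(X))$-measurable. Applying the tower property along the inclusion $\sigma(T(X))\subseteq\sigma(X)$ gives
\begin{align*}
\E[\ind_B(\Theta)\mid T(X)]=\E\bigl[\E[\ind_B(\Theta)\mid X]\,\big|\,T(X)\bigr]=\E[\ind_B(\Theta)\mid X]\quad P_\pi\text{-a.s.},
\end{align*}
since the inner conditional expectation is already $\sigma(T(X))$-measurable. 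Letting $B$ range over a countable generating class of $\mathcal{B}$ and invoking a monotone-class argument together with the regularity of the conditional law upgrades this to $\pi(d\theta\mid X)=\pi(d\theta\mid T(X))$, $P_\pi$-a.s., which is exactly Bayes sufficiency of $T$.

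For the second claim (minimality), let $T'$ be any Bayes-sufficient statistic for $\pi$. By definition there is a measurable $g:\Omega\times\mathcal{T}\to\R^+$ and a $P_\pi$-null set $N\subseteq\X^n$ such that for every $x\notin N$, $\theta\mapsto g(\theta,T'(x))$ is a version of the posterior density, i.e.\ $\pi(d\theta\mid x)=g(\theta,T'(x))\,\pi(d\theta)$. Thus $T(x)=\pi(d\theta\mid x)$ depends on $x$ only through $T'(x)$ on the full-measure set $\X^n\setminus N$: there is a measurable $\Psi$ with $T=\Psi\circ T'$ there. In particular, for $x,x'\notin N$ with $T'(x)=T'(x')$ we obtain $T(x)=\Psi(T'(x))=\Psi(T'(x'))=T(x')$, which is the asserted implication for $P_\pi$-almost every $x,x'$. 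Combining the two claims, $T$ is a Bayes-sufficient statistic that is a function of every other Bayes-sufficient statistic, i.e.\ it is minimal.

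The main obstacle is measure-theoretic bookkeeping rather than any deep idea: one must (i) invoke disintegration to obtain a jointly measurable, $\mathcal{M}(\Omega)$-valued version of the posterior map and check that $\nu\mapsto\nu(B)$ is Borel; (ii) note that Bayes sufficiency of $T'$ only supplies the factorization $\pi(\cdot\mid x)=g(\cdot,T'(x))$ on a $P_\pi$-full set, so the minimality conclusion is intrinsically an almost-everywhere statement; and (iii) pass from equality of $\E[\ind_B(\Theta)\mid X]$ for each $B$ to equality of the random measures, which requires a common null set across a countable generating class of $\mathcal{B}$. None of these steps is difficult, but they must be assembled carefully to make the ``minimal'' claim rigorous.
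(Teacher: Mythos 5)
Your proposal is correct and follows essentially the same route as the paper: sufficiency of $T$ because the conditional law of $\Theta$ given $X$ is itself a measurable function of $T(X)$ (your tower-property step just makes explicit what the paper asserts directly), and minimality by noting that Bayes sufficiency of any $T'$ forces the posterior—and hence $T$—to depend on $x$ only through $T'(x)$. The extra measure-theoretic care (Borel measurability of $\nu\mapsto\nu(B)$, common null sets over a countable generating class) is a refinement of, not a departure from, the paper's argument.
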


\begin{proof}
We will show that $T$ is Bayes sufficient and minimally so in two parts.

\paragraph{$T$ is Bayes sufficient}
For any measurable $B \subseteq \Omega$, by the definition of conditional expectation, for $P_\pi$-almost every $x \in \mathcal{X}$ we have
\begin{align*}
    \E\left[\ind_B(\Theta) \mid X=x\right] &= \int_B \pi\left(d\theta \mid x\right).
\end{align*}
Thus, for $P_\pi$-almost every $x$ and for every measurable set $B$ we obtain
\begin{align*}
    \E\left[\ind_B(\Theta) \mid X=x\right]  &= \int_B \pi\left(d\theta \mid x\right)\\
    &= \int_B \pi\left(d\theta \mid T(x)\right) \\
    &= \E\left[\ind_B(\Theta) \mid T(X)=T(x)\right].
\end{align*}
Since this equality holds for every measurable set $B$, we conclude that the conditional distributions $\pi\left(d\theta \mid X=x\right)$ and $\pi\left(d\theta \mid T(x)\right)$ are equal for $P_\pi$-almost every $x$. By the definition of Bayes sufficiency, this shows that $T$ is Bayes sufficient.

\paragraph{Minimality of $T$}
Let $T'$ be any other Bayes‑sufficient statistic.
If $T'(x)=T'(x')$, Bayes sufficiency of $T'$ gives
\begin{align*}
  \pi(d\theta \mid x)=\pi(d\theta \mid T'(x))
  =\pi(d\theta \mid T'(x'))
  =\pi(d\theta \mid x')
  \qquad (P_\pi\text{-a.e.\ }x,x')
\end{align*}
for all $\pi \in C_\pi$.
But $T(x)=\pi(d\theta \mid x)$, hence $T(x) = T(x')$ almost surely whenever $T'(x)=T'(x')$. 
Thus $T$ is minimal.
\end{proof}

\section{Proofs and Technical Results}

\subsection{Proof of Theorem \ref{thm:ars-sample-complexity}}
\begin{assumption}[Local Positivity]\label{assump:tv-dist-overlapping-supp}
    There exists global constants $c > 0$ and $\gamma>0$ such that, for all $\theta \in \supp(\pi^{(t)})$, $P_\theta^{(n)}(A_t)$ is uniformly bounded away from zero, i.e., $\inf \limits_{\theta \in \supp(\pi^{(t)})} P_\theta^{(n)}(A_t) \geq  c \epsilon_t^{\gamma}$, where
    \begin{align*}
        \bX &= (\bX_1, \dots, \bX_n) \in \R^{nd_X},\\
        A_t &= \{x : \widehat{\msw}_{p, \delta}(\bX, \bxobs) \leq \epsilon_t\},\\
        E_t &= \{\omega : \widehat{\msw}_{p, \delta}(\bX\left(\omega), \bxobs\right) \leq \epsilon_t \} = \{\omega: \bX(\omega) \in A_t\}.
    \end{align*}
\end{assumption}

\begin{thm}[Error Rate of \abc{ARS}]\label{thm:ars-tv-error}
Suppose Assumption \ref{assump:tv-dist-overlapping-supp} holds.
Then, the total variation distance between the marginal distributions over $\theta$, corresponding to the exact rejection sampling procedure in Algorithm \ref{step:rej-samp-while} and the \abc{ARS} procedure in Algorithm \ref{alg:approx-rej-samp}, converges to zero at a rate of $O(\exp(-\mathsf{R}\epsilon_t^{\gamma}))$ as $\mathsf{R} \to \infty$.
\end{thm}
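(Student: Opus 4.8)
The plan is to write down the two $\theta$-marginals explicitly and then bound their total variation distance by exploiting a uniform lower bound on the \abc{ARS} retention probability furnished by Assumption~\ref{assump:tv-dist-overlapping-supp}.

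\textbf{Identifying the marginals.} In the exact procedure one draws $\theta\sim\pi^{(t)}$ and then resamples $X\sim P_\theta^{(n)}$ until $X\in A_t$. Rejection acts only on $X$, and by Assumption~\ref{assump:tv-dist-overlapping-supp} the per-draw acceptance probability $a(\theta):=P_\theta^{(n)}(A_t)\ge c\,\epsilon_t^{\gamma}>0$ uniformly on $\supp(\pi^{(t)})$, so the inner loop terminates almost surely and \emph{every} proposed $\theta$ is retained; hence the $\theta$-marginal of the exact sampler is exactly $\pi^{(t)}$. In \abc{ARS} (Algorithm~\ref{alg:approx-rej-samp}) a proposed $\theta$ is retained iff at least one of $\mathsf{R}$ i.i.d.\ draws $X\sim P_\theta^{(n)}$ lands in $A_t$, an event of probability $p_{\mathsf{R}}(\theta):=1-(1-a(\theta))^{\mathsf{R}}$, so the \abc{ARS} $\theta$-marginal is the reweighted, renormalized law
\begin{align*}
\pi^{(t)}_{\mathsf{ARS}}(\theta)=\frac{\pi^{(t)}(\theta)\,p_{\mathsf{R}}(\theta)}{Z_{\mathsf{R}}},\qquad Z_{\mathsf{R}}:=\E_{\pi^{(t)}}\!\bigl[p_{\mathsf{R}}(\theta)\bigr].
\end{align*}

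\textbf{Uniform weight control and the TV bound.} From Assumption~\ref{assump:tv-dist-overlapping-supp} and the elementary bound $1-x\le e^{-x}$,
\begin{align*}
0\le 1-p_{\mathsf{R}}(\theta)=(1-a(\theta))^{\mathsf{R}}\le(1-c\,\epsilon_t^{\gamma})^{\mathsf{R}}\le e^{-c\,\mathsf{R}\,\epsilon_t^{\gamma}}=:\eta\qquad\text{on }\supp(\pi^{(t)}),
\end{align*}
so $p_{\mathsf{R}}(\theta)\in[1-\eta,1]$, and averaging gives $Z_{\mathsf{R}}\in[1-\eta,1]$ as well. Therefore, on $\supp(\pi^{(t)})$,
\begin{align*}
\bigl|\pi^{(t)}(\theta)-\pi^{(t)}_{\mathsf{ARS}}(\theta)\bigr|=\frac{\pi^{(t)}(\theta)}{Z_{\mathsf{R}}}\,\bigl|Z_{\mathsf{R}}-p_{\mathsf{R}}(\theta)\bigr|\le\frac{\eta}{1-\eta}\,\pi^{(t)}(\theta),
\end{align*}
using that $Z_{\mathsf{R}}$ and $p_{\mathsf{R}}(\theta)$ both lie in the interval $[1-\eta,1]$ of width $\eta$. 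Integrating over $\theta$,
\begin{align*}
\mathcal{D}_{\mathsf{TV}}\!\bigl(\pi^{(t)},\pi^{(t)}_{\mathsf{ARS}}\bigr)\le\frac{\eta}{2(1-\eta)}=\frac{e^{-c\,\mathsf{R}\,\epsilon_t^{\gamma}}}{2\bigl(1-e^{-c\,\mathsf{R}\,\epsilon_t^{\gamma}}\bigr)}=O\!\bigl(e^{-c\,\mathsf{R}\,\epsilon_t^{\gamma}}\bigr)
\end{align*}
as $\mathsf{R}\to\infty$, which is the asserted exponential-in-$\mathsf{R}$ decay; moreover solving $\eta/(2(1-\eta))\le\bar\delta$ for $\mathsf{R}$ recovers the sample-complexity statement $\mathsf{R}=O(\log(1/\bar\delta)/\epsilon_t^{\gamma})$ of Theorem~\ref{thm:ars-sample-complexity} as a corollary.

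\textbf{Expected obstacle.} The calculation is essentially bookkeeping; the one place that needs care is the penultimate display, because $\pi^{(t)}_{\mathsf{ARS}}$ depends on the weights $p_{\mathsf{R}}(\theta)$ \emph{nonlinearly} through the normalizer $Z_{\mathsf{R}}$, so one cannot merely bound $|\pi^{(t)}-\pi^{(t)}p_{\mathsf{R}}|$ — the cancellation comes precisely from the fact that $Z_{\mathsf{R}}$ and each $p_{\mathsf{R}}(\theta)$ occupy the \emph{same} $O(\eta)$-width window around $1$. A minor secondary point, handled by the local-positivity assumption, is verifying that the exact inner loop terminates almost surely so that its $\theta$-marginal is genuinely the probability measure $\pi^{(t)}$ rather than a sub-probability measure.
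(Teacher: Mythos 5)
Your proof is correct, and it reaches the same $O(e^{-c\mathsf{R}\epsilon_t^{\gamma}})$ rate via the same basic structure as the paper — identifying the exact $\theta$-marginal as the proposal (since rejection acts only on $X$ and Assumption \ref{assump:tv-dist-overlapping-supp} guarantees a.s.\ termination), and identifying the \abc{ARS} marginal as the reweighted law $\pi^{(t)}(\theta)\bigl(1-(1-P_\theta^{(n)}(A_t))^{\mathsf{R}}\bigr)/Z$ — but the final bounding step is genuinely different. The paper controls $\mathcal{D}_{\mathsf{TV}}$ by applying H\"older (Cauchy--Schwarz) to pull out a variance-type term, bounding it by $\E_{\pi^{(t)}}\bigl[P_\theta^{(n)}(A_t^c)^{2\mathsf{R}}\bigr]^{1/2}\le e^{-\mathsf{R}c\epsilon_t^{\gamma}}$ and $Z\ge 1-e^{-\mathsf{R}c\epsilon_t^{\gamma}}$, whereas you bound everything pointwise: both the retention probability $p_{\mathsf{R}}(\theta)$ and the normalizer $Z_{\mathsf{R}}$ lie in the window $[1-\eta,1]$ with $\eta=e^{-c\mathsf{R}\epsilon_t^{\gamma}}$, so $|\pi^{(t)}-\pi^{(t)}_{\mathsf{ARS}}|\le \tfrac{\eta}{1-\eta}\pi^{(t)}$ and the TV bound follows by direct integration. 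Your route is more elementary (no $L^2$ detour, no variance bound) and arrives at essentially the same constant, $\eta/(2(1-\eta))$ versus the paper's $\tfrac12\,\eta/(1-\eta)$; since the assumption already supplies a uniform (sup-norm) bound on $1-p_{\mathsf{R}}(\theta)$, nothing is lost by avoiding Cauchy--Schwarz, and the $L^2$ machinery in the paper would only pay off under a weaker, merely integrated control of the acceptance probability. Your observation that the normalizer and the weights occupy the same $O(\eta)$ window is exactly the cancellation the paper obtains through the variance term, and your derivation of the sample-complexity statement $\mathsf{R}=O(\log(1/\bar\delta)/\epsilon_t^{\gamma})$ as a corollary matches Theorem \ref{thm:ars-sample-complexity}.
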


\begin{proof}
    We begin by establishing key notations for our analysis. Let $\pi^{(t)} = \pi(\theta \mid E_t)$ denote the marginal target distribution of the exact rejection sampling procedure.
    Define $\pi^{(t)}_{\mathsf{ARS}}$ as the marginal distribution of $\theta$ under the approximate sampling procedure with $\mathsf{R}$ attempts. We denote the corresponding measures by $\pi^{(t)}$ and $\pi_\mathsf{ARS}^{(t)}$, respectively.

    For $r = 1, 2, \dots, \mathsf{R}$, let $E_{t,r} = \{\widehat{\msw}_{p, \delta}(\bX_r, \xobs) \leq \epsilon_t\}$ be the event where the $r$th replicate $\bX_r$ satisfies the threshold condition. Define $E_t = E_{t,1}$. A particular $\theta$ is retained if at least one of the corresponding data replicate passes the threshold,
    \begin{align*}
        \bigcup_{r=1}^\mathsf{R} E_{t,r} = \left \{\omega: \left \{ \bX_1(\omega)\in A_t \right \} \; \bigcup\; \cdots \;\bigcup \; \left \{ \bX_R(\omega)\in A_t \right \} \right \}.
    \end{align*}

    Taking expectation over $E_{t,1}, \dots, E_{t,\mathsf{R}}$, we obtain:
    \begin{align*}
        \widetilde{\pi}^{(t)}_{\mathsf{ARS}}(\theta) &\propto \int \left(  \widetilde{\pi}^{(t)}(\theta) \bigcup_{r=1}^\mathsf{R} E_{t,r} \right) \, \prod_{r=1}^\mathsf{R} P_\theta^{(n)}(d\bX)\\
        &= \int \widetilde{\pi}^{(t)}(\theta) \left( 1- \bigcap_{r=1}^\mathsf{R} \ind_{E_{t,r}^c}(\omega)\right) \prod_{r=1}^\mathsf{R} P_\theta^{(n)}(d\bX)\\
        &= \widetilde{\pi}^{(t)}(\theta)\left( 1- P_{\theta}^{(n)}(A_t^c)^\mathsf{R} \right),
    \end{align*}
    where the last equality follows from the i.i.d. nature of $\bX_1, \dots, \bX_R$ conditional on $\theta$. Thus, under the \abc{ARS} procedure (Algorithm \ref{alg:approx-rej-samp}), the marginal distribution of $\Theta$ is proportional to:
    \begin{align}\label{eq:ars-unnorm-density}
        \widetilde{\pi}^{(t)}_{\mathsf{ARS}}(\theta) \propto \widetilde{\pi}^{(t)}(\theta) \left(1 - P_\theta^{(n)}(A_t^c)^\mathsf{R}\right).
    \end{align}

    Let $Z_t = \E_{\widetilde{\pi}^{(t)}}\left[1 - P_\theta^{(n)}(A_t^c)^\mathsf{R}\right] = \int_\Omega \widetilde{\pi}^{(t)} (\theta) \left(1 - P_\theta^{(n)} (A_t^c)^\mathsf{R} \right) \, d\theta$ be the normalizing constant for Eq.\ \eqref{eq:ars-unnorm-density}. We can then bound the total variation distance:
    
    \begin{align}
        \mathcal{D}_{\mathsf{TV}} \left(\pi^{(t)}, \widetilde{\pi}_{\mathsf{ARS}}^{(t)} \right) &= \frac{1}{2} \int_\Omega \left |\pi^{(t)}(\theta) - \widetilde{\pi}^{(t)}_{\mathsf{ARS}} \right|\, d\theta \nonumber \\
        &= \frac{1}{2} \int_\Omega \left| \pi^{(t)}(\theta) - Z_{t}^{-1} \widetilde{\pi}^{(t)}(\theta) \left(1 - P_\theta^{(n)}(A_t^c)^\mathsf{R}\right) \right|\, d\theta \nonumber \\
        &= \frac{1}{2 Z_t} \int_\Omega \left | Z_t \widetilde{\pi}^{(t)}(\theta) - \widetilde{\pi}^{(t)} (\theta)\left(1 - P_\theta^{(n)} (A_t^c)^\mathsf{R} \right) \right| \, d\theta \label{eq:tv-subst-dist}\\
        &= \frac{1}{2 Z_t} \int_\Omega \left | \widetilde{\pi}^{(t)}(\theta) \left(Z_t - \left(1 - P_\theta^{(n)} (A_t^c)^\mathsf{R} \right) \right) \right|\, d\theta \nonumber \\
        &\leq \frac{1}{2 Z_t} \left(\int_\Omega   \widetilde{\pi}^{(t)}(\theta) \, d\theta  \right)^{1/2} \left(\int_\Omega \left |Z_t - 1 + P_\theta^{(n)} (A_t^c)^\mathsf{R} \right|^2 \, \widetilde{\pi}^{(t)}(\theta) d\theta \right)^{1/2} \label{eq:tv-holder}\\
        &= \frac{1}{2 Z_t} \left (\int_\Omega \left| \E_{\widetilde{\pi}^{(t)}} \left[P_\theta^{(n)} (A_t^c)^\mathsf{R} \right] - P_\theta^{(n)} (A_t^c)^\mathsf{R} \right|^2\, d\theta  \right)^{1/2} \nonumber \\
        &= \frac{1}{2 Z_t}  \Var_{\widetilde{\pi}^{(t)}}\left (P_\theta^{(n)}(A_t^c)^\mathsf{R}\right )^{1/2} \nonumber \\
        &\leq \frac{1}{2 Z_t} \E_{\widetilde{\pi}^{(t)}}\left[P_\theta^{(n)}(A_t^c)^{2\mathsf{R}}\right]^{1/2} \label{eq:tv-last-line}
    \end{align}
    where Eq.\ \eqref{eq:tv-subst-dist} uses the fact that $\pi^{(t)}(\theta) \overset{d}{=}_{\theta} \widetilde{\pi}^{(t)}(\theta)$, Eq.\ \eqref{eq:tv-holder} follows from Hölder's inequality, and we also used the fact that $\int_\Omega \widetilde{\pi}(\theta)\, d\theta =1$.

    By Assumption \ref{assump:tv-dist-overlapping-supp}, there exists a constant $c$ such that $\inf \limits_{\theta \in \supp(\pi^{(t)})} P_\theta^{(n)}(A_t) \geq c\epsilon_t^{\gamma}$. Consequently,
    \begin{align*}
        \E_{\widetilde{\pi}^{(t)}}\left[P_\theta^{(n)}(A_t^c)^{2\mathsf{R}}\right]^{1/2} &\leq \left( \left(1 - c\epsilon_t^{\gamma}\right)_{+}^{2\mathsf{R}}\right)^{1/2} \leq \exp(-\mathsf{R}c\epsilon_t^{\gamma}),
    \end{align*}
    where we use the inequality $(1-u)_{+}^\mathsf{R} \leq \exp(-\mathsf{R} u)$ for $u \in\mathbb{\mathsf{R}}$ and integer $\mathsf{R}$. Similarly, for $Z_t$, using $\sup\limits_{\theta \in \supp(\pi^{(t)})} P_\theta^{(n)}(A_t^c)\leq 1 - c\epsilon_t^{\gamma}$:
    \begin{align*}
        Z_t &= \E_{\widetilde{\pi}^{(t)}} \left[1 - P_\theta^{(n)} (A_t^c)^\mathsf{R} \right] \geq 1 - (1-c\epsilon_t^{\gamma})_{+}^\mathsf{R} \geq 1 - \exp(-\mathsf{R} c\epsilon_t^{\gamma}).
    \end{align*}

    Substituting these bounds into Eq.\ \eqref{eq:tv-last-line} yields:
    \begin{align*}
        \frac{1}{2 Z_t} \E_{\widetilde{\pi}^{(t)}}\left[P_\theta^{(n)}(A_t^c)^{2\mathsf{R}}\right]^{1/2} \leq  \frac{1}{2}\cdot \frac{\exp(-\mathsf{R} c\epsilon_t^{\gamma})}{1-\exp(-\mathsf{R} c\epsilon_t^{\gamma})} \lesssim \exp\left(-\mathsf{R} c\epsilon_t^{\gamma}\right),
    \end{align*}
    which establishes the upper bound and the desired convergence rate.
\end{proof}

\begin{thm}[Sample Complexity for \abc{ARS}]\label{thm:ars-sample-complexity-appendix}
Suppose Assumption \ref{assump:tv-dist-overlapping-supp} holds.
For any $\bar{\delta} \in (0,1)$ and $\epsilon_t > 0$, if the number of samples $\mathsf{R}$ in the \abc{ARS} algorithm satisfies $\mathsf{R} = O\left(\frac{\log(1/\bar{\delta})}{\epsilon_t^{\gamma}}\right)$, 
then the total variation distance between the exact and approximate posterior distributions is bounded by $\bar{\delta}$, i.e.,
$\mathcal{D}_{\mathsf{TV}} \left(\pi^{(t)}, \widetilde{\pi}_{\mathsf{ARS}}^{(t)} \right) \leq \bar{\delta}$.
\end{thm}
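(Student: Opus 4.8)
The plan is to obtain this sample-complexity statement as a direct corollary of the error-rate estimate in Theorem \ref{thm:ars-tv-error}, so that the argument amounts to inverting the bound proved there. Recall that the proof of Theorem \ref{thm:ars-tv-error} in fact yields, under Assumption \ref{assump:tv-dist-overlapping-supp}, the explicit inequality
\[
\mathcal{D}_{\mathsf{TV}}\!\left(\pi^{(t)}, \widetilde{\pi}_{\mathsf{ARS}}^{(t)}\right)\;\leq\;\frac{1}{2}\cdot\frac{\exp(-\mathsf{R}\,c\,\epsilon_t^{\gamma})}{1-\exp(-\mathsf{R}\,c\,\epsilon_t^{\gamma})},
\]
where $c,\gamma>0$ are the local-positivity constants. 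It therefore suffices to choose $\mathsf{R}$ large enough that the right-hand side drops below $\bar{\delta}$.

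First I would set $x:=\exp(-\mathsf{R}\,c\,\epsilon_t^{\gamma})\in(0,1)$ and note that $x\mapsto \tfrac{1}{2}x/(1-x)$ is strictly increasing, so the target inequality $\tfrac{1}{2}x/(1-x)\le\bar{\delta}$ is equivalent to $x\le 2\bar{\delta}/(1+2\bar{\delta})$. Taking logarithms, this is in turn equivalent to
\[
\mathsf{R}\;\geq\;\frac{1}{c\,\epsilon_t^{\gamma}}\,\log\!\left(1+\frac{1}{2\bar{\delta}}\right).
\]
Then I would simplify the logarithm: for $\bar{\delta}\in(0,1)$ one has $1+\tfrac{1}{2\bar{\delta}}\le \tfrac{3}{2\bar{\delta}}$, hence $\log(1+1/(2\bar{\delta}))\le \log(3/2)+\log(1/\bar{\delta})=O(\log(1/\bar{\delta}))$. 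Consequently any $\mathsf{R}$ with $\mathsf{R}=O\!\big(\log(1/\bar{\delta})\,/\,\epsilon_t^{\gamma}\big)$ that exceeds the displayed threshold guarantees $\mathcal{D}_{\mathsf{TV}}(\pi^{(t)},\widetilde{\pi}_{\mathsf{ARS}}^{(t)})\le\bar{\delta}$, which is exactly the claim of Theorem \ref{thm:ars-sample-complexity-appendix}.

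There is no genuine obstacle here beyond bookkeeping: all the analytic content—bounding the normalizing constant $Z_t\ge 1-\exp(-\mathsf{R}c\epsilon_t^{\gamma})$ away from zero, controlling $\Var_{\widetilde{\pi}^{(t)}}\!\big(P_\theta^{(n)}(A_t^c)^{\mathsf{R}}\big)^{1/2}$ via Hölder's inequality, and applying the elementary bound $(1-u)_+^{\mathsf{R}}\le e^{-\mathsf{R}u}$—has already been carried out in Theorem \ref{thm:ars-tv-error}. The one point worth stating carefully is that the constant hidden in the $O(\cdot)$ may depend on $c$ (hence on $\gamma$ and on the problem through Assumption \ref{assump:tv-dist-overlapping-supp}) but is independent of $\bar{\delta}$ and of $\epsilon_t$, so that the scaling $\mathsf{R}\asymp \epsilon_t^{-\gamma}\log(1/\bar{\delta})$ is the honest rate.
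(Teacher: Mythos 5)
Your proposal is correct and follows the same route as the paper, which simply cites Theorem \ref{thm:ars-tv-error} and inverts its exponential bound; your explicit solve for $\mathsf{R}$ merely spells out the bookkeeping that the paper leaves implicit.
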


\begin{proof}
    This follows directly from Theorem \ref{thm:ars-tv-error}.
\end{proof}

\begin{cor}[1-Wasserstein Error Incurred by \abc{ARS}]
Under the conditions of Theorem \ref{thm:ars-tv-error}, if the parameter space $\Omega$ has a finite diameter $d_\Omega = \text{diam}(\Omega)$, then the 1-Wasserstein distance between the exact and approximate posterior distributions is bounded by 
\begin{align}
\W_1\left(\pi^{(t)}, \widetilde{\pi}_{\mathsf{ARS}}^{(t)}\right) \lesssim d_\Omega \exp\left(-\mathsf{R} c\epsilon_t^{\gamma}\right).
\end{align}
Therefore, to achieve $\W_1\left(\pi^{(t)}, \widetilde{\pi}_{\mathsf{ARS}}^{(t)}\right) \leq \bar{\delta}$, we need $\mathsf{R} = O\left(\frac{\log(d_\Omega/\bar{\delta})}{\epsilon_t^{\gamma}}\right)$.
\end{cor}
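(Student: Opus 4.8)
The plan is to reduce the $1$-Wasserstein claim to the total-variation bound already established in Theorem~\ref{thm:ars-tv-error}, using the elementary comparison between $\W_1$ and $\mathcal{D}_{\mathsf{TV}}$ that holds whenever the ground space is bounded.

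First I would prove the deterministic inequality
\[
\W_1(\mu,\nu)\;\le\; d_\Omega\,\mathcal{D}_{\mathsf{TV}}(\mu,\nu)
\qquad\text{for all }\mu,\nu\in\cp_1(\Omega),\ d_\Omega=\mathrm{diam}(\Omega)<\infty.
\]
This is immediate from Kantorovich--Rubinstein duality, $\W_1(\mu,\nu)=\sup_{f:\,\|f\|_{\Lip}\le 1}\int f\,d(\mu-\nu)$: since $\int f\,d(\mu-\nu)=\int(f-c)\,d(\mu-\nu)$ for any constant $c$ (both $\mu$ and $\nu$ being probability measures), and since a $1$-Lipschitz function on a set of diameter $d_\Omega$ has oscillation at most $d_\Omega$, one may take $c$ to be the midpoint of the range of $f$, so that $\|f-c\|_\infty\le d_\Omega/2$. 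Then $\bigl|\int(f-c)\,d(\mu-\nu)\bigr|\le\|f-c\|_\infty\int|d\mu-d\nu|\le(d_\Omega/2)\cdot 2\,\mathcal{D}_{\mathsf{TV}}(\mu,\nu)$, using the convention $\mathcal{D}_{\mathsf{TV}}=\tfrac12\int|d\mu-d\nu|$; taking the supremum over admissible $f$ gives the claim.

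Second, I would apply this inequality with $\mu=\pi^{(t)}$ and $\nu=\widetilde\pi^{(t)}_{\mathsf{ARS}}$ and invoke Theorem~\ref{thm:ars-tv-error}, which under Assumption~\ref{assump:tv-dist-overlapping-supp} gives $\mathcal{D}_{\mathsf{TV}}(\pi^{(t)},\widetilde\pi^{(t)}_{\mathsf{ARS}})\lesssim\exp(-\mathsf{R}c\epsilon_t^{\gamma})$ with an implicit constant independent of $\mathsf{R}$. Chaining the two bounds yields $\W_1(\pi^{(t)},\widetilde\pi^{(t)}_{\mathsf{ARS}})\lesssim d_\Omega\exp(-\mathsf{R}c\epsilon_t^{\gamma})$, the first assertion. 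For the second assertion, I would force the right-hand side below $\bar\delta$: solving $d_\Omega\exp(-\mathsf{R}c\epsilon_t^{\gamma})\le\bar\delta$ for $\mathsf{R}$ gives $\mathsf{R}\ge(c\epsilon_t^{\gamma})^{-1}\log(d_\Omega/\bar\delta)$, i.e.\ $\mathsf{R}=O\!\bigl(\log(d_\Omega/\bar\delta)/\epsilon_t^{\gamma}\bigr)$.

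I do not anticipate a genuine obstacle, since this is a corollary with a short argument. The only two points deserving a line of care are (i) that the translation trick in the $\W_1$--$\mathcal{D}_{\mathsf{TV}}$ comparison genuinely requires $d_\Omega<\infty$, which is precisely the standing hypothesis, and (ii) that the $\lesssim$ in Theorem~\ref{thm:ars-tv-error} conceals a constant not depending on $\mathsf{R}$, so that multiplying by the fixed quantity $d_\Omega$ preserves the exponential-in-$\mathsf{R}$ decay.
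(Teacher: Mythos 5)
Your proof is correct and follows essentially the same route as the paper: bound $\W_1$ by $d_\Omega$ times the total-variation distance and then chain with the exponential bound from Theorem \ref{thm:ars-tv-error}. The only difference is cosmetic — the paper simply cites Theorem 6.15 of \citet{villani2009optimal} for the comparison $\W_1 \leq d_\Omega \,\mathcal{D}_{\mathsf{TV}}$, whereas you supply a short self-contained proof of it via Kantorovich--Rubinstein duality, which is a perfectly valid substitute.
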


\begin{proof}
By Theorem 6.15 of \cite{villani2009optimal}, we have
\begin{align*}
\W_1\left(\pi^{(t)}, \widetilde{\pi}_{\mathsf{ARS}}^{(t)}\right) &\leq d_\Omega \cdot \mathcal{D}_{\mathsf{TV}}\left(\pi^{(t)}, \widetilde{\pi}_{\mathsf{ARS}}^{(t)}\right).
\end{align*}
The rest follows from Theorem \ref{thm:ars-tv-error} and Theorem \ref{thm:ars-sample-complexity}.
\end{proof}

\subsection{Proofs of the Main Theorems from Section \ref{sec:theory:msw}}

\subsubsection{Proof of Proposition \ref{prop:msw-metric}}

Nonnegativity, symmetry, and triangle inequality for the \msw distance follow directly from the corresponding properties of $\W_p$ and $\SW_p$ distances, coupled with the additivity property of metrics.
    Note that if $\mu = \nu$, then 
    \begin{align*}
        \SW_p(\mu, \nu) &= 0,\\
        \W_p((e_j)_\#\mu, (e_j)_\#\nu) &= 0, \quad j = 1, \dots, d.
    \end{align*}
    For the converse direction, suppose $\msw_p(\mu, \nu) = 0$. As $0 < \lambda < 1$, both terms in the \msw distance must vanish:
    \begin{align*}
        \sum_{j=1}^d \W_p((e_j)_\# \mu, (e_j)_\# \nu) &= 0,\\
        \int_{\mathbb{S}^{d-1}} \W_p^p (\varphi_\# \mu, \varphi_\# \nu) \,d\sigma(\varphi) &= 0,
    \end{align*}
    which implies that the marginal distributions of $\mu$ and $\nu$ are identical for all coordinates, and $\W_p(\varphi_\# \mu, \varphi_\# \nu) = 0$ for $\sigma$-almost all $\varphi \in \mathbb{S}^{d-1}$.
    By the Cram\'er-Wold theorem, this is sufficient to conclude that $\mu = \nu$.

\subsubsection{Proof of Theorem \ref{thm:msw-ipm}}

\begin{lemma}\label{lem:sw1-ipm}
The 1-Sliced Wasserstein distance is an Integral Probability Metric on $\mathcal{P}_1(\R^d)$.
\end{lemma}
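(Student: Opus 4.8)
The plan is to realize $\SW_1$ as the integral probability metric $\D_{\mathcal G}(\cdot,\cdot)$ generated by the explicit class
\[
\mathcal G := \Bigl\{\, x\mapsto \int_{\smb^{d-1}} g_\varphi(\varphi^\top x)\,d\sigma(\varphi)\ :\ g_\varphi\in\Lip_1(\R),\ \sup_{\varphi\in\smb^{d-1}}|g_\varphi(0)|<\infty,\ (\varphi,t)\mapsto g_\varphi(t)\ \text{jointly measurable}\,\Bigr\},
\]
i.e.\ to prove $\SW_1(\mu,\nu)=\D_{\mathcal G}(\mu,\nu)$ for every $\mu,\nu\in\mathcal P_1(\R^d)$. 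The Lipschitz bound $|g_\varphi(\varphi^\top x)|\le \sup_\varphi|g_\varphi(0)|+\|x\|$ shows every $f\in\mathcal G$ lies in $L^1(\mu)\cap L^1(\nu)$, so $\D_{\mathcal G}$ is well defined and Tonelli's theorem licenses every exchange of $\int_{\smb^{d-1}}$ with $\int d\mu$ or $\int d\nu$ below.

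For the bound $\D_{\mathcal G}(\mu,\nu)\le\SW_1(\mu,\nu)$ I would fix $f=\int g_\varphi(\varphi^\top\cdot)\,d\sigma(\varphi)\in\mathcal G$, pull the modulus inside the spherical average, and bound each slice by the one–dimensional Kantorovich--Rubinstein duality:
\[
\Bigl|\int f\,d\mu-\int f\,d\nu\Bigr|\le\int_{\smb^{d-1}}\Bigl|\int g_\varphi\,d(\varphi_\#\mu)-\int g_\varphi\,d(\varphi_\#\nu)\Bigr|\,d\sigma(\varphi)\le\int_{\smb^{d-1}}\W_1(\varphi_\#\mu,\varphi_\#\nu)\,d\sigma(\varphi)=\SW_1(\mu,\nu),
\]
and then take the supremum over $f\in\mathcal G$.

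For the reverse bound I would produce an exactly optimal test function. Letting $F_{\varphi_\#\mu},F_{\varphi_\#\nu}$ denote the CDFs of the projected measures, set
\[
g^*_\varphi(t):=\int_0^t\sgn\!\bigl(F_{\varphi_\#\nu}(s)-F_{\varphi_\#\mu}(s)\bigr)\,ds,\qquad f^*(x):=\int_{\smb^{d-1}}g^*_\varphi(\varphi^\top x)\,d\sigma(\varphi).
\]
Each $g^*_\varphi$ is $1$-Lipschitz with $g^*_\varphi(0)=0$, and the standard identity $\int g\,d\rho_1-\int g\,d\rho_2=\int_\R g'(s)\bigl(F_{\rho_2}(s)-F_{\rho_1}(s)\bigr)\,ds$ (valid for absolutely continuous $g$ with $g(0)=0$) gives $\int g^*_\varphi\,d(\varphi_\#\mu)-\int g^*_\varphi\,d(\varphi_\#\nu)=\int_\R|F_{\varphi_\#\mu}(s)-F_{\varphi_\#\nu}(s)|\,ds=\W_1(\varphi_\#\mu,\varphi_\#\nu)$; integrating over $\sigma$ then yields $\int f^*\,d\mu-\int f^*\,d\nu=\SW_1(\mu,\nu)$, so the supremum is attained once we know $f^*\in\mathcal G$. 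The step I expect to be the main obstacle is exactly this membership, i.e.\ the joint (Borel) measurability of $(\varphi,t)\mapsto g^*_\varphi(t)$: I would argue that $(\varphi,s)\mapsto F_{\varphi_\#\mu}(s)=\mu(\{x:\varphi^\top x\le s\})$ is measurable in $\varphi$ and right-continuous nondecreasing in $s$, hence jointly measurable by a Carathéodory-type lemma, so $\sgn(F_{\varphi_\#\nu}-F_{\varphi_\#\mu})$ is jointly measurable and bounded, and integrating it over $[0,t]$ preserves joint measurability by Fubini. (If one prefers to avoid the Carathéodory lemma, a measurable $\eps$-optimal selection argument producing a jointly measurable family $\{g^{(\eps)}_\varphi\}$ with $\int g^{(\eps)}_\varphi\,d(\varphi_\#\mu)-\int g^{(\eps)}_\varphi\,d(\varphi_\#\nu)\ge\W_1(\varphi_\#\mu,\varphi_\#\nu)-\eps$ works just as well, after letting $\eps\downarrow0$.) Combining the two inequalities gives $\SW_1=\D_{\mathcal G}$, establishing that $\SW_1$ is an IPM.
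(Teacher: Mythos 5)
Your proof is correct, and while the first half coincides with the paper's argument (same critic class, same Fubini--Tonelli plus one-dimensional Kantorovich--Rubinstein bound for $\D_{\mathcal G}\le \SW_1$), your treatment of the reverse inequality takes a genuinely different route. The paper does not construct an explicit witness: for each $\varphi$ it picks an $\eps$-optimal $1$-Lipschitz potential $g^\eps_\varphi$ and invokes the Kuratowski--Ryll--Nardzewski measurable selection theorem to obtain joint measurability of $(\varphi,t)\mapsto g^\eps_\varphi(t)$, then lets $\eps\downarrow 0$. You instead write down the exactly optimal potential in closed form, $g^*_\varphi(t)=\int_0^t\sgn\bigl(F_{\varphi_\#\nu}(s)-F_{\varphi_\#\mu}(s)\bigr)\,ds$, use the identity $\int g\,d\rho_1-\int g\,d\rho_2=\int_{\R} g'(s)\bigl(F_{\rho_2}(s)-F_{\rho_1}(s)\bigr)\,ds$ together with $\W_1(\rho_1,\rho_2)=\int_{\R}|F_{\rho_1}-F_{\rho_2}|\,ds$ to get exact attainment slice by slice, and settle joint measurability by hand (measurability in $\varphi$ plus right-continuity in $s$ of $(\varphi,s)\mapsto F_{\varphi_\#\mu}(s)$, then Fubini for the primitive). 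Both arguments are sound; yours buys an explicit maximizer and avoids the selection-theorem machinery (whose hypotheses the paper does not verify in detail), at the cost of the Carath\'eodory-type joint-measurability lemma and the care needed with the CDF identity (harmless here since CDFs have at most countably many jumps and the projected measures lie in $\mathcal{P}_1(\R)$), whereas the paper's $\eps$-optimal selection argument --- which you correctly note as a fallback --- generalizes more readily to situations where no closed-form optimal potential is available.
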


\begin{proof}
Recall the definition of the $SW_1$ distance,
\begin{align*}
    SW_1 (\mu, \nu) = \int_{\smb^{d-1}} W_1(\varphi_\# \mu, \varphi_\# \nu) d\sigma(\varphi),
\end{align*}
where $\smb^{d-1}$ is the unit sphere in $\R^d$ and $\sigma$ is the uniform measure on $\smb^{d-1}$. Define the critic function class,
\begin{align*}
    \F := \left\{f(x) = \int_{\smb^{d-1}} g_\varphi(\varphi^\top x ) \;d\sigma(\varphi)  \,\Big|\, g_\varphi \in \Lip_1(\R), \, \sup_{\varphi \in \smb^{d-1}} |g_\varphi(0)| < \infty\right\}
\end{align*}
where for each $\varphi \in \mathbb{S}^{d-1}$, $g_\varphi: \mathbb{R} \to \mathbb{R}$ is a 1-Lipschitz function, such that the mapping $(\varphi, t) \mapsto g_\varphi(t)$ is jointly measurable with respect to the product of the Borel $\sigma$-algebras on $\mathbb{S}^{d-1}$ and $\mathbb{R}$. Note that $\F$ is nonempty, as it includes constant functions. For instance, if we take $f_c: x \mapsto \int_{\smb^{d-1}} k \, d\sigma(\varphi) = k < \infty$, then $f_c \in \F$.

First, we show that $\sup_{f \in \F}\left|\int fd(\mu - \nu)\right|\leq SW_1(\mu, \nu)$ holds.
Fix $f \in \F$. By definition, $f(x) = \int_{\smb^{d-1}} g_\varphi (\varphi^\top x) \, d\sigma(\varphi)$. For any $\varphi\in\smb^{d-1}$, 
\begin{equation*}
    |g_\varphi(\varphi^\top x)| \leq |g_\varphi(0)| + |\varphi^\top x|\leq \sup_{\varphi\in\smb^{d-1}}|g_\varphi(0)|+\|x\|.
\end{equation*}
As $\sup_{\varphi\in\smb^{d-1}} |g_\varphi(0)|< \infty$ and $\int\|x\| \, d\mu < \infty$, we have $\int \int_{\smb^{d-1}} |g_\varphi (\varphi^\top x)| \, d\sigma(\varphi) d\mu<\infty$; similarly, $\int \int_{\smb^{d-1}} |g_\varphi (\varphi^\top x)| \, d\sigma(\varphi) d\nu<\infty$. By Fubini--Tonelli,
\begin{align*}
    \int f d(\mu - \nu) = \int_{\smb^{d-1}} \int g_\varphi(\varphi^\top x) d(\mu - \nu) (x) \, d\sigma(\varphi).
\end{align*}
Thus
\begin{align*}
    \left |\int f d(\mu - \nu) \right| &\leq \int_{\smb^{d-1}} \left|\int g_\varphi(\varphi^\top x) d(\mu-\nu) (x) \right|\; d\sigma(\varphi) \\
    &\leq \int_{\smb^{d-1}} W_1(\varphi_\# \mu, \varphi_\# \nu) \, d\sigma(\varphi) = SW_1(\mu, \nu),
\end{align*}
where we appeal to the dual representation of $W_1$ for each $\varphi$. Taking the supremum over $\F$ establishes the claimed inequality.

We now show that the reverse inequality $SW_1(\mu, \nu) \leq \sup_{f \in \F} \left|\int fd(\mu- \nu)\right|$ holds.
Fix $\eps > 0$.
For each $\varphi \in \smb^{d-1}$, choose a 1-Lipschitz function $g_\varphi^\eps$ with $g_\varphi^\eps(0)=0$ and 
\begin{align*}
     \int g_\varphi^\eps(\varphi^\top x) d(\mu - \nu)(x) \geq W_1(\varphi_\# \mu, \varphi_\# \nu) - \eps.
\end{align*}
By the Kuratowski–Ryll–Nardzewski measurable selection theorem, we can select $\varphi \mapsto g_\varphi^\eps$ such that the mapping $(\varphi,t)\mapsto g_\varphi^\eps(t)$ is jointly measurable with respect to the product of the Borel $\sigma$-algebras on $\mathbb{S}^{d-1}$ and $\mathbb{R}$.  

Define 
\begin{align*}
    f_\eps(x) = \int_{\smb^{d-1}} g_\varphi^\eps (\varphi^\top x)\, d\sigma(\varphi) \in \F.
\end{align*}
Note that $f_\eps \in \F$ because each slice is 1-Lipschitz and $\sup_{\varphi\in \mathbb{S}^{d-1}} |g_\varphi^\eps(0)|<\infty$, by construction.
Then, exactly as before,
\begin{align*}
    \left|\int f_{\eps}\,d(\mu-\nu)\right|
    &\geq \int_{\smb^{d-1}}
      \int g_{\varphi}^{\eps}(\varphi^{\top}x)\,d(\mu-\nu)(x)
      \,d\sigma(\varphi)\\
    &\geq \int_{\smb^{d-1}}
      \bigl(W_{1}(\varphi_\#\mu,\varphi_\#\nu)-\eps\bigr)
      \,d\sigma(\varphi)
      = SW_{1}(\mu,\nu)-\eps.
\end{align*}
Taking the supremum over $\F$ and letting $\eps \downarrow 0$ yields the reverse inequality.

Combining both steps gives
\begin{align*}
    SW_1(\mu,\nu)
    = \sup_{f\in\F}\left|\int f\,d(\mu-\nu)\right|,
\end{align*}
so $SW_1$ is an IPM with critic class $\F$.
\end{proof}

\begin{lemma}[Closure of IPMs under finite linear combinations]
\label{lem:ipms-combination}
Let $\D_1, \D_2, \dots, \D_K$ be IPMs on the same space of probability measures, and let $\lambda_1,\lambda_2,\dots,\lambda_K \geq 0$ be nonnegative constants. Define
\begin{align}
    \D(\mu, \nu) := \sum_{k=1}^K \lambda_k \, \D_k(\mu,\nu).
\end{align}
Then $\D(\cdot,\cdot)$ is itself an IPM.
\end{lemma}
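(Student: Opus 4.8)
The plan is to exhibit an explicit critic class $\mathcal{G}$ for which $\D = \D_{\mathcal G}$. Write $\D_k = \D_{\mathcal{G}_k}$ for a critic class realizing each IPM $\D_k$. First I would observe that we may assume without loss of generality that each $\mathcal{G}_k$ is \emph{symmetric}, i.e.\ closed under $g\mapsto -g$: replacing $\mathcal{G}_k$ by $\mathcal{G}_k\cup(-\mathcal{G}_k)$ leaves $\D_{\mathcal{G}_k}$ unchanged, since $\bigl|\int g\,d(\mu-\nu)\bigr| = \bigl|\int(-g)\,d(\mu-\nu)\bigr|$. After this reduction the absolute value in the definition of $\D_k$ is redundant, so $\D_k(\mu,\nu) = \sup_{g\in\mathcal{G}_k}\int g\,d(\mu-\nu)$, and each such supremum is automatically $\geq 0$ (compare $g$ with $-g$).

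Next I would define the scaled Minkowski sum $\mathcal{G} := \bigl\{\sum_{k=1}^K \lambda_k g_k : g_k\in\mathcal{G}_k \text{ for each } k\bigr\}$, a class of measurable functions. The key step is the interchange of the finite sum with the independent suprema,
\begin{align*}
\D(\mu,\nu) \;=\; \sum_{k=1}^K \lambda_k \sup_{g_k\in\mathcal{G}_k}\int g_k\,d(\mu-\nu)
\;=\; \sup_{(g_1,\dots,g_K)}\ \sum_{k=1}^K \lambda_k\int g_k\,d(\mu-\nu)
\;=\; \sup_{g\in\mathcal{G}}\int g\,d(\mu-\nu),
\end{align*}
where the middle equality uses that the suprema range over independent arguments and that each partial supremum is nonnegative (so no $\infty-\infty$ ambiguity occurs, and terms with $\lambda_k=0$ contribute nothing), and the last equality is linearity of the integral combined with the definition of $\mathcal{G}$.

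Finally, since each $\mathcal{G}_k$ is symmetric, $\mathcal{G}$ is symmetric as well (negate every summand), hence $\sup_{g\in\mathcal{G}}\int g\,d(\mu-\nu) = \sup_{g\in\mathcal{G}}\bigl|\int g\,d(\mu-\nu)\bigr| = \D_\mathcal{G}(\mu,\nu)$, which shows $\D=\D_\mathcal{G}$ is an IPM. The only genuinely delicate point is the sum--supremum interchange together with the bookkeeping for possibly infinite suprema; the preliminary symmetrization is exactly what makes this step clean, as it forces every partial supremum to be $\geq 0$.
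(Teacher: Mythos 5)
Your proof is correct and follows essentially the same route as the paper: symmetrize each critic class ($\mathcal{G}_k \cup (-\mathcal{G}_k)$), take the $\lambda_k$-scaled Minkowski sum as the new critic class, and show the supremum over this class equals the weighted sum of suprema. Your sum--supremum interchange is just a compact packaging of the paper's two-inequality argument (trivial upper bound plus an $\eps$-near-optimizer choice in each $\mathcal{G}_k$), so the content is the same.
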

\begin{proof}
Let $\D_1, \dots, \D_K$ be as given. For each $k = 1, \dots, K$, let $\F_k$ be the critic function class corresponding to $\D_k$, i.e.,
\begin{align*}
    \D_k(\mu, \nu) = \sup_{f \in \F_k} \left| \int f d(\mu - \nu)\right|.
\end{align*}
Let $\F_k^\dagger := \F_k \cup (-\F_k)$; note that $\D_k$ is unchanged when $\F_k$ is replaced by $\F_k^\dagger$. For any $\lambda_1,\lambda_2,\cdots,\lambda_K\geq 0$, define the new critic function class 
\begin{align*}
    \F := \bigg\{f = \sum_{k=1}^K \lambda_k f_k \ \Big|\ f_k \in \F_k^\dagger, ~k=1, \dots, K\bigg\}.
\end{align*}
We prove the desired equality in two steps. 

We first show that $\sup_{f\in \F} \left|f d(\mu-\nu) \right| \leq \D(\mu, \nu)$ holds.
For any choice of $f_k \in \F_k^\dagger$ for every $k\in [K]$, we have
\begin{align*}
    \left | \int \sum_{k=1}^K \lambda_k f_k d(\mu - \nu)\right| \leq \sum_{k=1}^K \lambda_k \left|\int f_k d(\mu - \nu) \right| \leq \sum_{k=1}^K \lambda_k \D_k(\mu, \nu).
\end{align*}
Taking the supremum over all $f = \sum_{k=1}^K \lambda_k f_k \in \F$ yields
\begin{align}\label{proof:ipm-closure:ub}
    \sup_{f \in \F} \left | \int f d(\mu - \nu) \right| \leq \D(\mu, \nu).
\end{align}

Now we show that $\sup_{f\in \F} \left|f d(\mu-\nu) \right| \geq \D(\mu, \nu)$ holds.
Fix $\eps > 0$. For each $k$ choose $f_k^{\eps} \in \F_k^\dagger$ such that
\begin{align*}
    \int f_k^{\eps} d(\mu - \nu) \geq \D_k(\mu, \nu) - \frac{\eps}{K (\lambda_k \vee 1)}.
\end{align*}
Define $f_\eps := \sum_{k=1}^K \lambda_k f_k^\eps \in \F$. 
By linearity,
\begin{align*}
   \int f_\eps d(\mu - \nu)  = \sum_{k=1}^K \lambda_k \int f_k^\eps d(\mu - \nu) \geq \sum_{k=1}^K \lambda_k \left(\D_k(\mu, \nu) - \frac{\eps}{K (\lambda_k \vee 1)}\right) \geq \D(\mu, \nu) - \eps.
\end{align*}
Since $f_\eps \in \F$, it follows that 
\begin{align*}
    \sup_{f \in \F} \left| f d(\mu - \nu) \right|\geq \D(\mu, \nu) - \eps. 
\end{align*}
Since the bound holds for arbitrary $\eps > 0$, letting $\eps \downarrow 0$ yields 
\begin{align}\label{proof:ipm-closure:lb}
    \sup_{f \in \F} \left | \int f d(\mu - \nu) \right| \geq \D(\mu, \nu).
\end{align}

Combining \eqref{proof:ipm-closure:ub} and \eqref{proof:ipm-closure:lb} gives
\begin{align*}
    \D(\mu, \nu) = \sup_{f \in \F} \left| \int f d(\mu - \nu) \right|,
\end{align*}
which confirms that $\D$ is an IPM, as desired.
\end{proof}

Below we provide the proof of Theorem \ref{thm:msw-ipm}. 

\begin{proof}
For each $e_j$, $j = 1, \dots, d$, $\W_1(({e_j})_\# \mu, ({e_j})_\# \nu)$ is an IPM. By Lemmas \ref{lem:sw1-ipm} and \ref{lem:ipms-combination}, we readily obtain the claimed result. 
\end{proof}

\subsubsection{Proof of Theorem \ref{thm:msw-topo-equiv}}
To prove Theorem \ref{thm:msw-topo-equiv}, we need to first establish some useful lemmas and propositions. 

\begin{lemma}\label{lem:monotonicity-msw}
For any $p \geq 1$ and any $\mu,\nu\in\mathcal{P}_p(\R^d)$, we have $\msw_1(\mu, \nu) \leq \msw_p(\mu, \nu)$.
\end{lemma}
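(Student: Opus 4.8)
The plan is to reduce everything to the one-dimensional setting and then invoke two elementary convexity facts. The first building block is the monotonicity of univariate Wasserstein distances in the order: for one-dimensional probability measures $\alpha,\beta$ with finite $p$-th moments, the quantile representation gives $\W_r^r(\alpha,\beta)=\int_0^1|F_\alpha^{-1}(t)-F_\beta^{-1}(t)|^r\,dt$, so the monotonicity of $L^r$-norms on the probability space $([0,1],\mathrm{Leb})$ (equivalently, Jensen's inequality applied to $t\mapsto t^p$) yields $\W_1(\alpha,\beta)\le \W_p(\alpha,\beta)$. The second building block is Jensen's inequality for the concave map $t\mapsto t^{1/p}$ on $[0,\infty)$: for any nonnegative measurable $\varphi\mapsto h(\varphi)$, $\E_{\varphi\sim\sigma}[h(\varphi)^{1/p}]\le(\E_{\varphi\sim\sigma}[h(\varphi)])^{1/p}$.

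Next I would treat the two components of the $\msw$ distance separately. For the marginal-augmentation term, I apply the first building block coordinatewise to $\alpha=(e_j)_\#\mu$ and $\beta=(e_j)_\#\nu$, which lie in $\mathcal{P}_p(\R)$ because $|e_j^\top x|\le\|x\|$; this gives $\frac1d\sum_{j=1}^d\W_1((e_j)_\#\mu,(e_j)_\#\nu)\le\frac1d\sum_{j=1}^d\W_p((e_j)_\#\mu,(e_j)_\#\nu)$. For the sliced term, I apply the first building block pointwise in $\varphi$ to $\alpha=\varphi_\#\mu$, $\beta=\varphi_\#\nu\in\mathcal{P}_p(\R)$, and then integrate: $\E_\varphi[\W_1(\varphi_\#\mu,\varphi_\#\nu)]\le\E_\varphi[\W_p(\varphi_\#\mu,\varphi_\#\nu)]=\E_\varphi[(\W_p^p(\varphi_\#\mu,\varphi_\#\nu))^{1/p}]\le(\E_\varphi[\W_p^p(\varphi_\#\mu,\varphi_\#\nu)])^{1/p}$, where the last inequality is the second building block with $h(\varphi)=\W_p^p(\varphi_\#\mu,\varphi_\#\nu)$.

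Finally I would take the convex combination: multiplying the marginal inequality by $\lambda$ and the sliced inequality by $1-\lambda$ and adding yields $\msw_1(\mu,\nu)\le\msw_p(\mu,\nu)$. All quantities are finite since $\mu,\nu\in\mathcal{P}_p(\R^d)$ forces every one-dimensional projection to have a finite $p$-th moment, and measurability of $\varphi\mapsto\W_p(\varphi_\#\mu,\varphi_\#\nu)$ is inherited from the definition of the sliced Wasserstein distance. There is essentially no genuine obstacle here — the statement is a direct consequence of Jensen's inequality in two guises — and the only points deserving a moment of care are (i) checking that the projected measures retain finite $p$-th moments so that the one-dimensional monotonicity applies, and (ii) keeping the two uses of Jensen conceptually distinct: one on $\mathrm{Leb}$ over $[0,1]$ for the order-$p$ monotonicity, and one on $\sigma$ over $\smb^{d-1}$ for the outer $1/p$ power.
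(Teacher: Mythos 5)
Your argument is correct and follows essentially the same route as the paper: one-dimensional monotonicity $\W_1\le \W_p$ applied to each projection (the paper cites H\"older, you use the quantile representation with Jensen, which is the same fact), followed by Jensen/H\"older on the spherical average, and then the convex combination. The only cosmetic difference is the order in which the two inequalities are applied inside the sliced term, which does not change the substance.
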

\begin{proof}
By H\"older's inequality, for each $j\in [d]$, we have 
\begin{align*}
    \W_1(({e_j})_\# \mu, ({e_j})_\# \nu) &\leq \W_p(({e_j})_\# \mu, ({e_j})_\# \nu).
\end{align*}
For the Sliced Wasserstein distance, fixing any projection $\varphi \in \smb^{d-1}$, the one-dimensional Wasserstein distance satisfies $W_1(\varphi_\#\mu, \varphi_\# \nu) \leq W_p (\varphi_\# \mu, \varphi_\# \nu)$.
Hence by Jensen's inequality,
\begin{align*}
    \int W_1 (\varphi_\# \mu, \varphi_\# \nu) \, d\sigma(\varphi)\leq \left( \int W_1 (\varphi_\# \mu, \varphi_\# \nu)^p \, d\sigma(\varphi) \right)^{1/p} \leq \left(\int W_p (\varphi_\# \mu, \varphi_\# \nu)^p \, d\sigma(\varphi) \right)^{1/p}.
\end{align*}
Thus $\SW_1(\mu, \nu) \leq \SW_p(\mu, \nu)$. Since $\msw$ is a convex combination of these components, the desired inequality holds for the \msw distance.
\end{proof}

\begin{prop}\label{prop:msw-leq-wass}
    For any $p\geq 1$ and $\mu, \nu \in \cp_p(\R^d)$, we have
    \begin{equation*}
        \msw_p(\mu, \nu) \leq C_{d, p, \lambda} \W_p(\mu, \nu),
    \end{equation*}
    where
    \begin{align*}
    C_{d, p, \lambda} = \lambda+(1-\lambda)c_{d, p}^{1/p},\qquad
    c_{d, p} = \frac{1}{d} \int_{\smb^{d-1}} \|\varphi\|_p^p \, d\sigma(\varphi) \leq 1.
\end{align*}
    Note that $C_{d, p, \lambda} \leq 1$ for every $d \in \mathbb{N}_+$, $p \geq 1$, and $\lambda \in (0, 1)$.
\end{prop}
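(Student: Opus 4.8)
The plan is to bound the two constituents of $\msw_p$ — the coordinate-marginal term and the sliced-Wasserstein term — separately by a multiple of $\W_p(\mu,\nu)$, and then recombine using the convex weights $\lambda$ and $1-\lambda$. The one uniform tool in both estimates is that if $\gamma\in\Gamma(\mu,\nu)$ is any coupling and $f\colon\R^d\to\R$ is linear, then $(f\otimes f)_\#\gamma$ is a coupling of $f_\#\mu$ and $f_\#\nu$, so $\W_p^p(f_\#\mu,f_\#\nu)\le\int_{\R^d\times\R^d}|f(x)-f(y)|^p\,d\gamma(x,y)$.

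For the marginal-augmentation term, for each $j\in[d]$ the map $x\mapsto e_j^\top x$ is $1$-Lipschitz with respect to the Euclidean norm, so the displayed inequality with $f=e_j^\top(\cdot)$ and an infimum over $\gamma$ gives $\W_p\big((e_j)_\#\mu,(e_j)_\#\nu\big)\le\W_p(\mu,\nu)$; averaging over $j$ yields $\frac1d\sum_{j=1}^d\W_p\big((e_j)_\#\mu,(e_j)_\#\nu\big)\le\W_p(\mu,\nu)$. For the sliced term, I would fix an arbitrary $\gamma\in\Gamma(\mu,\nu)$, apply the displayed inequality with $f=\varphi^\top(\cdot)$, integrate over $\varphi\sim\sigma$, and use Fubini--Tonelli (all integrands are nonnegative and jointly measurable) to obtain
\[
\int_{\smb^{d-1}}\W_p^p(\varphi_\#\mu,\varphi_\#\nu)\,d\sigma(\varphi)\;\le\;\int_{\R^d\times\R^d}\Big(\int_{\smb^{d-1}}|\varphi^\top(x-y)|^p\,d\sigma(\varphi)\Big)\,d\gamma(x,y).
\]
The crux is the inner spherical integral: by rotational invariance of $\sigma$ one has $\int_{\smb^{d-1}}|\varphi^\top z|^p\,d\sigma(\varphi)=\|z\|_2^p\int_{\smb^{d-1}}|\varphi_1|^p\,d\sigma(\varphi)$ for every $z\in\R^d$, and by coordinate symmetry $\int_{\smb^{d-1}}|\varphi_1|^p\,d\sigma(\varphi)=\frac1d\sum_{j=1}^d\int_{\smb^{d-1}}|\varphi_j|^p\,d\sigma(\varphi)=\frac1d\int_{\smb^{d-1}}\|\varphi\|_p^p\,d\sigma(\varphi)=c_{d,p}$. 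Thus the right-hand side above equals $c_{d,p}\int\|x-y\|^p\,d\gamma$; since this holds for every $\gamma$, taking the infimum over couplings gives $\int_{\smb^{d-1}}\W_p^p(\varphi_\#\mu,\varphi_\#\nu)\,d\sigma(\varphi)\le c_{d,p}\,\W_p^p(\mu,\nu)$, i.e.\ $\big(\E_{\varphi\sim\sigma}\W_p^p(\varphi_\#\mu,\varphi_\#\nu)\big)^{1/p}\le c_{d,p}^{1/p}\,\W_p(\mu,\nu)$.

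Combining the two estimates, $\msw_p(\mu,\nu)\le\lambda\,\W_p(\mu,\nu)+(1-\lambda)c_{d,p}^{1/p}\,\W_p(\mu,\nu)=C_{d,p,\lambda}\,\W_p(\mu,\nu)$. For the final claim $C_{d,p,\lambda}\le1$, it suffices to note $c_{d,p}\le1$: for $\varphi\in\smb^{d-1}$ each coordinate obeys $|\varphi_j|\le\|\varphi\|_2=1$, hence $|\varphi_j|^p\le1$ and $\int_{\smb^{d-1}}|\varphi_1|^p\,d\sigma(\varphi)\le1$; then $C_{d,p,\lambda}=\lambda+(1-\lambda)c_{d,p}^{1/p}\le\lambda+(1-\lambda)=1$. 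I do not anticipate a serious obstacle; the only points needing care are verifying measurability of $\varphi\mapsto\W_p^p(\varphi_\#\mu,\varphi_\#\nu)$ so that Fubini applies, and making sure the passage to the infimum over couplings is performed after the Fubini swap (on the bound valid for every $\gamma$) rather than before.
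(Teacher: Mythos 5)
Your proof is correct and follows the same decomposition as the paper: bound the coordinate-marginal term and the sliced term separately by multiples of $\W_p(\mu,\nu)$ via the projection-coupling inequality, then recombine with the weights $\lambda$ and $1-\lambda$. The only difference is in the sliced term: the paper simply invokes Proposition 5.1.3 of Bonnotte (2013) for $\SW_p^p(\mu,\nu)\le c_{d,p}\W_p^p(\mu,\nu)$, whereas you rederive that estimate inline (Fubini over an arbitrary coupling, rotational invariance giving $\int_{\smb^{d-1}}|\varphi^\top z|^p\,d\sigma(\varphi)=\|z\|_2^p\int_{\smb^{d-1}}|\varphi_1|^p\,d\sigma(\varphi)$, and coordinate symmetry identifying this constant with $c_{d,p}$), which is exactly the content of the cited lemma; your derivation, including the bound $c_{d,p}\le 1$ from $|\varphi_1|\le 1$, is sound, so your argument is simply a self-contained version of the paper's. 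Your care about taking the infimum over couplings after the Fubini swap, and about measurability of $\varphi\mapsto\W_p^p(\varphi_\#\mu,\varphi_\#\nu)$, is appropriate and not a gap.
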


\begin{proof}
     Let $\gamma^* \in \Gamma(\mu, \nu)$ be an optimal transport plan for the minimization problem (\ref{Wpminimization}) with $\Omega=\R^d$. Then for any $j\in [d]$, $({e_j} \otimes {e_j})_\# \gamma^*$ is a transport plan between $({e_j})_\# \mu$ and $({e_j})_\# \nu$. By the definition of the Wasserstein distance as an infimum over all couplings, we have:
    \begin{align*}
        \W_p(({e_j})_\# \mu, ({e_j})_\# \nu) \leq \left(\int |\langle{e_j}, x - y\rangle|^p \, d\gamma^*(x, y)\right)^{1/p}\leq \left(\int \|x - y\|^p \, d\gamma^*(x, y)\right)^{1/p}= \W_p(\mu, \nu).
    \end{align*}
    Consequently, 
    \begin{align*}
        \frac{1}{d} \sum_{j=1}^d W_p(({e_j})_\# \mu, ({e_j})_\# \nu)\leq \W_p(\mu, \nu).
    \end{align*}
Furthermore, from Proposition 5.1.3 of \cite{bonnotte2013unidimensional}, we have:
    \begin{align*}
        \SW_p^p(\mu, \nu) \leq c_{d,p} \W_p^p(\mu, \nu),
    \end{align*}
    where
    \begin{align*}
        c_{d,p} = \frac{1}{d} \int_{\mathbb{S}^{d-1}} \|\varphi\|_p^p d\sigma(\varphi) \leq 1.
    \end{align*}
    Combining these results yields the desired inequality.
\end{proof}

\begin{prop}\label{prop:wass-leq-msw}
For all $\mu, \nu$ supported in $B_R(0)$, $R > 0$, there exists a constant $C_{d, \lambda} > 0$ such that 
\begin{align}
    \W_1(\mu, \nu) \leq C_{d, \lambda} R^{d/(d+1)} \msw_1^{1/(d+1)}(\mu, \nu)
\end{align}
\end{prop}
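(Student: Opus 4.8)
The plan is to reduce the statement to the classical comparison between the $1$-Wasserstein and $1$-Sliced Wasserstein distances on a bounded ball, and then to absorb the Sliced Wasserstein term into $\msw_1$ at the cost of a factor $(1-\lambda)^{-1/(d+1)}$.

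First I would invoke the interpolation inequality of \citet{bonnotte2013unidimensional} (cf.\ Proposition 5.1.5 therein), or a minor adaptation of it: there is a dimensional constant $c_d > 0$ such that, for all $\mu, \nu$ supported in $B_R(0)$,
\begin{align*}
    \W_1(\mu, \nu) \leq c_d\, R^{d/(d+1)}\, \SW_1(\mu, \nu)^{1/(d+1)}.
\end{align*}
If a self-contained derivation is preferred, one reproves this by taking a near-optimal $1$-Lipschitz Kantorovich potential $f$ for $\W_1$, normalized so that $f(0)=0$ (whence $|f|\leq R$ on $B_R(0)$); mollifying it at scale $\eta$ to obtain a smooth $f_\eta$ with $\|f-f_\eta\|_\infty \lesssim \eta$ and derivative bounds $\|\partial^\alpha f_\eta\|_\infty \lesssim \eta^{1-|\alpha|}$; expressing $\int f_\eta\, d(\mu-\nu)$ through the Fourier slice / Radon transform identity so that it is controlled by $\SW_1(\mu,\nu)$ times a negative power of $\eta$; and finally optimizing over $\eta$. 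This is exactly Bonnotte's computation.

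Next I would absorb the $\SW_1$ term. Since $\lambda \in (0,1)$ and the marginal-augmentation term is nonnegative,
\begin{align*}
    \msw_1(\mu,\nu) = \frac{\lambda}{d}\sum_{j=1}^d \W_1\bigl((e_j)_\#\mu,\, (e_j)_\#\nu\bigr) + (1-\lambda)\,\SW_1(\mu,\nu) \;\geq\; (1-\lambda)\,\SW_1(\mu,\nu),
\end{align*}
hence $\SW_1(\mu,\nu) \leq (1-\lambda)^{-1}\,\msw_1(\mu,\nu)$. Substituting this into the inequality from the previous paragraph gives
\begin{align*}
    \W_1(\mu,\nu) \leq c_d\, R^{d/(d+1)}\Bigl(\tfrac{\msw_1(\mu,\nu)}{1-\lambda}\Bigr)^{1/(d+1)} = \frac{c_d}{(1-\lambda)^{1/(d+1)}}\, R^{d/(d+1)}\,\msw_1(\mu,\nu)^{1/(d+1)},
\end{align*}
which is exactly the claimed bound with $C_{d,\lambda} := c_d\,(1-\lambda)^{-1/(d+1)}$.

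The hard part is the first step: the dimension-dependent interpolation inequality $\W_1 \lesssim R^{d/(d+1)}\,\SW_1^{1/(d+1)}$ is the genuinely nontrivial ingredient, whereas the remaining manipulations are elementary. Citing \citet{bonnotte2013unidimensional} makes it immediate; otherwise the mollification-plus-Radon-transform estimate sketched above is the technical heart of the argument. As a sanity check on the exponents, note that both sides scale linearly under a common dilation of $\mu$ and $\nu$ since $\tfrac{d}{d+1}+\tfrac{1}{d+1}=1$; this is precisely the homogeneity needed downstream to derive the second inequality of Theorem \ref{thm:msw-topo-equiv} from Proposition \ref{prop:wass-leq-msw}, using $\W_p^p \leq (2R)^{p-1}\W_1$ together with Lemma \ref{lem:monotonicity-msw}.
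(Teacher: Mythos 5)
Your proposal is correct and follows essentially the same route as the paper: bound $\SW_1 \leq (1-\lambda)^{-1}\msw_1$ using nonnegativity of the marginal term, and invoke Bonnotte's interpolation inequality $\W_1 \lesssim R^{d/(d+1)}\SW_1^{1/(d+1)}$ on $B_R(0)$ (the paper cites Lemma 5.1.4 of \citet{bonnotte2013unidimensional} rather than Proposition 5.1.5), yielding the same constant $C_{d,\lambda} = \widetilde{C}_d(1-\lambda)^{-1/(d+1)}$.
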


\begin{proof}
    For brevity of notation, let $\mathcal{M}_1(\mu, \nu) := \frac{1}{d} \sum_{j=1}^d \W_1(({e_j})_\# \mu, ({e_j})_\# \nu)$.
    By the definition of the \msw distance,
    \begin{align*}
        \SW_1(\mu, \nu) &= \frac{\msw_1(\mu, \nu) - \lambda \mathcal{M}_1(\mu, \nu)}{1-\lambda} \\
        &\leq (1-\lambda)^{-1}\msw_1(\mu, \nu),
    \end{align*}
    as $\lambda \mathcal{M}_1(\mu, \nu) \geq 0$ for $\lambda \in (0,1)$. 
    
    By Lemma 5.1.4 of \cite{bonnotte2013unidimensional}, there exists a constant $\widetilde{C}_d > 0$ such that
    \begin{align*}
        \W_1(\mu, \nu) &\leq \widetilde{C}_d R^{d/(d+1)} \SW_1(\mu, \nu)^{1/(d+1)}\\
        &\leq \widetilde{C}_d R^{d/(d+1)} ((1-\lambda)^{-1}\msw_1(\mu, \nu))^{1/(d+1)}\\
        &= \widetilde{C}_d R^{d/(d+1)} (1-\lambda)^{-1/(d+1)} \msw_1(\mu, \nu)^{1/(d+1)}
    \end{align*}
    for all probability measures $\mu, \nu$ supported in $B_R(0)$.
    
    Setting $C_{d, \lambda} := \widetilde{C}_d (1-\lambda)^{-1/(d+1)}$, we obtain the desired result:
    \begin{align*}
         \W_1(\mu, \nu) \leq C_{d, \lambda} R^{d/(d+1)} \msw_1^{1/(d+1)}(\mu, \nu).
    \end{align*}
\end{proof}

Now we can introduce the proof of the Theorem \ref{thm:msw-topo-equiv}.
\begin{proof}
    The first inequality follows from Proposition \ref{prop:msw-leq-wass}. The second inequality follows from Proposition \ref{prop:wass-leq-msw} and Lemma \ref{lem:monotonicity-msw}, on noting that $W_p^p(\mu,\nu)\leq (2R)^{p-1}W_1(\mu,\nu)$.  
\end{proof}

\subsubsection{Proof of Theorem \ref{thm:msw-metrize-conv}}
\begin{proof}
Let $(\mu_\ell)_{\ell \in \mathbb{N}_+}, 
\mu$ be probability measures in $\cp_p(\R^d)$ such that $\lim_{\ell \to \infty}\msw_p(\mu_\ell, \mu) = 0$. Since $\lambda \in (0, 1)$, this implies $\lim_{\ell \to \infty}SW_p(\mu_\ell, \mu) = 0$. By Theorem 1 of \cite{nadjahi2019asymptotic}, we conclude that $\mu_\ell \Rightarrow \mu$, where we use $\Rightarrow$ to denote weak convergence in $\cp_p(\R^d)$.

Conversely, suppose $\mu_\ell \Rightarrow \mu$.  
Theorem 6.9 of \cite{villani2009optimal} gives $W_p(\mu_\ell, \mu) \to 0$. Because $SW_p(\mu_\ell, \mu) \leq W_p(\mu_\ell, \mu)$, we also obtain $SW_p(\mu_\ell, \mu) \to 0$. 
For the marginal component, 
by the Cram\'er-Wold theorem,
a sequence of probability measures on $\R^d$ converges weakly if and only if their one-dimensional projections converge weakly for all directions; therefore, for any $j\in\{1,\cdots,d\}$, 
\begin{align*}
    ({e_j})_\# \mu_\ell \overset{d}{\to} ({e_j})_\# \mu.
\end{align*}
Now fix $j \in \{1, \dots, d\}$. 
Recall that weak convergence in $\cp_p(\R^d)$ requires that \cite[Definition 6.8]{villani2009optimal}
\begin{align*}
    \lim_{R\to\infty}\limsup_{\ell\rightarrow\infty}\int_{\R^d} \|z\|^p\mathbbm{1}_{\|z\|\geq R}d\mu_{\ell}(z) = 0.
\end{align*}
Since $|z_j |^p \leq \|z\|^p$, we have 
\begin{align*}
    \lim_{R\to\infty}\limsup_{\ell\rightarrow\infty} \int_{\mathbb{R}} |t|^p\ind_{|t|\geq R} \, d({e_j})_\# \mu_\ell(t) =\lim_{R\to\infty}\limsup_{\ell\rightarrow\infty}\int_{\R^d} |z_j|^p \mathbbm{1}_{|z_j|\geq R} d\mu_{\ell}(z) = 0.
\end{align*}
Hence $({e_j})_\# \mu_\ell$ converges to $({e_j})_\# \mu$ in $\cp_p(\R)$ for every $j\in\{1,\cdots,d\}$. Appealing once again to Theorem 6.9 of \cite{villani2009optimal} yields
each term $W_p(({e_j})_\#\mu_\ell, ({e_j})_\#\mu) \to 0$.
Hence,
\begin{align*}
    \msw_p(\mu_\ell, \mu) = \frac{\lambda}{d} \sum_{j=1}^d W_p(({e_j})_\# \mu_\ell, \, ({e_j})_\# \mu) + (1-\lambda) SW_p(\mu_\ell, \mu) \to 0.
\end{align*}
Therefore $\msw_p$ indeed metrizes weak convergence on $\cp_p(\R^d)$ without any compactness assumptions, as desired.
\end{proof}

\subsubsection{Proof of Theorem \ref{theory:msw-conv-rate}}

We start with two lemmas.

\begin{lemma}\label{theory:lemma:trim-constant-finite}
Fix $\delta \in (0,1/2)$. For every $\varphi \in \smb^{d-1}$ and $\tau \in [\delta, 1-\delta]$, we have 
\begin{align*}
    \bigl|F^{-1}_{\mu_\varphi}(\tau)\bigr| \leq \left(\frac{M_{\mu,p}}{\delta}\right)^{1/p} < \infty.
\end{align*}
\end{lemma}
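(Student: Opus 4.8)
The plan is to reduce the claim to a one-dimensional tail estimate via Markov's inequality applied to the $p$-th moment of the projected measure. First I would note that for $Z\sim\mu$ and any $\varphi\in\smb^{d-1}$ the random variable $Y:=\varphi^\top Z$ has law $\mu_\varphi$ and, by Cauchy--Schwarz, $|Y|=|\varphi^\top Z|\le \|\varphi\|\,\|Z\|=\|Z\|$. Consequently $\mathbb{E}_{Y\sim\mu_\varphi}[|Y|^p]\le \mathbb{E}_{Z\sim\mu}[\|Z\|^p]<M_{\mu,p}$, so $\mu_\varphi\in\mathcal{P}_p(\R)$ with a $p$-th moment bounded uniformly in $\varphi$. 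The same bound applies verbatim to the axis-aligned marginals $\mu_j=(e_j)_\#\mu$ since $\|e_j\|=1$, which is what makes the resulting constant simultaneously control every term in Definition \ref{def:msw-formal}.

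Next I would exploit monotonicity of the quantile function: since $F^{-1}_{\mu_\varphi}$ is nondecreasing, for $\tau\in[\delta,1-\delta]$ it suffices to bound $F^{-1}_{\mu_\varphi}(1-\delta)$ from above and $F^{-1}_{\mu_\varphi}(\delta)$ from below. Write $r:=(M_{\mu,p}/\delta)^{1/p}$. For the upper bound, $\mu_\varphi\big((r,\infty)\big)\le \mathbb{P}(|Y|\ge r)\le \mathbb{E}[|Y|^p]/r^p<M_{\mu,p}/r^p=\delta$, so $F_{\mu_\varphi}(r)=1-\mu_\varphi\big((r,\infty)\big)>1-\delta$, hence $r$ lies in $\{x:F_{\mu_\varphi}(x)\ge 1-\delta\}$ and therefore $F^{-1}_{\mu_\varphi}(1-\delta)\le r$. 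For the lower bound, $\{Y\le -r\}\subseteq\{|Y|\ge r\}$ gives $F_{\mu_\varphi}(-r)\le\mathbb{P}(|Y|\ge r)<\delta$; since $F_{\mu_\varphi}$ is nondecreasing, this forces $\inf\{x:F_{\mu_\varphi}(x)\ge\delta\}\ge -r$, i.e. $F^{-1}_{\mu_\varphi}(\delta)\ge -r$. Chaining these, $-r\le F^{-1}_{\mu_\varphi}(\delta)\le F^{-1}_{\mu_\varphi}(\tau)\le F^{-1}_{\mu_\varphi}(1-\delta)\le r$, so $|F^{-1}_{\mu_\varphi}(\tau)|\le r=(M_{\mu,p}/\delta)^{1/p}$, which is finite because $M_{\mu,p}<\infty$ and $\delta>0$.

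I do not anticipate a genuine obstacle here — the argument is elementary. The only care needed is in translating a CDF tail bound into a bound on the generalized inverse (keeping track of monotonicity/right-continuity of $F_{\mu_\varphi}$ and the correct set containment $\{Y\le -r\}\subseteq\{|Y|\ge r\}$), and in observing that the uniformity in $\varphi$ is automatic since the envelope $M_{\mu,p}$ does not depend on the direction.
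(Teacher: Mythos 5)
Your proof is correct and follows essentially the same route as the paper: bound the projected $p$-th moment by Cauchy--Schwarz, apply Markov's inequality with $t=(M_{\mu,p}/\delta)^{1/p}$, and translate the resulting tail bound into two-sided quantile bounds via monotonicity. Your write-up just spells out the CDF-to-generalized-inverse step slightly more explicitly than the paper does.
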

\begin{rmk}
A similar result holds for $\nu$.
\end{rmk}

\begin{proof}
Let $Z \sim \mu$. For any $\varphi \in \smb^{d-1}$, denote by $Z_\varphi := \varphi^\top Z$ the corresponding one-dimensional random variable with distribution $\mu_\varphi$. By the Cauchy-Schwarz inequality, using the fact that $\E[\|Z\|^p] < M_{\mu,p}$, for every $\varphi \in \smb^{d-1}$, we have
    \begin{align*}
        \E[|Z_\varphi|^p] = \E[|\langle\varphi, Z\rangle|^p] \leq \E[\|Z\|^p]<M_{\mu,p} < \infty.
    \end{align*}
    Hence by Markov's inequality, for any $t>0$,
    \begin{align*}
        \p(|Z_\varphi| \geq t) \leq \frac{\E[|Z_\varphi|^p]}{t^p} < 
        \frac{M_{\mu,p}}{t^p}.
    \end{align*}
Setting $t_\delta = \left(\frac{M_{\mu, p}}{\delta}\right)^{1/p} < \infty$, we get
\begin{align*}
        \p(|Z_\varphi| \geq t_{\delta})< \frac{M_{\mu, p}}{t_{\delta}^p} = \delta.
    \end{align*}
Together with the 
    fact that $F_{\mu_\varphi}(F^{-1}_{\mu_\varphi}(\delta))\geq \delta$, this implies that 
    \begin{align*}
        F^{-1}_{\mu_\varphi}(\delta) \geq -t_{\delta}, \quad F^{-1}_{\mu_\varphi}(1-\delta) \leq t_{\delta}.
    \end{align*}
    Therefore,
    \begin{align*}
        \big| F^{-1}_{\mu_\varphi}(\tau) \big|\leq t_\delta = \left(\frac{M_{\mu, p}}{\delta}\right)^{1/p} < \infty \quad \text{for all } \tau \in [\delta, 1-\delta], \varphi \in \smb^{d-1}.
    \end{align*}
\end{proof}

\begin{lemma}[Empirical Quantile Deviation]\label{theory:lemma:empirical-quantile-deviation}
Let $\delta\in(0,1/2)$, $\varphi \in \smb^{d-1}$, and suppose that $Z_{\varphi,1},\dots,Z_{\varphi,m}$ are i.i.d.\ samples from $\varphi_\#\mu$, so that the empirical measure is 
\begin{align*}
\varphi_\#\widehat{\mu}_m = \frac{1}{m}\sum_{i=1}^{m}\delta_{Z_{\varphi,i}},
\end{align*}
with empirical CDF $\widehat{F}_{m,\mu_\varphi}$ and true CDF $F_{\mu_\varphi}$. Then, for any $t \geq 0$, we have
\begin{align*}
\p\left\{ \left|\widehat{F}^{-1}_{m,\mu_\varphi}(\delta)-F^{-1}_{\mu_\varphi}(\delta)\right| > t \right\}
&\leq 2\exp\left(-2m\, \psi_{\delta,t}(\mu_\varphi)^2\right),
\end{align*}
where
$\psi_{\delta,t}(\cdot)$ is as defined in Eq.\ \eqref{phideltat}.
\end{lemma}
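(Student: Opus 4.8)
The plan is to reduce the two-sided quantile-deviation event to a pair of one-sided events concerning the empirical CDF \emph{evaluated at the fixed points} $q \pm t$, where $q := F^{-1}_{\mu_\varphi}(\delta)$, and then control each by Hoeffding's inequality. The first step records the standard identities for the generalized inverse of the (nondecreasing, right-continuous) empirical CDF: for every $y \in \R$,
\begin{align*}
\widehat{F}^{-1}_{m,\mu_\varphi}(\delta) > y \ \Longleftrightarrow\ \widehat{F}_{m,\mu_\varphi}(y) < \delta,
\qquad
\widehat{F}^{-1}_{m,\mu_\varphi}(\delta) \le y \ \Longleftrightarrow\ \widehat{F}_{m,\mu_\varphi}(y) \ge \delta.
\end{align*}
These follow from monotonicity and the fact that $\{x : \widehat{F}_{m,\mu_\varphi}(x) \ge \delta\}$ is a closed half-line, so its infimum is attained. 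Since $t \ge 0$, they yield $\{\widehat{F}^{-1}_{m,\mu_\varphi}(\delta) > q+t\} = \{\widehat{F}_{m,\mu_\varphi}(q+t) < \delta\}$ and $\{\widehat{F}^{-1}_{m,\mu_\varphi}(\delta) < q-t\} \subseteq \{\widehat{F}_{m,\mu_\varphi}(q-t) \ge \delta\}$, whence $\{|\widehat{F}^{-1}_{m,\mu_\varphi}(\delta) - q| > t\}$ lies in the union of these two events.

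Next I would observe that for each fixed $x$, $m\,\widehat{F}_{m,\mu_\varphi}(x) = \sum_{i=1}^{m} \ind\{Z_{\varphi,i} \le x\}$ is a sum of i.i.d.\ $[0,1]$-valued random variables with mean $m\,F_{\mu_\varphi}(x)$. By the definition of the $\delta$-quantile (so $F_{\mu_\varphi}(q) \ge \delta$) and monotonicity of $F_{\mu_\varphi}$, the defining expression \eqref{phideltat} gives $F_{\mu_\varphi}(q+t) - \delta \ge \psi_{\delta,t}(\mu_\varphi)$ and $\delta - F_{\mu_\varphi}(q-t) \ge \psi_{\delta,t}(\mu_\varphi)$. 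If $\psi_{\delta,t}(\mu_\varphi) \le 0$ the claim is immediate since the right-hand side is at least $2$; otherwise, on the first event $\widehat{F}_{m,\mu_\varphi}(q+t) - F_{\mu_\varphi}(q+t) < \delta - F_{\mu_\varphi}(q+t) \le -\psi_{\delta,t}(\mu_\varphi)$, and on the second $\widehat{F}_{m,\mu_\varphi}(q-t) - F_{\mu_\varphi}(q-t) \ge \delta - F_{\mu_\varphi}(q-t) \ge \psi_{\delta,t}(\mu_\varphi)$, so each event has probability at most $\exp(-2m\,\psi_{\delta,t}(\mu_\varphi)^2)$ by Hoeffding's inequality. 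A union bound then yields the stated estimate, and the parallel argument with $1-\delta$ in place of $\delta$ establishes the analogous bound used later for $\psi_{1-\delta,t}$.

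There is no substantive obstacle here; the only care required lies in the generalized-inverse identities (keeping strict versus non-strict inequalities straight) and in verifying that $\psi_{\delta,t}(\mu_\varphi)$ is a genuine nonnegative lower bound on the relevant mean-deviation so that Hoeffding's inequality applies — both of which are routine once one notes that the degenerate case $\psi_{\delta,t}(\mu_\varphi) \le 0$ (e.g.\ $t=0$, or flat stretches of $F_{\mu_\varphi}$ near the quantile) can be dispatched separately.
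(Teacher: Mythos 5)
Your proposal is correct in the substantive case and follows essentially the same route as the paper: reduce the quantile deviation to the events $\{\widehat{F}_{m,\mu_\varphi}(q+t)<\delta\}$ and $\{\widehat{F}_{m,\mu_\varphi}(q-t)\ge\delta\}$ via the generalized-inverse identities, apply Hoeffding at these two fixed points with deviations bounded below by $\psi_{\delta,t}(\mu_\varphi)$, and finish with a union bound. One small slip: your dispatch of the degenerate case is wrong as stated, since if $\psi_{\delta,t}(\mu_\varphi)<0$ (possible only at $t=0$ with an atom at the quantile) the right-hand side is $2\exp(-2m\,\psi_{\delta,t}^2)<2$, not at least $2$ — though the paper's own proof silently assumes $\psi_{\delta,t}\ge 0$ as well, and only strictly positive $t$ (hence $\psi_{\delta,t}\ge 0$) is ever used downstream.
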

\begin{rmk}
A similar result holds for $\nu$.
\end{rmk}

\begin{proof}
For simplicity of notation, we omit the dependence on the fixed direction $\varphi$ and $\mu$ throughout this proof. Let $F_m(z) :=\hat{F}_{m,\mu_{\varphi}}(z) = \frac{1}{m}\sum_{i=1}^m \ind\left\{{Z_{\varphi, i}}\leq z\right\}$ and $F(z) := F_{\mu_\varphi}(z)$ for every $z\in\mathbb{R}$.

By Hoeffding's inequality, for any $z\in\mathbb{R}$ and $t\geq 0$, we have
\begin{align*}
    \p(F_m(z) - F(z) \geq t) \leq \exp(-2mt^2),\quad \p(F_m(z) - F(z) \leq -t) \leq \exp(-2mt^2).
\end{align*}
Observe that the event $F_{m}^{-1}(\delta) > F^{-1}(\delta) + t$ is equivalent to $F_m(F^{-1}(\delta) + t) < \delta$. Thus, we can write:
\begin{align*}
    \p \big (F_m^{-1}(\delta) - F^{-1}(\delta) > t) &= \p\big(F_m(F^{-1}(\delta) + t) < \delta\big) \\
    &= \p\big(F_m(F^{-1}(\delta) + t) - F(F^{-1}(\delta) + t) < \delta - F(F^{-1}(\delta) + t)\big) \\
    &\leq \exp\big\{-2m (F(F^{-1}(\delta) + t) -\delta)^2\big\},
\end{align*}
where we use the fact that $F(F^{-1}(\delta)+t)\geq F(F^{-1}(\delta))\geq\delta$ in the last line). An analogous argument applies for upper bounding $\p(F_m^{-1}(\delta) - F^{-1}(\delta)<-t)$. Applying the union bound to combine the two upper bounds yields the desired inequality.
\end{proof}

\begin{proof}[Proof of Theorem \ref{theory:msw-conv-rate}]
Fix any $\delta\in(0,1\slash 2)$ and $\bar{\delta}\in (0,1)$. We let
    \begin{align*}
        \underline{q}_{\mu, \delta}(\varphi) :&= F^{-1}_{\mu_{\varphi}}(\delta) \wedge \widehat{F}_{m, \mu_\varphi}^{-1}(\delta),\\
        \overline{q}_{\mu, 1-\delta}(\varphi) :&= F^{-1}_{\mu_{\varphi}}(1-\delta) \vee \widehat{F}^{-1}_{m, \mu_\varphi}(1-\delta),\\
        \underline{q}_{\nu, \delta}(\varphi) :&= F^{-1}_{\nu_{\varphi}}(\delta) \wedge \widehat{F}^{-1}_{m, \nu_\varphi}(\delta),\\
        \overline{q}_{\nu, 1-\delta}(\varphi) :&= F^{-1}_{\nu_{\varphi}}(1-\delta) \vee \widehat{F}^{-1}_{m, \nu_\varphi}(1-\delta).
    \end{align*}
    
    \paragraph{Analysis of Empirical Convergence Rate}
    We decompose the total error into two components and apply concentration techniques to control each term.
    
    \paragraph{Error from the Marginal Component}
    For each coordinate $j = 1, \dots, d$, the projected measures $\mu_j$ and $\nu_j$ are one-dimensional. 
    Let $q_{\mu, \delta}(e_j) = |\underline{q}_{\mu, \delta}(e_j)| \vee |\overline{q}_{\mu, 1-\delta}(e_j)|$. 
    Note that 
     \begin{align*}
      \W^p_{p, \delta}\big(({e_j})_\#\widehat{\mu}_m&, ({e_j})_\# \mu \big) = \frac{1}{1-2\delta} \int_\delta^{1-\delta} \left | \widehat{F}^{-1}_{m, \mu_j}(\tau) - F^{-1}_{\mu_j}(\tau) \right|^p \, d\tau\\
        &\leq \frac{1}{1-2\delta}\int_{0}^{2 q_{\mu, \delta}(e_j)} p t^{p-1} \int_\delta^{1-\delta}\ind\left\{\left| \widehat{F}^{-1}_{m, \mu_j}(\tau) - F^{-1}_{\mu_j}(\tau) \right| \geq t  \right\} d\tau \, dt\\
        &\leq    \frac{1}{1-2\delta} \sup_{t \in [0, 2q_{\mu, \delta}(e_j)]}\{pt^{p-1}\}\int_{0}^{2 q_{\mu, \delta}(e_j)} \int_\delta^{1-\delta}\ind\left\{\left| \widehat{F}^{-1}_{m, \mu_j}(\tau) - F^{-1}_{\mu_j}(\tau) \right| \geq t \right\}d\tau \, dt  \\
        &\leq \frac{p}{1-2\delta} \big(2 q_{\mu, \delta}(e_j)\big)^{p-1}\int_{\delta}^{1-\delta} \left|\widehat{F}_{m,\mu_j}^{-1}(\tau) - F_{\mu_j}^{-1}(\tau) \right|\, d\tau\\
         &\leq \frac{p}{1-2\delta} 
         \big(2 q_{\mu, \delta}(e_j)\big)^{p-1}
         \int_{\underline{q}_{\mu, \delta}(e_j)}^{\overline{q}_{\mu, 1-\delta}(e_j)} |\widehat{F}_{m, \mu_j}(t) - F_{\mu_j}(t)|\, dt\\
         &\leq \frac{p}{1-2\delta} 
         \big(2 q_{\mu, \delta}(e_j)\big)^{p-1}
         ( \overline{q}_{\mu, 1-\delta}(e_j) - \underline{q}_{\mu, \delta}(e_j)) \|\widehat{F}_{m, \mu_j} - F_{\mu_j}\|_\infty,
    \end{align*}
    where the second, fourth, and fifth lines are derived using the Fubini--Tonelli theorem.

Recall Eq. \eqref{phideltat}-\eqref{def_epsnew}. By Lemma \ref{theory:lemma:empirical-quantile-deviation}, for any $j\in \{1,\cdots,d\}$, we have 
\begin{align}\label{proof:eq:quantile-dev}
        \p\left\{ \left|\widehat{F}^{-1}_{m,\mu_j}(\delta)-F^{-1}_{\mu_j}(\delta)\right| > \eps_{m,d,\delta, \overline{\delta}}(\mu_j) \right\}
        &\leq 2\exp\left(-2m\, \psi_{\delta,\eps_{m,d, \delta, \overline{\delta}}(\mu_j)}(\mu_j)^2\right) \leq \frac{\overline{\delta}}{16d}.
    \end{align}
    Similarly, \begin{align}\label{proof:eq:quantile-dev2}
        \p\left\{ \left|\widehat{F}^{-1}_{m,\mu_j}(1-\delta)-F^{-1}_{\mu_j}(1-\delta)\right| > \eps_{m,d,1-\delta, \overline{\delta}}(\mu_j) \right\}
        &\leq 2\exp\left(-2m\, \psi_{1-\delta,\eps_{m,d, 1-\delta, \overline{\delta}}(\mu_j)}(\mu_j)^2\right) \leq \frac{\overline{\delta}}{16d}.
    \end{align}
    By Lemma \ref{theory:lemma:trim-constant-finite}, Eq.\ \eqref{proof:eq:quantile-dev}-\eqref{proof:eq:quantile-dev2}, and the union bound, the following inequalities hold with probability at least $1-\frac{\overline{\delta}}{8}$ for all $j\in\{1, \dots, d\}$:
    \begin{align*}
        q_{\mu, \delta}(e_j) &\leq
\left(\frac{M_{\mu, p}}{\delta}\right)^{1/p} + \eps_{m,d,\delta, \overline{\delta}}(\mu_j) \vee \eps_{m,d,1-\delta,\overline{\delta}}(\mu_j)=\frac{1}{2}R_{\mu_j, \delta}, \\
        \overline{q}_{\mu, 1-\delta}(e_j) - \underline{q}_{\mu, \delta}(e_j) &\leq 2 \left(\left(\frac{M_{\mu, p}}{\delta}\right)^{1/p} + \eps_{m,d,\delta, \overline{\delta}}(\mu_j) \vee \eps_{m,d,1-\delta, \overline{\delta}}(\mu_j)\right)= R_{\mu_j, \delta}.
    \end{align*}

    Therefore, with probability at least $1-\frac{\overline{\delta}}{8}$, we have for all $j \in \{1,\cdots,d\}$, 
    \begin{align}
  \W^p_{p, \delta}\big(({e_j})_\#\widehat{\mu}_m, ({e_j})_\# \mu \big) &\leq \frac{p}{1-2\delta} 
        R_{\mu_j, \delta}^p \big \|\widehat{F}_{m, \mu_j} - F_{\mu_j} \big\|_\infty.
    \end{align}
    \noindent By the Dvoretzky-Kiefer-Wolfowitz inequality and a union bound over $j = 1, \dots, d$, we have
    \begin{align*}
        \p\left(\max_{j \in [d]}\{\|\widehat{F}_{m, \mu_j} - F_{\mu_j}\|_\infty\}>t \right) \leq 2d e^{-2mt^2}.
    \end{align*}
    Combining these results, we obtain that
    \begin{align*}
        \p \left(\bigcup_{j \in [d]} \left\{ \W^p_{p, \delta}\big( ({e_j})_\# \widehat{\mu}_m,\, ({e_j})_\# \mu \big) > \frac{p}{1-2\delta}R_{\mu_j, \delta}^p \cdot  t\right\}\right) \leq \frac{\overline{\delta}}{8} + 2de^{-2mt^2}.
    \end{align*}
    Setting $\frac{\overline{\delta}}{8} = 2d e^{-2mt^2}$, we obtain $t = \sqrt{\frac{\log(16d/\overline{\delta})}{2m}}$.
    Therefore, with probability at least $1-\frac{\overline{\delta}}{4}$, for all $j\in\{1,\cdots,d\}$,
    \begin{align*}
           \W^p_{p, \delta}\big( ({e_j})_\# \widehat{\mu}_m,\, ({e_j})_\# \mu \big)   \leq \frac{p}{1-2\delta}\max_{j \in [d]}\{R_{\mu_j, \delta}^p\}\cdot \sqrt{\frac{\log(16d/\overline{\delta})}{2m}}.
    \end{align*}
    Applying an analogous argument to $\nu$ yields
    \begin{align*}
        \p\left( \max_{j\in[d]} \big\{\W^p_{p, \delta}\big( ({e_j})_\# \widehat{\nu}_{m'},\, ({e_j})_\# \nu\big)\big\}\leq \frac{p}{1-2\delta}\max_{j \in [d]}\{R_{\nu_j, \delta}^p\}\cdot \sqrt{\frac{\log(16d/\overline{\delta})}{2m'}}\right) \geq 1-\frac{\overline{\delta}}{4}.
    \end{align*}

    Recall that $R_{\max}= \max_{j \in [d]} \{R_{\mu_j, \delta}\} \vee \max_{j \in [d]} \{R_{\nu_j, \delta}\}$.
    By the union bound and the triangle inequality, with probability at least $1-\frac{\overline{\delta}}{2}$, we have for all $j \in [d]$,
    \begin{align*}
    &\bigl|\W_{p, \delta}\big(({e_j})_\# \widehat{\mu}_m, ({e_j})_\# \widehat{\nu}_{m'}\big) - \W_{p, \delta}\big(({e_j})_\# \mu, ({e_j})_\# \nu\big)\bigl| \\
    \leq &  \left|\W_{p, \delta}\big(({e_j})_\# \widehat{\mu}_m, ({e_j})_\# \mu\big)\right| + \left|\W_{p, \delta}\big(({e_j})_\# \widehat{\nu}_{m'}, ({e_j})_\# \nu\big)\right|.
\end{align*}
Define $\widehat{\mu}_{m, j} := ({e_j})_{\#} \widehat{\mu}_m$ and similarly $\widehat{\nu}_{m', j} := ({e_j})_{\#} \widehat{\nu}_{m'}$, we therefore have
\begin{equation*}
\begin{gathered}
     \p\Biggl(\max_{j\in[d]} \biggl|\W_{p, \delta} \big(\widehat{\mu}_{m, j}, \widehat{\nu}_{m', j}\big) 
     - \W_{p, \delta}\big(\mu_j, \nu_j\big)\bigg| \leq 2R_{\max}\Biggl( \frac{p}{1-2\delta} 
     \cdot \sqrt{\log(16d/\overline{\delta}})\Biggr)^{1/p}
     \big(m^{-\frac{1}{2p}}+m'^{-\frac{1}{2p}}\big)
    \Biggr)\\
    \centering
    \geq 1 - \frac{\overline{\delta}}{2}.
\end{gathered}
\end{equation*}
Thus
\begin{equation}
\begin{gathered}
    \p\Biggl(\left | \frac{1}{d}\sum_{j=1}^dW_{p, \delta}\big(\widehat{\mu}_{m, j}, \, \widehat{\nu}_{m', j}\big) - \frac{1}{d}\sum_{j=1}^d W_{p, \delta} (\mu_j, \nu_j) \right | \leq 2R_{\max}\Biggl( \frac{p}{1-2\delta} \cdot \sqrt{\log(16d/\overline{\delta}})\Biggr)^{1/p}(m^{-\frac{1}{2p}}+m'^{-\frac{1}{2p}}) \Biggr)\\
    \centering
    \geq 1 - \frac{\overline{\delta}}{2}.
\end{gathered}
\end{equation}

\paragraph{Error from the Sliced Wasserstein Component}

By Proposition 1(ii) of \cite{manole2022minimax}, there exists a constant $C_p > 0$ depending only on $p$ such that
\begin{align*}
   & \E\left[\big |SW_{p, \delta}(\widehat{\mu}_m , 
 \widehat{\nu}_{m'}) - SW_{p, \delta}(\mu, \nu)\big |\right] \nonumber\\
 \leq & \frac{C_p\big(\mathbb{E}_{Z\sim \mu}[\|Z\|^2]^{1\slash 2}\vee \mathbb{E}_{Z\sim \nu}[\|Z\|^2]^{1\slash 2}\big)}{\sqrt{\delta}(1-2\delta)^{1\slash p}}\big(m^{-1\slash (2p)} + m'^{-1\slash (2p)}\big).
\end{align*}
Combining the expression above with Markov's inequality, for any $t\geq 0$, we have
\begin{align*}
   & \p\left(\big |SW_{p, \delta}(\widehat{\mu}_m, \widehat{\nu}_{m'}) - SW_{p, \delta}(\mu, \nu)\big | \geq t \right) \nonumber\\
   &\leq  \frac{C_p\big(\mathbb{E}_{Z\sim \mu}[\|Z\|^2]^{1\slash 2}\vee \mathbb{E}_{Z\sim \nu}[\|Z\|^2]^{1\slash 2}\big)}{\sqrt{\delta}(1-2\delta)^{1\slash p}t}\big(m^{-1\slash (2p)} + m'^{-1\slash (2p)}\big).
\end{align*}
Hence taking 
\begin{align*}
    t_{\mathrm{SW}} := 
    \frac{2C_p\big(\mathbb{E}_{Z\sim \mu}[\|Z\|^2]^{1\slash 2}\vee \mathbb{E}_{Z\sim \nu}[\|Z\|^2]^{1\slash 2}\big)}{\sqrt{\delta}(1-2\delta)^{1\slash p}\overline{\delta}} 
    \big(m^{-1\slash (2p)} + m'^{-1\slash (2p)}\big),
\end{align*}
we get 
\begin{align}
 \p\left(\big |SW_{p, \delta}(\widehat{\mu}_m,\widehat{\nu}_{m'}) - SW_{p, \delta}(\mu, \nu)\big | \geq t_{\mathrm{SW}} \right) \leq \frac{\overline{\delta}}{2}.
\end{align}

Thus, combining both parts, we obtain the desired rate
\begin{equation*}
    \p\left(\big| \msw_{p, \delta}(\widehat{\mu}_m, \widehat{\nu}_{m'}) - \msw_{p, \delta}(\mu, \nu) \big| \geq t_{\msw}\right) \leq \overline{\delta},
\end{equation*}
where
\begin{align*}
     t_{\msw}:=& \left\{2\lambda  R_{\max} \left(\frac{p}{1-2\delta} \sqrt{\log(  16 d / \overline{\delta})} \right)^{1/p} + \frac{2(1-\lambda)C_p\big(\mathbb{E}_{Z\sim \mu}[\|Z\|^2]^{1\slash 2}\vee \mathbb{E}_{Z\sim \nu}[\|Z\|^2]^{1\slash 2}\big)}{\sqrt{\delta}(1-2\delta)^{1\slash p}\overline{\delta}}\right\}\nonumber\\
     &\quad\cdot\big(m^{-1/(2p)} + m'^{-1/(2p)}\big).
\end{align*}
    
\end{proof}

\subsection{Proofs of the Main Theorems from Section \ref{theory:abi-property}}

\subsubsection{Proof of Theorem \ref{thm:abi-conv}}

As $(\Omega,\B)$ and $(\X^n,\mathcal{A})$ are standard Borel, the disintegration theorem (Thoerem 3.4, \citealt{kallenberg1997foundations}) yields a measurable kernel
\begin{align*}
\pi_{\Theta\mid X}: \B \times \X^n \to [0,1],\quad
(B,x)\mapsto\pi_{\Theta\mid X}(B,x),
\end{align*}
satisfying
\begin{align*}
    P_{(\Theta, X)}(B \times A) = \int_A \pi_{\Theta \mid X}(B, x) P_X(dx) 
\end{align*}
for all $B \in \B$, $A \in \mathcal{A}$.
The random variable
\begin{align*}
\omega \mapsto
\pi_{\Theta\mid X}\bigl(B, X(\omega)\bigr) = \int_B \pi_{\Theta \mid X}\bigl(d\theta, X(\omega)\bigr)
\end{align*}
is a version of the conditional probability $\p (\Theta\in B \mid \sigma(X))$.

Recall that $f_{X}(\cdot)$ the marginal density of $X$, i.e., for any $x\in\mathcal{X}^n$,
\begin{equation*}
    f_{X}(x)=\int_{\Theta}\pi(\theta)f_{X \mid \Theta}(x \mid \theta).
\end{equation*}
For any $\epsilon>0$, denote $A_{\epsilon}:=\{x\in\mathcal{X}:\msw{p}(\pi_{\Theta \mid x},\pi_{\Theta \mid x^{*}})\leq\epsilon\}$. Note that
\begin{equation}
    \pi^{(\epsilon)}_{\abi}(\theta \mid x^{*})=\pi_{\Theta \mid X}(\theta \mid X\in A_{\epsilon})=\frac{\int_{A_{\epsilon}}\pi_{\Theta \mid X}(\theta \mid x)f_X(x)dx}{\int_{A_{\epsilon}}f_X(x)dx}.
\end{equation}
Hence
\begin{align}
&\msw{p}\left(\pi^{(\epsilon)}_{\abi}(\theta \mid x^{*}),\pi_{\Theta \mid X}(\theta \mid \xobs)\right)=\msw{p}\left(\frac{\int_{A_{\epsilon}}\pi_{\Theta \mid X}(\theta \mid x)f_X(x)dx}{\int_{A_{\epsilon}}f_X(x)dx},\pi_{\Theta \mid X}(\theta \mid \xobs)\right)\nonumber\\
&\hspace{5cm}\leq \frac{\int_{A_{\epsilon}}f_X(x) \msw{p}(\pi_{\Theta \mid X}(\theta \mid x),\pi_{\Theta \mid X}(\theta \mid \xobs))
dx}{\int_{A_{\epsilon}}f_X(x)dx}.
\end{align}
By the definition of $A_{\epsilon}$, for any $x\in A_{\epsilon}$, $\msw{p}(\pi_{\Theta \mid x},\pi_{\Theta \mid x^{*}})\leq\epsilon$. Hence
\begin{equation}
    \msw{p}\left(\pi^{(\epsilon)}_{\abi}(\theta \mid x^{*}),\pi_{\Theta \mid X}(\theta \mid \xobs)\right)\leq \epsilon.
\end{equation}
Therefore, as $\epsilon\rightarrow 0^{+}$, $\pi^{(\epsilon)}_{\abi}(\theta \mid x^{*})$ converges weakly in $\mathcal{P}_p(\Omega)$ to $\pi_{\Theta|X}(\theta|\xobs)$. 

\subsubsection{Proof of Theorem \ref{thm:cont-msw-dist}}

As $f_X(\xobs) > 0$ and $f_X$ is continuous at $\xobs$, the point $\xobs$ lies in the support of $P_X$, so all conditional probabilities below are well-defined.

Consider a deterministic decreasing sequence $\epsilon_t \downarrow 0$ as $t \to \infty$. Let
\begin{align*}
    D_t = \{X \in B_{\epsilon_t}(\xobs)\}.
\end{align*}
Define the random variable 
\begin{align*}
    Z_t := \ind{}_{D_t},
\end{align*}
which records whether $X$ falls in successively smaller neighborhoods of the observed data $\xobs$. 

Define the filtration 
\begin{align*}
    \F_t = \sigma(Z_1, \dots, Z_t) \subseteq \F,
\end{align*}
By convention, we let $\F_0 := \{\emptyset, \Xi\}$. 
By construction, the filtration is increasing, i.e., $\F_{t'} \subseteq \F_t$ for all $t' \leq t$.
Define $\F_\infty$ as the minimal $\sigma$-algebra generated by $(\F_t)_{t\in \mathbb{N}}$, i.e., 
\begin{align*}
    \F_\infty = \sigma(\cup_t \F_t).
\end{align*}
A key observation is that
\begin{align*}
    \bigcap_{t \geq 1} D_t = \{X = \xobs\}.
\end{align*}
Hence, the indicator random variable of the exact match satisfies
\begin{align}
    \ind\{X = \xobs\} = \ind\{X \in \cap_t D_t\} \in \F_\infty.
\end{align}
In other words, $\F_\infty$ reveals whether the random vector $X$ coincides with the realized observation $\xobs$. 

For any set $B \in \B$, denote $U_B(\omega) := \ind\{\Theta(\omega) \in B\}$, $\omega \in \Xi$. Then, by L\'evy's 0-1 law, we have
\begin{align*}
    \E\big[\ind_{D_t}U_B \mid \F_t\big] &\underset{L^1}{\overset{a.s.}{\to}} \E\big[\ind\{X=\xobs\}U_B \mid \F_\infty\big] ~\text{as } t \to \infty\\
    &= \ind\{X = \xobs\}\E[U_B \mid \F_\infty]\\
    &= \ind\{X=\xobs\} \p[U_B \mid X=\xobs].
\end{align*}
This convergence holds for all $B \in \B$, hence
\begin{align*}
    \pi_{\Theta \mid X}(\cdot \mid D_t) \overset{d}{\to} \pi_{\Theta \mid X}(\cdot \mid X=\xobs).
\end{align*}

Let $V(\theta) = \|\theta\|_p^p$. By assumption, we have
    $M = \sup_{x \in \X^n} \int_\Omega V(\theta) f_{\Theta, X}(\theta, x)\, d\theta < \infty$. Note that for any $t\geq 1$, $V\ind_{D_t}\leq V\ind_{D_1}$, and
    \begin{equation*}
        \mathbb{E}[V\ind_{D_1}]=\int_{\Omega}\int_{\mathcal{X}^n\cap B_{\epsilon_1}(\xobs)}\|\theta\|_p^pf_{\Theta,X}(\theta,x)d\theta dx\leq M \cdot \text{Leb}(\mathcal{X}^n\cap B_{\epsilon_1}(\xobs))<\infty,
    \end{equation*}
    where $\text{Leb}$ denotes the Lebesgue measure on $\R^{nd_X}$. Hence as $t\rightarrow\infty$, by L\'evy's 0-1 law, 
    \begin{align*}
        \E[V \ind_{D_t} \mid \F_t] \underset{L^1}{\overset{a.s.}{\to}} \ind_{X=\xobs} \E[V \mid X= \xobs].
    \end{align*}
    Therefore, we have 
    \begin{align*}
        \int_\Omega V(\theta) \pi_{\Theta \mid X}(d\theta \mid D_t) \underset{L^1}{\overset{a.s.}{\to}} \int_\Omega V(\theta) \pi_{\Theta \mid X}(d\theta \mid X=\xobs).
    \end{align*}
    Together with weak convergence, this gives
    \begin{align*}
        \pi_{\Theta \mid X}(\cdot \mid D_t) \Rightarrow \pi_{\Theta \mid X}(\cdot \mid X = \xobs).
    \end{align*}

    Since the $\msw_p$ distance is topologically equivalent to $W_p$ on $\cp_p(\Omega)$, we have
    \begin{align*}
    \lim_{t\to\infty}
    \msw_{p}\!\Bigl(
      \pi_{\Theta\mid X}(d\theta\mid X\in B_{\epsilon_t}(\xobs)),
      \pi_{\Theta\mid X}(d\theta\mid X=\xobs)
    \Bigr)=0,
\end{align*}
as desired.

\newpage
\section{Additional Details on Simulation Settings}
\subsection{Multimodal Gaussian Model}
Below, we present pairwise bivariate density plots of the posterior distribution under the multimodal Gaussian model. The plots demonstrate that \abi accurately recovers the joint posterior structure.
\begin{figure}[H]
    \centering
    \begin{subfigure}[b]{0.48\linewidth}
        \centering
        \includegraphics[width=\linewidth]{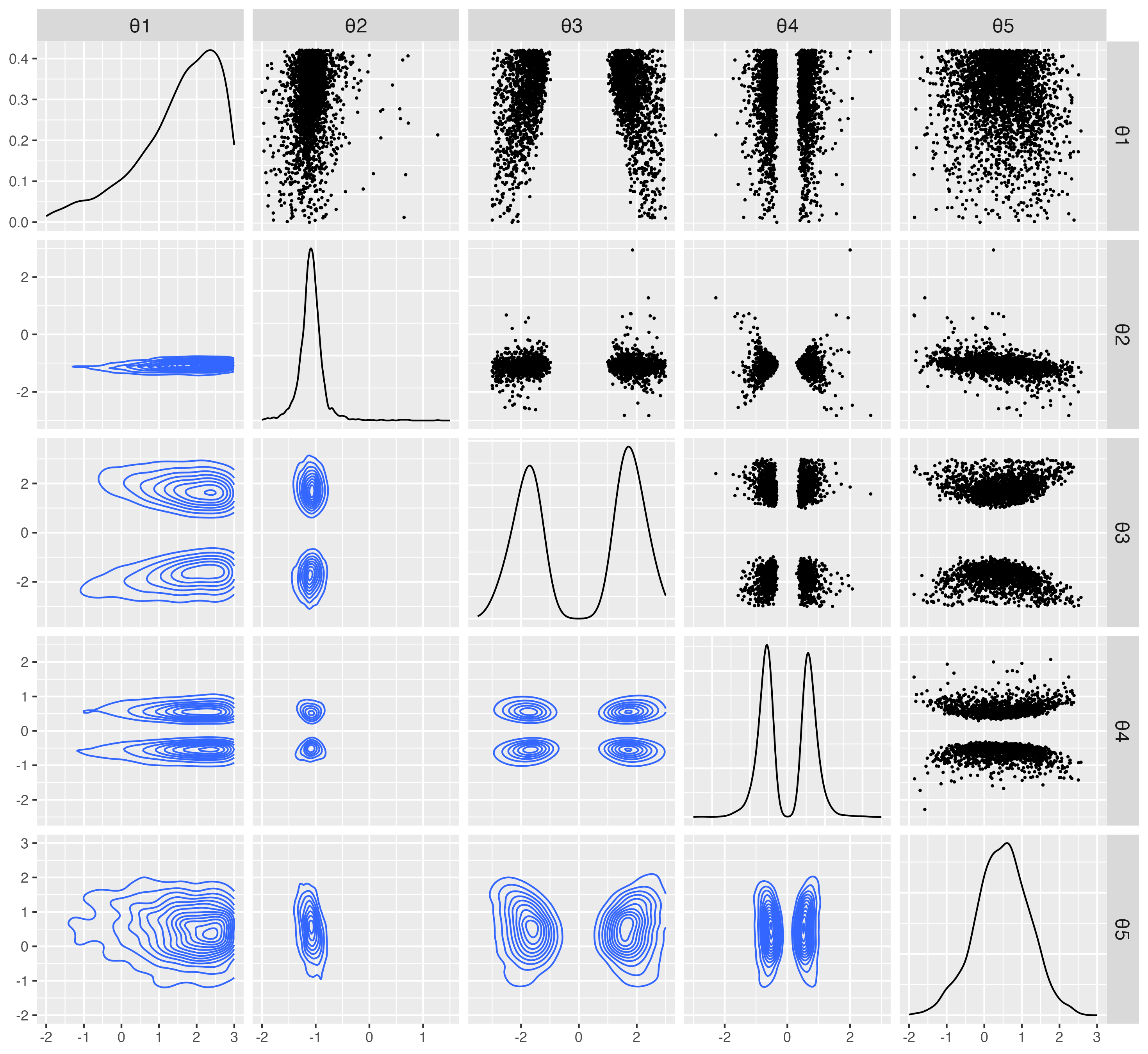}
        \caption{}
        \label{fig:toy-truth}
    \end{subfigure}
    \hfill
    \begin{subfigure}[b]{0.48\linewidth}
        \centering
        \includegraphics[width=\linewidth]{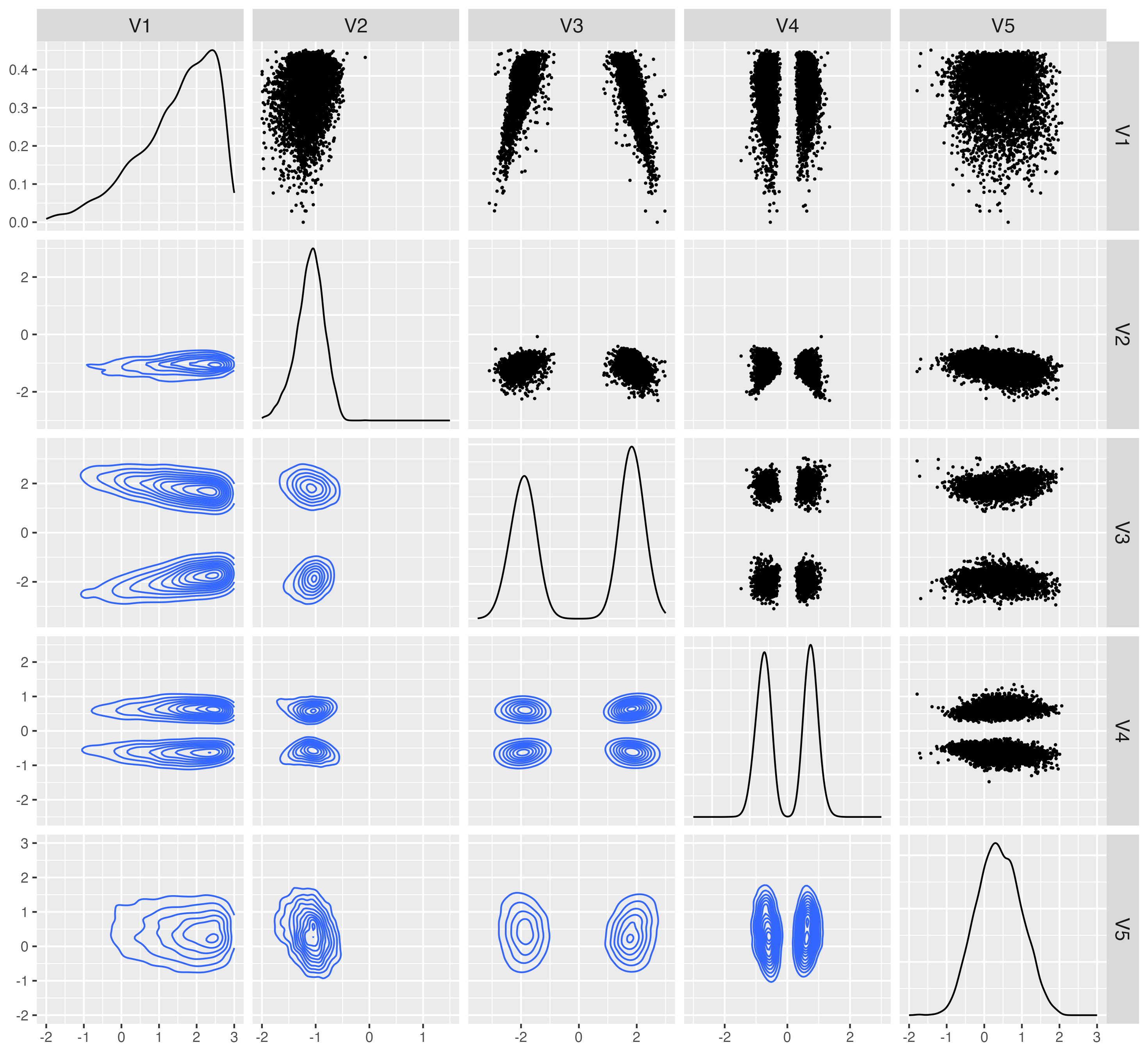}
        \caption{}
        \label{fig:toy-abi}
    \end{subfigure}
    \caption{Bivariate density plot of the posterior distribution.}
    \label{fig:side-by-side-plots}
\end{figure}

\subsection{Lotka-Volterra model}
In the Lotka-Volterra model \citealp{din2013dynamics}, the populations evolve over time based on the following set of equations:
\begin{align*}
    \frac{dx}{dt} &= \alpha x - \beta x y,\\
    \frac{dy}{dt} &= -\gamma y + \delta x y.
\end{align*}

We simulate the Lotka–Volterra process using the Gillespie algorithm \citeapp{gillespie1976general}.  This approach is a stochastic Euler scheme in which time steps are drawn from an exponential distribution.  Algorithm \ref{gillespie-algo} outlines the Gillespie procedure for the Lotka–Volterra model.

\begin{algorithm}[t]
\caption{Gillespie algorithm for Lotka-Volterra model}
\label{gillespie-algo}

\Input{$t = 0, 1, \dots, T$ \newline
$\alpha > 0, \beta > 0, \gamma > 0, \delta > 0$ \newline
$X_0 > 0, Y_0 > 0$}

\While{$(t < T \text{ and } X_t \neq 0 \text{ and } Y_t \neq 0)$}{
    Compute rates: \\
    \Indp
    $r_1 \gets \alpha X_t$\;
    $r_2 \gets \beta X_t Y_t$\;
    $r_3 \gets \gamma Y_t$\;
    $r_4 \gets \delta X_t Y_t$\;
    Draw: $\tau \sim \text{Exp}(R)$, where $R = r_1 + r_2 + r_3 + r_4$\;
    \Indm
    Choose index $i$ from list $\{1, 2, 3, 4\}$\;
    \Indp
    $i \sim \text{Discrete}([r_1/R, ~r_2/R, ~r_3/R, ~r_4/R])$\;
    \Indm
    \If{$i = 1$}{
        $X_{t+1} \gets X_t + 1$\tcp{Prey birth}
        $Y_{t+1} \gets Y_t$\;
    }
    \ElseIf{$i = 2$}{
        $X_{t+1} \gets X_t - 1$\tcp{Prey death}
        $Y_{t+1} \gets Y_t - 1$\;
    }
    \ElseIf{$i = 3$}{
        $X_{t+1} \gets X_t$\;
        $Y_{t+1} \gets Y_t - 1$\tcp{Predator death}
    }
    \ElseIf{$i = 4$}{
        $X_{t+1} \gets X_t$\;
        $Y_{t+1} \gets Y_t + 1$\tcp{Predator birth}
    }
    $t \gets t + \tau$\;
}
\end{algorithm}

\subsection{M/G/1 Queuing Model}

Here we provide additional details on the queuing model setup. Following \citeapp{shestopaloff2014bayesian}, we formulate the model using arrival times $V_i$ as latent variables that evolve as a Markov process:
\begin{align*}
    V_1 &\sim \mathrm{Exp}(\theta_3)\\
    V_i \mid V_{i-1} &\sim V_{i-1} + \mathrm{Exp}(\theta_3), \quad i=2,\dots, n\\
    Y_i \mid X_{i-1}, V_i &\sim \mathrm{Unif}(\theta_1 + \max(0, V_i - X_{i-1}), \theta_2 + \max(0, V_i - X_{i-1})), \quad i = 1, \dots, n.
\end{align*}

Setting $V = (V_1, \dots, V_n)$, the joint density of $V_i$ and $Y_i$ can be factorized as:
\begin{align*}
    \p(V, Y \mid \theta) = \p(V_1 \mid \theta) \prod_{i=2}^n \p(V_i \mid V_{i-1}, \theta) \prod_{i=1}^n \p(Y_i \mid V_i, X_{i-1}, \theta).
\end{align*}

\bibliographystyleapp{plainnat}
\bibliographyapp{refs}

\end{appendices}

\end{document}